\DeclareMathOperator{\supp}{supp}
\newcommand{\mc}[1]{\mathcal{#1}}
\newtheorem{theorem}{Theorem}
\newtheorem{corollary}[theorem]{Corollary}
\newtheorem{lemma}[theorem]{Lemma}
\newtheorem{remark}[theorem]{Remark}
\newtheorem{definition}[theorem]{Definition}
\newcommand\numberthis{\addtocounter{equation}{1}\tag{\theequation}}
\begin{document}
\title{On the Local Equivalence of 2D Color Codes and Surface Codes with Applications}
\author{Arun~B.~Aloshious, 
        Arjun~Nitin~Bhagoji, 
        and~Pradeep~Kiran~Sarvepalli,~\IEEEmembership{Member, IEEE}%
\thanks{Arun~B.~Aloshious and Pradeep~Kiran~Sarvepalli are  with the Department
of Electrical  Engineering,  Indian Institute of Technology, Madras 600 036, India}
\thanks{Arjun~Nitin~Bhagoji is with the Department
of Electrical  Engineering, Princeton University, NJ  USA.}
\thanks{Some early results in the paper were presented at the 2015 International Symposium on Information Theory, Hong Kong.}
\thanks{This research was supported by a grant from the Centre for Industrial Consultancy \& Sponsored Research, IIT Madras. }
}%




\maketitle
\begin{abstract}
In recent years, there have been many studies on local stabilizer codes. Under the assumption of translation and scale invariance Yoshida classified such codes. His result implies that  translation invariant 2D color codes are equivalent to copies of toric codes. Independently, Bombin, Duclos-Cianci, and Poulin showed that a local translation   invariant 2D topological stabilizer code is locally equivalent to a finite number of copies of Kitaev's toric code.  In this paper we focus on 2D topological color codes and relax the assumption of translation invariance. Using a linear algebraic framework we show that any 2D color code (without boundaries) is locally equivalent to two copies of a related surface code. This surface code is induced by color code. We study the application of this equivalence to the decoding of 2D color codes over the bit flip channel as well as the quantum erasure channel. We report the performance of the color code on the square octagonal lattice over the quantum erasure channel. Further, we provide explicit circuits that perform the transformation between 2D color codes and surface codes. Our circuits do not require any additional ancilla qubits. 

\end{abstract}

\begin{IEEEkeywords}
quantum codes, color codes, surface codes, topological codes, decoding
\end{IEEEkeywords}

%
\IEEEpeerreviewmaketitle

\section{Introduction}

\IEEEPARstart{I}{n}  recent years, there have been many studies on local stabilizer codes. Under the assumption of translation and scale invariance Yoshida classified stabilizer codes with local generators \cite{yoshida11}. His result implies that translation invariant 2D color codes are equivalent to copies of toric codes.
Independently, Bombin, Duclos-Cianci, and Poulin   showed that a local translation invariant 2D topological stabilizer code is locally equivalent to a finite number of copies of Kitaev's toric code \cite{bombin12}. At first sight these results are somewhat surprising 
because topological color codes in 2D can implement the entire Clifford group transversally \cite{bombin06} while surface codes cannot implement the Clifford group transversally. 
Nonetheless, it turns out that color codes can be transformed to surface codes locally.

It is natural to ask whether this equivalence holds if the requirement of translational symmetry is relaxed. In this paper we focus on 2D topological color codes (without boundaries) and relax the assumption of translation invariance. We show local equivalence between arbitrary 2D color codes and copies of surface codes and study the applications of this equivalence in detail. Using a linear algebraic framework we show that any 2D color code is locally equivalent to two copies of a related surface code. This surface code is induced by the color code. This paper expands upon our previous work \cite{bhagoji15} and explores the implications of this equivalence in detail.

The equivalence between color codes and copies of surface codes has important and useful consequences. For instance, one could use this local equivalence to come up with decoding algorithms for color codes in terms of the decoding algorithms of the surface codes. Bombin et al. used this idea to decode translation invariant color codes \cite{bombin12}. Our result enables this type of decoding for color codes that are not translation invariant. Delfosse also studied the decoding of color codes via projection onto surface codes \cite{delfosse14}. We pursue this application in great detail by studying the decoding of color codes over both the bit (phase) flip channel and the quantum erasure channel. 

The local equivalence of color codes to copies of surface codes implies that some of the benefits of color codes can be ported to surface codes and may find application in code switching \cite{nautrup17}. However, applications for fault tolerant quantum computation need detailed study. The benefits of code transformation must be weighed against the overheads incurred in the transformation between codes. To understand this trade-off we study the circuits for transforming between the two types of codes. We summarise our main contributions:
\begin{compactenum}[i)]
	\item We show that a 2D color code is locally equivalent to two copies of a surface code. Our approach uses elementary linear algebra and coding theoretic methods. Our map is bijective and does not require any ancilla qubits. These results were presented by some of us previously in \cite{bhagoji15}.
	\item We study an application of this result to the decoding of 2D color codes over the bit (phase) flip channel. 
	\item We study the performance of color codes on the quantum erasure channel. Application of the local equivalence between the color codes and surface codes to erasure decoding has not been explored earlier.
	\item We give explicit circuits for the transformation of color codes into surface codes. 
\end{compactenum}
We also study how  the error model on the color code maps to the surface codes. This could be of interest when designing decoders for the color code.

We briefly review previous work and compare it with our contributions in this paper.  Yoshida studied stabilizer codes with local stabilizers, under the assumption of translational   and scale invariance \cite{yoshida11}. 
The latter symmetry means that the number of logical qubits does not depend on system size. 
His result led to a classification of such codes in terms of the geometry of the logical operators of the code. 
A consequence of this result is that translation invariant color codes are locally equivalent to copies of surface codes because their logical operators have the same geometry. 

Taking a complementary approach and focusing on local properties Bombin et al. showed that 2D translational invariant 
stabilizer codes and some subsystem codes can be mapped to a finite number of copies of the toric code \cite{bombin12}. 
They also studied an application of this result to decoding of color and subsystem codes. 
Rigorous proofs of the results in \cite{bombin12} were supplied in \cite{bombin14}. 

More recently, Haah showed that translation invariant Calderbank-Shor-Steane (CSS) codes can also be mapped to copies of toric code \cite{haah16}. His approach views  translation invariant codes as maps over free modules and makes heavy use of commutative algebra.

The previous three approaches require translation invariance which we do not impose. Further, we also take a simpler approach compared to them. Both Yoshida and Haah do not study decoders for color codes while we investigate the performance of decoders for the color codes. Bombin et al. study a soft decision decoder for color codes, while we study a hard decision decoder \cite{bombin12}. They report a threshold of $9\% $ while we obtain a threshold of $ 5.3\%$ for the color code on the square octagonal lattice. This difference is due to the decoder and not due to the map. 
A further difference is that the map by Bombin et al. can lead to multiple copies of Kitaev's toric code while our map always leads to two copies.

A related result in this context is that of Delfosse who projected 2D color codes onto surface codes \cite{delfosse14}.
His work was motivated by the problem of decoding color codes and does not explicitly discuss the issue of local equivalence. In this approach the bit flip errors are decoded by projecting onto one set of surface codes and the phase flip errors are decoded by projecting onto another set of codes. In contrast, our maps as well as those of Bombin et al.
use only one set of surface codes to decode both types of errors. For the color code on the square octagonal lattice  Delfosse's approach gives a threshold of about~7.6\% \cite{stephens14} while for the hexagonal lattice it gives a threshold of about 8.7\% \cite{delfosse14}. 
Decoding of color codes with boundaries using this approach was studied in \cite{stephens14}.
Other approaches are known for decoding color codes \cite{ps2012,wang09,landahl11} which decode directly on the color codes. 

Following a preliminary version of our paper, we became aware of the work by Kubica, Yoshida, and Pastawski \cite{kubica15}. They mapped color codes onto toric codes for all dimensions $D\geq 2$ in contrast to the present work which is restricted to 2D. 
Although in 2D, for color codes without boundaries, their result is similar to ours, there are substantial differences in the approach. First, they map the color code onto two different surface codes, while we map onto two copies of the same surface code. Second, we use linear algebra to study these equivalences, whereas their approach is based on algebraic topology. Third, we study the application of these maps for decoding. We also provide explicit circuits for transformation. 

Our paper is structured as follows. First we review the necessary background in Section~\ref{sec:bg}. Then, in Section~\ref{sec:map}, we derive the map between color codes and surface codes. In Section~\ref{sec:decoding}, we study an application of this map to decoding over the bit (phase) flip channel and the quantum erasure channel. In Section~\ref{sec:ckts} we provide efficient circuits for transforming an arbitrary 2D color code into surface codes. We then conclude with a brief discussion on the scope for future work.
\section{Preliminaries} \label{sec:bg}
We assume that the reader is familiar with stabilizer codes \cite{calderbank97,gottesman97} and topological codes \cite{kitaev03}, see also \cite{lidar13}. 
The Pauli group on $n$ qubits is denoted $\mc{P}_n$. We denote the vertices of a graph $\Gamma$ by $\mathsf{V}(\Gamma)$, and the edges by $\mathsf{E}(\Gamma)$. The set of edges incident on a vertex $v$ is denoted as $\delta(v)$ and the edges in the boundary of a face by $\partial(f)$. Assuming that $\Gamma$ is embedded on a suitable surface we use $\mathsf{F}(\Gamma)$ to denote the faces of the embedding and do not always make an explicit reference to the surface. Consider a graph $\Gamma$ with qubits  placed on the edges. A surface code on  $\Gamma$ is a stabilizer code  whose stabilizer $S$ is given by 
\begin{align}
	S &= \langle A_v , B_f \mid v\in \mathsf{V}(\Gamma), f\in \mathsf{F}(\Gamma) \rangle, \label{eq:stab-sc}
\end{align}
where $A_v = \prod_{e\in \delta(v) }X_e \mbox{ and } B_f =\prod_{e\in \partial(f)}Z_e$. The Pauli group on the qubits of a surface code is denoted as $\mc{P}_{\mathsf{E}(\Gamma)}$. A 2-colex is a trivalent, 3-face-colorable complex. 
A color code is defined on a 2-colex by attaching qubits to every vertex 
and defining the stabilizer $S$ as 
\begin{eqnarray}
	S = \langle B_f^X, B_f^Z \mid v\in \mathsf{F}(\Gamma) \rangle \mbox{ where } B_f^\sigma = \prod_{v\in f }\sigma_v. \label{eq:stab-tcc}
\end{eqnarray}
We denote the Pauli group on these qubits as $\mc{P}_{\mathsf{V}(\Gamma)}$; the $c$-colored faces of $\Gamma$ by $\mathsf{F}_c(\Gamma)$ and the $c$-colored edges of $\Gamma$ by $\mathsf{E}_c(\Gamma)$. We restrict our attention to 2-colexes which do not have boundaries or multiple edges (the surface codes could contain multiple edges though). This is not a severe restriction because a 2-colex with multiple edges can be modified to another 2-colex without such edges but encoding the same number of qubits and possessing the same error correcting capabilities (in terms of distance). All embeddings are assumed to be 2-cell embeddings i.e. faces are homeomorphic to unit discs.  

There are four types of topological charges on a surface code:
\begin{compactenum}[i)]
	\item electric charge (denoted $\epsilon$) localized on the vertices
	\item magnetic charge (denoted $\mu$) living on the plaquettes
	\item the composite electric and magnetic charge denoted $\epsilon \mu$ which resides on both the plaquettes and vertices \item the vacuum (trivial charge) denoted $\iota$
\end{compactenum}

Of these, only two charges are independent. We shall take this pair to be the electric and magnetic charges. A charge composed with another charge of the same type gives the vacuum i.e. $c\times c =\iota$. The electric charges are created by $Z$-type errors and magnetic charges by $X$-type errors on the surface code. 

On a color code, the topological charges live on the faces. In addition to being electric and/or magnetic, they also carry a color depending on which face they are present. Let us denote the electric charge on a $c$-colored face as $\epsilon_c$, the magnetic charge as $\mu_c$ and the composite charge as $\epsilon_c\mu_{c}$. The electric charges are not all independent \cite{bombin06}. Any pair (two out of three colors) of them can be taken as the independent set of electric charges. Similarly, only two magnetic charges are independent. As in surface codes, electric (magnetic) charges are created by $Z$ ($X$) errors on the color code.

A hopping operator is any element of the Pauli group that moves the charges. On a surface code, we can move the electric charges from one vertex to another by means of a $Z$-type Pauli operator. We denote by $H_{u\leftrightarrow v}^{\epsilon}$ the operator that moves $\epsilon$ from vertex $u$ to $v$ and vice versa. If we consider the magnetic charges, then the movement can be accomplished by means of an $X$-type Pauli operator. The operator that moves a magnetic charge from face $f$ to $f'$ (or vice versa) is denoted by $H_{f\leftrightarrow f'}^{\mu}$. Elementary hopping operators are those which move charges from one vertex to an adjacent vertex or from one plaquette to an adjacent plaquette. Let $e=(u,v)$ be the edge incident on the vertices $u$, $v$. We denote the elementary hopping operator along $e$ as $H_e^{\epsilon}$, where $ H_e^{\epsilon} = Z_{e}$. It is a specific realization of $H_{u\leftrightarrow v}^{\epsilon}$. Similarly, the elementary operator that moves $\mu $ across $e$ is denoted as $H_e^\mu$. Let $e$ be the edge shared by the faces $f$ and $f'$, then $H_{f\leftrightarrow f'}^{\mu}$ can be realized by $H_e^{\mu }$ where $H_e^{\mu }=X_{e}$. Observe that $H_{u\leftrightarrow v}^{\epsilon}$ and $H_{f\leftrightarrow f'}^{\mu}$ anti-commute when they act along the same edge, while operators for the same type of charges commute. In general, $H_{u\leftrightarrow v}^{\epsilon}$ and $H_{f\leftrightarrow f'}^{\mu}$ commute if and only if they cross an even number of times. 

Similarly, we can define hopping operators for color codes. Let $f, f' \in \mathsf{F}_c(\Gamma)$ be two plaquettes connected by an edge $(u,v)$ where $u\in f$
and $v\in f'$. Then $H_{f\leftrightarrow f'}^{\epsilon_c}$ and $H_{f\leftrightarrow f'}^{\mu_c}$ are the operators that move $\epsilon_c$ and $\mu_c$ from $f$ to $f'$. A realization of these operators along $(u,v)$ is $H_{u,v}^{\epsilon_c} = Z_{u}Z_{v} $ and $H_{u,v}^{\mu_c}=X_{u}X_{v}$. An element of the stabilizer can be viewed as a combination of hopping operators which move a charge around and bring it back to the original location. Since this movement cannot be detected, we can always adjoin an element of the stabilizer to the hopping operators. 
\section{Mapping a color code to two copies of a surface code}\label{sec:map}
\subsection{Color codes to surface codes---Constraints}\label{ssec:constraints}
Our goal is to find a map between a color code and some related surface codes. We shall denote this map by $\pi$ for the rest of the paper. 
We shall first describe the construction of $\pi$ in an informal fashion, emphasizing the principles underlying the map, and then rigorously justify all the steps. The key observation, due to \cite{bombin12}, is that there are four types of charges on a surface code and sixteen types of charges on a color code. This is the starting point for relating the color code to surface codes. The two pairs of independent charges on the color code, i.e. $\{\epsilon_c, \mu_{c'} \}$ and $\{\epsilon_{c'}, \mu_{c} \}$, suggest that we can decompose the color code into a pair of toric codes by mapping $\{\epsilon_c, \mu_{c'} \}$ charges onto one toric code and $\{\epsilon_{c'}, \mu_{c} \}$ onto another. However, charge ``conservation'' is not the only constraint. We would like a map that preserves in some sense the structure of the color code and allows us to go back and forth between the color code and the surface codes. We shall impose some conditions on this map keeping in mind that we would like to use it in the context of decoding color codes. 

First, observe that the electric charges on the surface codes live on the vertices while the magnetic charges live on the plaquettes. But, if we consider the pair of charges $\{\epsilon_c, \mu_{c'} \}$, they both live on plaquettes---one on the $c$-colored plaquettes and another on $c'$-colored plaquettes. A natural way to make the association to a surface code is to contract all the $c$-colored plaquettes in the embedding of $\Gamma$. This will give rise to a new graph $\tau_c(\Gamma)$. We can now place the charges $\epsilon_c$ and $\mu_{c'}$ on the vertices and plaquettes of $\tau_c(\Gamma)$ respectively. Similarly, the pair of charges $\{\mu_{c}, \epsilon_{c'} \}$ can live on the vertices and plaquettes of {\em another} instance of $\tau_c(\Gamma)$. We impose the following (desirable) constraints on the map $\pi$. It must be 
{(i) linear,
	(ii) invertible,
	(iii) local,
	(iv) efficiently computable,
	(v) preserve the commutation relations between the (Pauli) error operators on $\mathsf{V}(\Gamma)$ i.e. $\mc{P}_{\mathsf{V}(\Gamma)}$,
	and (vi) consistent in the description of the movement of charges on the color code and surface codes.
}These constraints are not necessarily independent and in no particular order. It is possible to relax some of the constraints above. 
\subsection{Deducing the map---A linear algebraic approach }
The maps proposed in \cite{bombin12} are based on the following ideas: i) conservation of topological charges ii) identification of the hopping operators and iii) preserving the commutation relations between the hopping operators.
These ideas are central to our work as well. However, we take a simpler linear algebraic approach to find the map.

Suppose we have a 2-colex $\Gamma$. Then, upon contracting all the $c$-colored faces including their boundary edges, we obtain another complex. We denote this operation as $\tau_c$ and the resulting complex as $\tau_c (\Gamma)$ (see Fig.~\ref{fig:tau}). We suppress the subscript if the context makes it clear and just write $\tau$. There is a one-to-one correspondence between the $c$-colored faces of $\Gamma$ and the vertices of $\tau(\Gamma)$, so we can label the vertices of $\tau(\Gamma)$ by $f\in \mathsf{F}_c(\Gamma)$. We also label them by $\tau(f)$ to indicate that the vertex was obtained by contracting $f$. Similarly, the edges of $\tau(\Gamma)$ are in one-to-one correspondence with the $c$-colored edges of $\Gamma$, so an edge $\tau(\Gamma)$ is labeled the same as the parent edge $e=(u,v)$ in $\Gamma$. The faces which are not in $\mathsf{F}_c(\Gamma)$ are mapped to faces of $\tau(\Gamma)$. Therefore, we label the faces as 
$f$ or more explicitly as $\tau(f)$, where $f\not\in \mathsf{F}_c(\Gamma) $. Thus, the complex $\tau(\Gamma)$ has the vertex set $\mathsf{F}_c(\Gamma)$, edge set $\mathsf{E}_c(\Gamma)$ and faces $\mathsf{F}_{c'}(\Gamma)\cup \mathsf{F}_{c''}(\Gamma)$. Since every vertex $v$ in $\Gamma$ has a unique $c$-colored edge incident on it, we can associate to it an edge in $\tau(\Gamma)$ as $\tau(v)$. 

\begin{figure}
	\centering
	\includegraphics[scale=0.7]{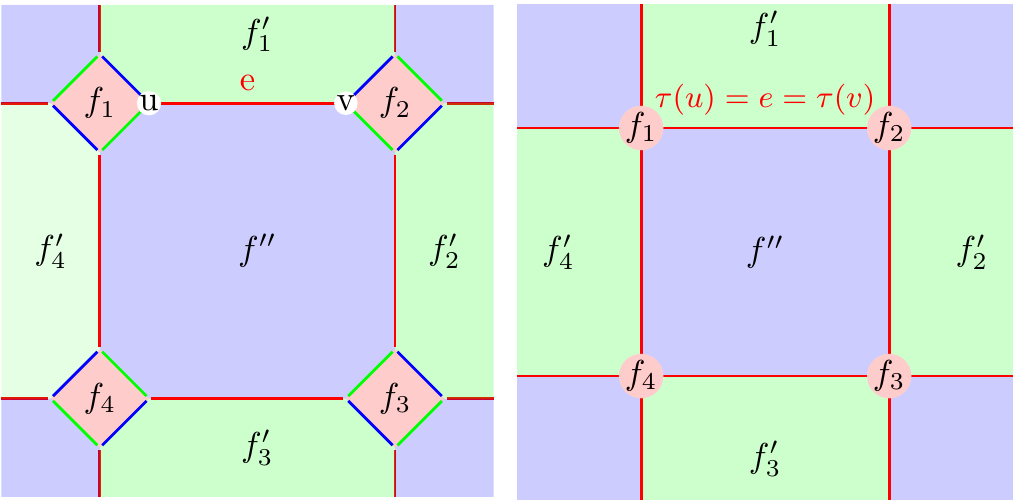}\label{fig:sc}
	\caption{Illustrating the contraction of a color code via $\tau_c$ and the resultant surface code. Only portions of the codes are shown. The $c$-colored faces are vertices in $\tau_c(\Gamma)$. The faces $f \not\in \mathsf{F}_c(\Gamma)$ remain faces in $\tau_c(\Gamma)$ and are also labeled $f$ in $\tau_c(\Gamma)$, while the $c$-colored edge $e=(u,v)$ in $\Gamma$ is mapped to an edge in $\tau_c(\Gamma)$, so we retain the label $e$. Every vertex in $\Gamma$ is incident on a unique $c$-colored edge, so we can also extend $\tau_c$ to vertices $u$, $v$ and edges unambiguously by defining $\tau_c(u)=\tau_c(v)=\tau_c(u,v)=e$.}\label{fig:tau}
\end{figure}

Now, each $c$-colored face in $\Gamma$ can host $\epsilon_c$ and $\mu_c$. With respect to $\tau_c(\Gamma)$, they both reside on the vertices of $\Gamma_c$. So we shall place them on two different copies of $\tau_c(\Gamma$) denoted $\Gamma_1$ and $\Gamma_2$. Then, the charges $\epsilon_c$ and $\mu_c$ will play the role of an electric charge on $\Gamma_1$ and $\Gamma_2$, respectively. So, we shall make the identification $\epsilon_c \equiv \epsilon_1$ and $\mu_c\equiv \epsilon_2$. The associated magnetic charges on $\Gamma_i$ will have to reside on $\mathsf{F}(\Gamma_i)$. Possible candidates for these charges must come from $\epsilon_{c'}$, $\epsilon_{c''}$ and $\mu_{c'}$, $\mu_{c''}$. The following lemma addresses these choices. 

\begin{lemma}[Charge mapping] \label{lm:charge-pairing}
	Let $c,c',c''$ be three distinct colors. Then, $\{\epsilon_c,\mu_{c'}\}$ and $\{ \epsilon_{c'},\mu_c \}$ are permissible pairings of the charges so that the color code on $\Gamma$ can be mapped to a pair of surface codes on $\Gamma_i=\tau_c(\Gamma)$. In other words, $\epsilon_1\equiv \epsilon_c$, $\mu_1\equiv \mu_{c'}$, $\epsilon_2\equiv \mu_c$ and $\mu_2\equiv \epsilon_{c'}$, where $\epsilon_i$ and $\mu_i$ are the electric and magnetic charges of the surface code on $\Gamma_i$.
\end{lemma}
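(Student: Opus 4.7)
The plan is to verify that each of the two proposed pairings satisfies the three defining properties of an electric--magnetic charge pair on a surface code on $\tau_c(\Gamma)$: (a) geographic placement, meaning the electric charge lives on vertices and the magnetic charge on faces; (b) fusion to a nontrivial dyonic charge; (c) the correct commutation structure of hopping operators, i.e., same-type hoppings commute while an electric and a magnetic hopping anticommute precisely when their supports cross an odd number of times.

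First, one records the geography induced by $\tau_c$. Since $\tau_c$ contracts the $c$-colored faces of $\Gamma$ to the vertices of $\tau_c(\Gamma)$ while leaving the $c'$- and $c''$-colored faces as faces, both $\epsilon_c$ and $\mu_c$ automatically become vertex charges while $\epsilon_{c'}, \mu_{c'}$ remain face charges. Thus in each proposed pair there is a unique candidate for the electric (vertex) role and a unique candidate for the magnetic (face) role. The fusion check is immediate: $\epsilon_c \cdot \mu_{c'}$ and $\mu_c \cdot \epsilon_{c'}$ are composite dyonic charges on the color code, distinct from $\iota$ and from their individual factors.

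Next, one verifies the commutation condition on elementary hopping operators. For $\{\epsilon_c, \mu_{c'}\}$, the elementary realizations are $H^{\epsilon_c}_{u,v} = Z_u Z_v$ supported on a $c$-edge of $\Gamma$ (which becomes an edge of $\tau_c(\Gamma)$) and $H^{\mu_{c'}}_{u',v'} = X_{u'} X_{v'}$ supported on a $c'$-edge of $\Gamma$ (which is contracted into a vertex of $\tau_c(\Gamma)$). As a $Z$-string and an $X$-string, they anticommute iff they share an odd number of qubits; by trivalence and 3-edge-colorability, two distinctly colored edges share at most one vertex, so the operators anticommute iff the $c$-edge and the $c'$-edge meet. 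In $\tau_c(\Gamma)$, this meeting is exactly the incidence of the $\epsilon_c$-hopping edge on the contracted vertex that absorbed the $c'$-edge, which is the standard surface-code condition for an electric string to cross a magnetic hopping. Same-type hoppings, being both $Z$- or both $X$-strings, commute automatically. A symmetric argument with $Z \leftrightarrow X$ handles the pair $\{\mu_c, \epsilon_{c'}\}$ on the second copy $\Gamma_2$.

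Finally, one observes that the four chosen charges $\{\epsilon_c, \mu_{c'}, \mu_c, \epsilon_{c'}\}$ generate an independent set of charges on the color code (since for three colors only two electric and two magnetic charges are independent), so the two pairs together account for the full charge content of $\Gamma$. The main obstacle is step (c): faithfully translating ``share a vertex in $\Gamma$'' into the topological crossing criterion on $\tau_c(\Gamma)$, and extending the local analysis from elementary hoppings to arbitrary string operators up to equivalence modulo stabilizers. This requires careful bookkeeping of how edges and vertices of $\Gamma$ embed into $\tau_c(\Gamma)$ after contraction, but the 2-colex structure reduces each incidence check to a finite local computation.
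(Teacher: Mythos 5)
Your proof takes a genuinely different route from the paper's. The paper runs a process of elimination: once $\epsilon_c$ is fixed as the electric charge on the first copy, $\epsilon_{c'}$ is ruled out as its magnetic partner because both move via $Z$-type operators (which always commute), and $\mu_c$ is ruled out because any operator moving $\mu_c$ overlaps any operator moving $\epsilon_c$ an even number of times; this leaves $\mu_{c'}$, whose hoppings have exactly the right even/odd commutation with those of $\epsilon_c$. Your argument is constructive verification instead: you state three criteria (geography after contraction, fusion, and crossing-parity of hopping operators) and check that the two proposed pairings satisfy them. Both are logically adequate for the lemma as stated, which only asserts that the two pairings are permissible; the paper's elimination additionally gives uniqueness, which is why its brief verification at the end suffices.

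Two caveats on the details of your writeup. First, the fusion check ($\epsilon_c\cdot\mu_{c'} \neq \iota$) plays no role in the paper's notion of a permissible pairing and is not needed; it is true but does not constrain anything, since any distinct-type pair already fuses to a nontrivial dyon. Second, your step (c) is correct on $\Gamma$ -- the $Z$-string on a $c$-edge and the $X$-string on a $c'$-edge anticommute iff the two edges share a vertex -- but the translation to $\tau_c(\Gamma)$ is stated too loosely. ``Incidence of the $\epsilon_c$-hopping edge on the contracted vertex'' is strictly weaker than the sharing condition: many $c$-edges are incident on the vertex $\tau(g)$ obtained by collapsing the $c$-face $g$, but only $\tau(u')$ and $\tau(v')$ (the $c$-edges through the endpoints $u',v'$ of the contracted $c'$-edge) actually appear in the image $X_{\tau(u')}X_{\tau(v')}$ of the magnetic hopping, and only those cause anticommutation with $Z_{\tau(u)}$. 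You flag this as bookkeeping to be done; it is, but as written the sentence equating ``meeting'' with ``incidence on the contracted vertex'' is not correct. The paper sidesteps this entirely by checking commutation only on $\Gamma$ and never formalizing the crossing criterion on $\tau_c(\Gamma)$, which is the cleaner move here.
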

\begin{proof} First, observe that operators that move the electric charges $\epsilon_c$ and $\epsilon_{c'}$ are both $Z$-type, therefore they will always commute. This means that if $\epsilon_c$ is identified with the electric charge on a surface code, $\epsilon_{c'}$ cannot be the associated magnetic charge. That leaves either $\mu_{c}$ and $\mu_{c'}$. Of these, observe  that any operator that moves $\mu_{c}$ will always overlap with any operator that moves $\epsilon_c$ an even number of times. Therefore, this leaves only $\mu_{c'}$. The operators that move $\epsilon_c$ and $\mu_{c'}$ commute/anti-commute when they overlap an even/odd number of times just as the electric and magnetic charges of a surface code justifying the association $\epsilon_1\equiv \epsilon_c$ and $\mu_1\equiv \mu_{c'}$. A similar argument shows the validity of the equivalence $\epsilon_2\equiv \mu_c$ and $\mu_2\equiv \epsilon_{c'}$. 
\end{proof}
Let $\Gamma$ have $n$ vertices and $F_c$ faces of color $c$. Then, $\Gamma_i$ has $F_c$ vertices, $n/2$ edges and $F_{c'}+F_{c''}$ faces. Together $\Gamma_1$ and $\Gamma_2$ have $n$ qubits. We desire that $\pi$ accurately reflect the movement of the independent charges on the color code and the surface codes. So, $\pi$ must map the hopping operators of the charges of the color code on $\Gamma$ to the hopping operators of the surface code on $\Gamma_i$. As mentioned earlier,
$H_{u,v}^{\epsilon_c}$ moves electric charges on $c$-colored plaquettes and 
$H_{u,v}^{\epsilon_{c'}}$ electric charges on $c'$-colored plaquettes. But, although these operators may appear to be independent, due to the structure of the color code they are not. A $c''$-colored plaquette on the color code is bounded by edges whose color alternates between $c$ and $c'$. The $Z$-type stabilizer associated to this plaquette, i.e. $B_f^Z$, can be viewed as being composed of $H_{u,v}^{\epsilon_c}$ hopping operators that move $\epsilon_c$, in which case we would expect to map $B_f^Z$ onto $\Gamma_1$. But, $B_f^Z$ can also be viewed as 
being composed of $H_{u,v}^{\epsilon_{c'}}$. Thus, we see that there are two possible combinations of hopping operators that give the same plaquette stabilizer; one composed entirely of hopping operators of $c$-colored charges and the other of hopping operators of $c'$-colored charges. This suggests that there are dependencies among the hopping operators and some of them, while ostensibly acting on only one kind of charge, could still be moving the other type of charges. However, the overall effect on the other charge must be trivial, i.e. it must move the charge back to where it started. A similar argument can be made for $B_f^X$ which moves the magnetic charges. The next lemma makes precise these dependencies. 

\begin{lemma}[Dependent hopping operators]\label{lm:dep-ops}
	Let $f \in \mathsf{F}_{c''}(\Gamma)$ and ${1},\ldots, {2{\ell_f}}$ be the vertices in its boundary so that $(v_{2i-1},v_{2i}) \in \mathsf{E}_c(\Gamma)$, $(v_{2i},v_{2i+1}) \in \mathsf{E}_{c'}(\Gamma)$ for $1\leq i\leq {\ell_f}$ and $2{\ell_f}+1 \equiv 1$. If $\pi$ is invertible, then $\pi(B_f^\sigma)\neq I$ and there are $4{\ell_f}-2$ independent elementary hopping operators along the edges of $f$.
\end{lemma}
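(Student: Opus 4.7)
The plan is to handle the two assertions of the lemma separately.

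For the assertion $\pi(B_f^\sigma)\neq I$, I invoke invertibility of $\pi$: as a linear bijection on Pauli groups modulo phase, $\pi$ must send non-identity elements to non-identity elements. Since $B_f^\sigma$ is supported on all $2\ell_f\geq 1$ boundary vertices of $f$, it is itself a non-identity Pauli operator, so $\pi(B_f^\sigma)\neq I$.

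For the hop count, I first enumerate the elementary hops on $\partial(f)$: on each of the $2\ell_f$ boundary edges there is exactly one elementary $\epsilon$-hop ($ZZ$) and one elementary $\mu$-hop ($XX$), giving $4\ell_f$ operators in total. I then exhibit two natural $\mathbb{F}_2$-relations among them. Since every vertex $v\in f$ is incident on exactly one $c$-colored and one $c'$-colored boundary edge,
\begin{align*}
\prod_{i=1}^{\ell_f} H^{\epsilon_c}_{v_{2i-1},v_{2i}} \;=\; \prod_{v\in f}Z_v \;=\; B_f^Z \;=\; \prod_{i=1}^{\ell_f} H^{\epsilon_{c'}}_{v_{2i},v_{2i+1}},
\end{align*}
which yields one relation among the $2\ell_f$ elementary $\epsilon$-hops. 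The analogous $X$-computation gives a single relation among the $\mu$-hops.

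The key step is showing that no further dependencies exist. I would encode each $\epsilon$-hop as its $0/1$ support vector in $\mathbb{F}_2^{2\ell_f}$ (indexed by the boundary vertices of $f$); the resulting $2\ell_f$ vectors are precisely the edge-characteristic vectors of the cycle $C_{2\ell_f}$ traced by $\partial(f)$. A standard linear-algebra fact is that the edge-vertex incidence matrix of $C_n$ has $\mathbb{F}_2$-rank $n-1$, with kernel spanned by the all-ones vector on edges, i.e., the cycle-sum relation identified above. Hence the $\epsilon$-hops span a subspace of dimension $2\ell_f - 1$. The same argument gives $2\ell_f-1$ independent $\mu$-hops, and since the $X$- and $Z$-parts of a Pauli operator lie in independent factors of the symplectic representation, the total rank is $2(2\ell_f-1)=4\ell_f-2$.

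The main obstacle is verifying the cycle-incidence rank argument carefully enough to rule out any hidden $\mathbb{F}_2$-dependence. The invertibility hypothesis plays no role in the hop count itself; it enters only to deduce $\pi(B_f^\sigma)\neq I$.
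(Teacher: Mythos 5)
Your proposal is correct and follows essentially the same path as the paper: express $B_f^\sigma$ as a product of elementary hops in two ways to exhibit the one $\mathbb{F}_2$-relation in each sector, then check that the remaining $2\ell_f-1$ hops per sector are independent. Where the paper disposes of the independence step with a single sentence (``can be easily verified by considering their support''), you make it rigorous by identifying the $Z$-hop supports with the edge vectors of the cycle $C_{2\ell_f}$ and invoking the standard fact that the edge–vertex incidence matrix of a cycle has $\mathbb{F}_2$-rank $n-1$, which is a welcome improvement in precision. You also correctly observe that the invertibility hypothesis is only needed for the claim $\pi(B_f^\sigma)\neq I$ and does no work in establishing the count $4\ell_f-2$; the paper's phrasing (``so we require $\pi(B_f^Z)\neq I$. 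This means that only one of these hopping operators is dependent'') obscures this, since the dependency is a property of the Pauli operators themselves and holds whether or not $\pi$ is invertible. Your cleaner separation of the two assertions is a small but genuine tightening of the argument.
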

\begin{proof}
	The stabilizer generator $B_f^Z$ is given as 
	\begin{align}
		B_f^Z& = \prod_{i=1}^{2{\ell_f}}Z_{v_i} = \prod_{i=1}^{\ell_f} Z_{v_{2i-1}} Z_{v_{2i}}=Z_{v_1}Z_{v_{2{\ell_f}}}\prod_{i=1}^{{\ell_f}-1} Z_{v_{2i}} Z_{v_{2i+1}}\\
		&=\prod_{i=1}^{\ell_f} H_{v_{2i-1}, v_{2i}}^{\epsilon_c} =H_{v_1, v_{2{\ell_f}}}^{\epsilon_{c'}}\prod_{i=1}^{{\ell_f}-1} H_{v_{2i}, v_{2i+1}}^{\epsilon_{c'}}.
	\end{align}
	We see that $B_f^Z$ can be expressed as the product of ${\ell_f}$ hopping operators of type $H_{u,v}^{\epsilon_c}$ or type $H_{u,v}^{\epsilon_{c'}}$. Further, we have
	\begin{eqnarray}
		\pi(B_f^Z)& =&\prod_{i=1}^{\ell_f} \pi(H_{v_{2i-1}, v_{2i}}^{\epsilon_c}) =\pi( H_{v_1, v_{2{\ell_f}}}^{\epsilon_{c'}})\prod_{i=1}^{{\ell_f}-1} \pi(H_{v_{2i}, v_{2i+1}}^{\epsilon_{c'}})\nonumber
	\end{eqnarray}
	
	If $\pi(B_f^Z)=I$, then $\ker(\pi)\neq I$ which means that $\pi $ is not invertible and it would not be possible to preserve the information about the syndromes, as $\pi(B^Z_f)$ would commute with all the error operators. So, we require that $\pi(B_f^Z)\neq I$. This means that only one of these hopping operators is dependent and there are at most $2{\ell_f}-1$ independent hopping operators. The linear independence 	of the remaining $2{\ell_f}-1$ operators can be easily verified by considering their support. Similarly, $B_f^X$ also implies that there are another $2{\ell_f}-1$ independent hopping operators, giving us $4{\ell_f}-2$ in total. 
\end{proof}

We are now ready to define the action of $\pi$ on elementary hopping operators. Without loss of generality we can assume if $f\in \mathsf{F}_{c''}(\Gamma)$ has $2{\ell_f}$ vertices, then the dependent hopping operators of $f$ are $H_{v_1,v_{2{\ell_f}} }^{\epsilon_{c'}}$ and $H_{v_{2m_f},v_{2m_f+1} }^{\mu_{c'}}$ i.e. $Z_{v_1}Z_{v_{2{\ell_f}}}$ and $X_{v_{2m_f}}X_{v_{2m_f+1}}$, where $1\leq m_f\leq {\ell_f}$ and $2{\ell_f}+1\equiv 1$. 

\begin{lemma}[Elementary hopping operators]\label{lm:hopping}
	Let $f, f' \in \mathsf{F}_c(\Gamma)$ where the edge $(u,v)$ is incident on $f$ and $f'$. Then, the following choices reflect the charge movement on $\Gamma$ onto the surface codes on $\Gamma_i$.
	\begin{eqnarray}
		\pi(H_{u, v}^{\epsilon_c}) & = & \left[Z_{\tau(u)}\right]_1 = \left[Z_{\tau(v)}\right]_1 \label{eq:elec-hopper1}\\
		\pi(H_{u , v}^{\mu_c}) & = & \left[Z_{\tau(u)} \right]_2= \left[Z_{\tau(v)}\right]_2,\label{eq:mag-hopper1}
	\end{eqnarray}
	where $[ T ]_i$ indicates the instance of the surface code on which $T$ acts.
	Now if $f, f' \in \mathsf{F}_{c'}(\Gamma)$ and $(u,v)\in \mathsf{E}_{c'}(\Gamma)$ such that $u\in f$ and $v\in f'$ and
	$H_{u,v}^{\epsilon_{c'}}$ and $H_{u,v}^{\mu_{c'}}$ are chosen to be independent hopping operators of $f$, then
	\begin{eqnarray}
		\pi(H_{u, v}^{\epsilon_{c'}}) = \left[X_{\tau(u)} X_{\tau(v)}\right]_2 
		\mbox{; } \pi(H_{u , v}^{\mu_{c'}}) = \left[X_{\tau(u)} X_{\tau(v)} \right]_1. \label{eq:mag-hopper2}
	\end{eqnarray}
	
\end{lemma}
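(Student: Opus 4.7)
The plan is to verify three things for each of the four assignments in \eqref{eq:elec-hopper1}--\eqref{eq:mag-hopper2}: (a) the image acts on the correct copy $\Gamma_i$ with the Pauli type demanded by the identifications $\epsilon_c\equiv\epsilon_1$, $\mu_c\equiv\epsilon_2$, $\mu_{c'}\equiv\mu_1$, $\epsilon_{c'}\equiv\mu_2$ of Lemma~\ref{lm:charge-pairing}; (b) it is a legitimate hopping operator on $\Gamma_i$ that relocates the claimed charge from $\tau(f)$ to $\tau(f')$; and (c) the induced commutation pattern matches that of the original hops on $\mc{P}_{\mathsf{V}(\Gamma)}$, which is constraint~(v) of Section~\ref{ssec:constraints}.

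The $c$-colored case is essentially bookkeeping. If $(u,v)\in\mathsf{E}_c(\Gamma)$ lies between the $c$-faces $f,f'$, then $\tau_c$ sends $f,f'$ to two distinct vertices of $\Gamma_i$ joined by the unique edge $\tau(u,v)=\tau(u)=\tau(v)$; hence $[Z_{\tau(u)}]_1$ is the elementary electric hop between those vertices on $\Gamma_1$, matching $\epsilon_c\equiv\epsilon_1$, and the same analysis on $\Gamma_2$ covers $\mu_c\equiv\epsilon_2$.

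The $c'$-colored case requires one topological observation. For $(u,v)\in\mathsf{E}_{c'}(\Gamma)$ with $u\in f\in\mathsf{F}_{c'}(\Gamma)$ and $v\in f'\in\mathsf{F}_{c'}(\Gamma)$, the edge $(u,v)$ lies between a common $c$-face $g$ and a common $c''$-face $h$, and both $u$ and $v$ are on the boundaries of $g$ and $h$. After $\tau_c$, the face $g$ contracts to a single vertex $\tau(g)$ of $\Gamma_i$, and the $c$-edges at $u$ and at $v$ become two distinct edges $\tau(u),\tau(v)$ of $\Gamma_i$ meeting at $\tau(g)$; moreover $\tau(u)$ is shared by the faces $\tau(f)$ and $\tau(h)$, while $\tau(v)$ is shared by $\tau(h)$ and $\tau(f')$. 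Therefore on $\Gamma_1$ the operator $X_{\tau(u)}X_{\tau(v)}$ realizes the two-step magnetic hop $\tau(f)\xrightarrow{\tau(u)}\tau(h)\xrightarrow{\tau(v)}\tau(f')$, which coincides with moving $\mu_1$ directly from $\tau(f)$ to $\tau(f')$ up to a trivial loop around $\tau(h)$. Combined with $\mu_{c'}\equiv\mu_1$, this justifies the second half of \eqref{eq:mag-hopper2}; the first half follows from the same diagram read on $\Gamma_2$ under $\epsilon_{c'}\equiv\mu_2$.

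The most delicate part is the commutation check. Among the chosen independent hops, two color-code operators of the same Pauli type commute, and their images, being Pauli operators of the same type on the same copy, also commute; pairs whose images live on different copies commute automatically. For a $Z$-type hop $Z_uZ_v$ and an $X$-type hop $X_{u'}X_{v'}$, anticommutation on the color code is equivalent to $|\{u,v\}\cap\{u',v'\}|$ being odd. Using that $\tau(u)=\tau(v)$ precisely when $(u,v)\in\mathsf{E}_c(\Gamma)$, a short case enumeration shows that the images collect exactly the same parity of edge-overlaps on $\Gamma_i$, so commutation is preserved in every case. The remaining subtlety, and the reason the statement singles out ``independent hopping operators,'' is that \eqref{eq:elec-hopper1}--\eqref{eq:mag-hopper2} directly define $\pi$ only on the $4\ell_f-2$ independent hops around each $c''$-face $f$ identified in Lemma~\ref{lm:dep-ops}; the two dependent hops on $f$ must then be fixed implicitly through $\pi(B_f^\sigma)$ so that the two equivalent expressions for $B_f^\sigma$ furnished by Lemma~\ref{lm:dep-ops} are transported consistently to the surface codes.
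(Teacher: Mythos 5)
Your proposal is correct and follows essentially the same approach as the paper: identify the target copy and Pauli type via the charge pairing of Lemma~\ref{lm:charge-pairing}, observe how $\tau_c$ transports $f,f'$ to adjacent vertices or faces of $\Gamma_i$, and pick the minimum-weight hopping operator realizing that move. Your extra, explicit commutation verification and the remark about dependent hops are consonant with the paper but are actually handled there in Lemma~\ref{lm:commutation} and in the discussion surrounding Lemma~\ref{lm:dep-ops}, rather than inside this lemma's proof.
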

\begin{proof}
	We only prove for $H_{u, v}^{\epsilon_c}$ and $H_{u , v}^{\mu_{c'}}$. Similar reasoning can be employed for $H_{u,v}^{\mu_c}$
	and $H_{u, v}^{\epsilon_{c'}}$.
	\begin{compactenum}[(i)]
		\item  $H_{u, v}^{\epsilon_c}$: This operator moves $\epsilon_c$ from $f$ to $f'$ in $\Gamma$. These faces are mapped to adjacent vertices in $\tau(\Gamma)$. By Lemma~\ref{lm:charge-pairing}, $\epsilon_c$ is mapped to $\epsilon_1$, so $\pi(H_{u, v}^{\epsilon})$ should move $\epsilon_1$ from the vertex $\tau(f)$ to the vertex $\tau(f')$ on $\Gamma_{1}$. 
		Many hopping operators can achieve this; choosing the elementary operator gives $\pi(Z_u Z_v)= [Z_{\tau(u,v)}]_1$. Since $\tau(u,v)=\tau(u)=\tau(v)$, Eq.~\eqref{eq:elec-hopper1} follows.
		\item $H_{u, v}^{\mu_{c'}}$: This operator moves $\mu_{c'}$ from $f$ to $f'$. Since $\mu_{c'}$ is mapped to $\mu_1$, $ \pi(H_{u, v}^{\mu_{c'}})$ should move $\mu_1$ from the plaquette $\tau(f)$ to $\tau(f')$ on $\Gamma_{1}$. The operator on the first surface code which achieves this is an $X$-type operator on qubits $\tau(u)$ and $\tau(v)$ in $\Gamma_1$, i.e. $[X_{\tau(u)}X_{\tau(v)}]_1$. 
	\end{compactenum}
	In both cases we choose the hopping operators to be of minimum weight. 
\end{proof}

\begin{figure}
	\centering
	\includegraphics[scale=0.7,angle=0]{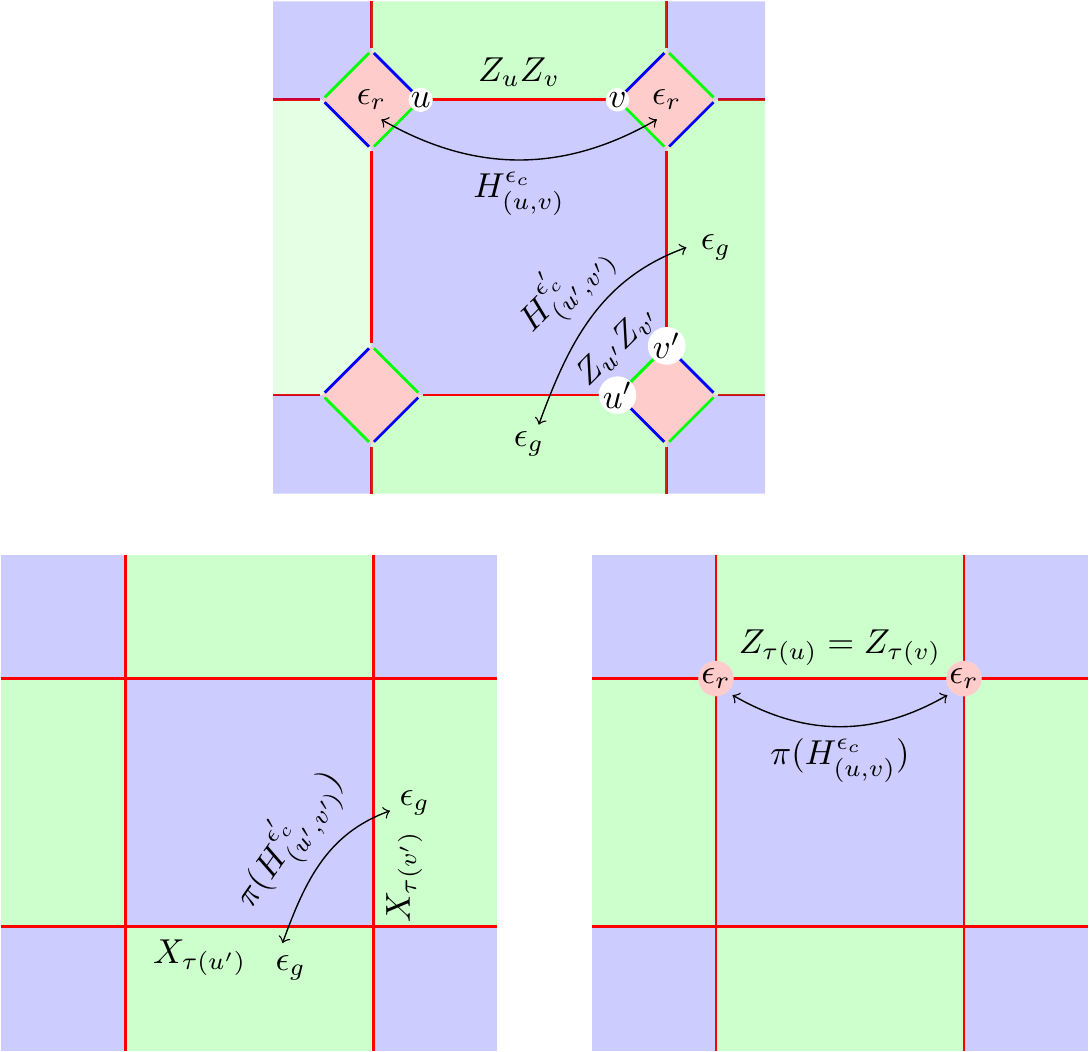}
	\caption{Mapping the independent hopping operators $H_{u,v}^{\epsilon_r}=H_{f_1 \leftrightarrow f_2}^{\epsilon_r}=Z_u Z_v$ and $H_{u',v'}^{\epsilon_g}=H_{f \leftrightarrow f'}^{\epsilon_g}=Z_{u'}Z_{v'}$ on $\Gamma$ onto two copies of $\tau(\Gamma)$ i.e. $\Gamma_1$ and $\Gamma_2$; $\pi(H_{f_1 \leftrightarrow f_2}^{\epsilon_r}) = [Z_{\tau(u)}]_1 $ acts only on $\Gamma_1$ while $H_{f \leftrightarrow f'}^{\epsilon_g} = [X_{\tau(u')}X_{\tau(v')}]_2$ acts only on $\Gamma_2$.} 
	\label{fig:mapping-hops}
\end{figure}

Fig.~\ref{fig:mapping-hops} illustrates how the independent hopping operators of Lemma~\ref{lm:hopping} are mapped. Lemma~\ref{lm:hopping} does not specify the mapping for the dependent hopping operators but it can be obtained as a linear combination of the independent ones. Alternative choices to those given in Lemma~\ref{lm:hopping} exist for $\pi$. These choices are essentially alternate hopping operators on the surface codes which accomplish the same charge movement. 
Such operators can be obtained by adding stabilizer elements to those given in 
Eqs.~\eqref{eq:elec-hopper1}--\eqref{eq:mag-hopper2}.

In this paper we explore the choice when the operators $H_{v_1,v_{2{\ell_f}} }^{\epsilon_{c'}}$ and $H_{v_{2m_f},v_{2m_f+1} }^{\mu_{c'}}$ are among the dependent hopping operators. 
The $c''$-faces form a covering of all the vertices of $\Gamma$ and they are non-overlapping. The elementary hopping operators along the edges on such plaquettes do not interact with the elementary hopping operators of other plaquettes in $\mathsf{F}_{c''}(\Gamma)$. So we can consider each $f\in \mathsf{F}_{c''}(\Gamma)$ independently. This also makes sense from our constraint to keep $\pi$ local. Based on Lemmas~\ref{lm:dep-ops}~and~\ref{lm:hopping}, we can map the independent elementary hopping operators of $f$ along $c$-colored edges. They map elementary hopping operators on $\Gamma$ to elementary hopping operators on $\Gamma_i$.
\begin{align}
	\pi (Z_{v_{2i-1}}Z_{v_{2i}} ) = [Z_{\tau(v_{2i})}]_1 &\mbox{ and } 
	\pi (X_{v_{2i-1}}X_{v_{2i}} ) = [Z_{\tau(v_{2i})}]_2 \label{eq:cedge}
\end{align}

Next, we consider the hopping operators that involve the $c'$-colored edges. Without loss of generality we assume that
$Z_{v_1} Z_{v_{2{\ell_f}}}$  and $X_{v_{2m_f}}X_{v_{2m_f+1}}$ are dependent hopping operators. Then letting $2{\ell_f}+1\equiv 1$ we have
\begin{align}
	\pi(Z_{v_{2i}}Z_{v_{2i+1}}) &= [X_{\tau(v_{2i})}X_{\tau(v_{2i+1})}]_2 \mbox{ ; } 1\leq i < {\ell_f} \label{eq:zz-diag}\\
	\pi(X_{v_{2i}}X_{v_{2i+1}}) &= [X_{\tau(v_{2i})}X_{\tau(v_{2i+1})}]_1 \mbox{ ; } 1\leq i\neq m_f \leq {\ell_f}.
	\label{eq:xx-diag}
\end{align}

All these operators and their images under $\pi$ are linearly independent as can be seen from their supports. From Lemma~\ref{lm:hopping} we obtain the images for the dependent hopping operators:
\begin{align}
	\pi(H_{v_1,v_{2{\ell_f}} }^{\epsilon_{c'}})&= [X_{\tau(v_1)} X_{\tau(v_{2{\ell_f}})}]_2\prod_{i=1}^{\ell_f} [Z_{\tau(v_{2i})}]_1 \\
	\pi(H_{v_{2m_f},v_{2m_f+1} }^{\mu_{c'}}) &= [X_{\tau(v_{2m_f})}X_{\tau(v_{2m_f+1})}]_1 \prod_{i=1}^{\ell_f} [Z_{\tau(v_{2i})}]_2
\end{align}

To complete the map it remains to find the action of $\pi$ for two more independent errors on the color code. One choice is any pair of single qubit operators $X_{v_i}$ and $Z_{v_j}$, where $1\leq i,j\leq 2{\ell_f}$. We can also consider the images under $\pi$. We can see from Eqs.~\eqref{eq:cedge}--\eqref{eq:xx-diag} that the images are also linearly independent and only single qubit $X$-type of errors remain to be generated. One choice is any $[X_{\tau(v_i)}]_1$ on $\Gamma_1$ and $[X_{\tau(v_j)}]_2$ on $\Gamma_2$, where $1\leq i,j\leq 2{\ell_f}$. That is, we need to find $E,E'$ such that $\pi(E)=[X_{\tau(v_i)}]_1$ and $\pi(E')=[X_{\tau(v_j)}]_2$ respect the commutation relations. Lemma~\ref{lm:split} addresses this choice.

\begin{lemma}[Splitting]\label{lm:split} 
	The following choices lead to an invertible $\pi$ while respecting the commutation relations with hopping operators in Eqs.~\eqref{eq:cedge}--\eqref{eq:xx-diag}. 
	\begin{align}
		\pi(g X_{v_1} ) &=[X_{\tau(v_1)}]_1 \mbox{ where } g\in \{I,B_f^X, B_f^Y, B_f^Z \}\label{eq:z-split}\\
		\pi(g Z_{v_{2m_f}} ) &=[X_{\tau(v_{2m_f})}]_2 \mbox{ where } g\in \{I,B_f^X \}\label{eq:x-split}
	\end{align} 
\end{lemma}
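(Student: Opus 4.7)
The plan is to verify two claims of the lemma: that the specified images of $X_{v_1}$ and $Z_{v_{2m_f}}$ complete $\pi$ into an invertible linear map on the local Pauli algebra $\mc{P}_{\mathsf{V}(f)}$, and that they preserve commutation relations with the hopping operator images already fixed in Eqs.~\eqref{eq:cedge}--\eqref{eq:xx-diag}. I would begin with a dimension count: the $2\ell_f$ qubits of $f$ carry a $4\ell_f$-dimensional symplectic Pauli space, in which the $4\ell_f - 2$ independent hoppers from Lemma~\ref{lm:dep-ops} span a codimension-2 subspace. The generator $X_{v_1}$ is independent of the $X$-type hoppers since those only span the codimension-1 subspace of even-weight $X$-products (quotient by $B_f^X$), and $Z_{v_{2m_f}}$ is independent of the $Z$-type hoppers for the analogous reason. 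The target images $[X_{\tau(v_1)}]_1$ and $[X_{\tau(v_{2m_f})}]_2$ likewise lie outside the span of the target operators in Eqs.~\eqref{eq:cedge}--\eqref{eq:xx-diag}, since those images never produce an isolated single-qubit $X$ on either copy.

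Next I would verify commutation relations pair by pair. On the source side, $X_{v_1}$ anti-commutes only with the hoppers $Z_{v_1}Z_{v_2}$ and the dependent $Z_{v_1}Z_{v_{2\ell_f}}$; correspondingly on the target, $[X_{\tau(v_1)}]_1$ anti-commutes exactly once with $[Z_{\tau(v_2)}]_1 = [Z_{\tau(v_1)}]_1$ and once with the $\Gamma_1$-component of $\pi(Z_{v_1}Z_{v_{2\ell_f}})$ at the same edge. The analogous check for $Z_{v_{2m_f}}$ reduces to matching source anti-commutations with $X_{v_{2m_f-1}}X_{v_{2m_f}}$ and the dependent $X_{v_{2m_f}}X_{v_{2m_f+1}}$ against target anti-commutations with $[Z_{\tau(v_{2m_f})}]_2$ and the $\Gamma_2$-component of $\pi(X_{v_{2m_f}}X_{v_{2m_f+1}})$. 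The pair $(X_{v_1}, Z_{v_{2m_f}})$ commutes on both sides because $v_1 \neq v_{2m_f}$ on the source and the target operators sit on different copies $\Gamma_1, \Gamma_2$.

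For the freedom in $g$, a direct computation using the decompositions $B_f^X = \prod_{i=1}^{\ell_f} X_{v_{2i-1}}X_{v_{2i}}$ and $B_f^Z = \prod_{i=1}^{\ell_f} Z_{v_{2i-1}}Z_{v_{2i}}$ gives $\pi(B_f^X) = \prod_i [Z_{\tau(v_{2i})}]_2$ and $\pi(B_f^Z) = \prod_i [Z_{\tau(v_{2i})}]_1$, each of which is a face stabilizer of the corresponding surface code, since the boundary of the face $\tau(f)$ in $\tau(\Gamma)$ consists precisely of those $\ell_f$ edges. Hence $\pi(g)$ lies in $S_{\Gamma_1}\cdot S_{\Gamma_2}$ for every $g$ in the listed sets, so the identities in Eqs.~\eqref{eq:z-split}--\eqref{eq:x-split} hold modulo the surface-code stabilizer, and every such choice preserves the required commutation relations. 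Invertibility then follows from dimension counting: since the $c''$-faces partition $\mathsf{V}(\Gamma)$ and elementary hopping operators do not mix across these faces, the local bijection on each face assembles into a global bijection $\pi:\mc{P}_{\mathsf{V}(\Gamma)} \to \mc{P}_{\mathsf{E}(\Gamma_1)} \otimes \mc{P}_{\mathsf{E}(\Gamma_2)}$.

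The main obstacle is the bookkeeping for the dependent hopping operator images, which are products of a two-site $X$-string on one copy with the full face $Z$-product on the other; verifying that the induced anti-commutations have the correct parity requires carefully tracking which of the edges $\tau(v_{2i})$ coincide as one wraps around $\partial f$, and keeping straight which single-qubit generators on the source correspond to edges versus pairs of edges on the targets $\Gamma_i$.
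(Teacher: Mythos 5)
Your overall strategy—dimension counting for invertibility, then pairwise commutation checks—matches the paper's, and most of the details you outline are sound. However, there is a genuine gap in your explanation of why the freedom in $g$ differs between the two equations: your stabilizer-image argument cannot distinguish them.

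You justify the $g$-freedom by noting that $\pi(B_f^X)$ and $\pi(B_f^Z)$ are plaquette stabilizers on $\Gamma_2$ and $\Gamma_1$ respectively, so multiplying a preimage by $g$ only shifts the image by a surface-code stabilizer, which cannot affect commutation with the hopping-operator images. This reasoning is correct but applies equally to all four $g \in \{I, B_f^X, B_f^Y, B_f^Z\}$ in both Eq.~\eqref{eq:z-split} and Eq.~\eqref{eq:x-split}; it predicts four valid choices for $Z_{v_{2m_f}}$ as well, contradicting the lemma's restriction to $\{I, B_f^X\}$. The missing ingredient is that the two assignments are not independent: once $\pi^{-1}([X_{\tau(v_1)}]_1)$ is fixed, the preimage of $[X_{\tau(v_{2m_f})}]_2$ must also \emph{commute with it}, since $[X_{\tau(v_1)}]_1$ and $[X_{\tau(v_{2m_f})}]_2$ live on different copies and therefore commute. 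Taking $\pi^{-1}([X_{\tau(v_1)}]_1) = X_{v_1}$, one finds that $g Z_{v_{2m_f}}$ commutes with $X_{v_1}$ only when $g \in \{I, B_f^X\}$, because $B_f^Z$ and $B_f^Y$ contain $Z_{v_1}$, which anti-commutes with $X_{v_1}$ (note that $Z_{v_{2m_f}}$ itself commutes with $X_{v_1}$ since $v_{2m_f} \neq v_1$). This is precisely the rank-nullity step in the paper's proof where the constraint count goes from $4\ell_f - 3$ for the first preimage to $4\ell_f - 2$ for the second, cutting the number of solutions in half. You did explicitly check that $(X_{v_1}, Z_{v_{2m_f}})$ commute on source and target, but you treated this as a one-time observation rather than as a live constraint that must be re-verified for every choice of $g$—and that is exactly what produces the asymmetry the lemma records.

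One further remark: the paper's own post-lemma comment notes that the allowed sets for $g$ are order-dependent (assigning $\pi^{-1}([X_{\tau(v_{2m_f})}]_2)$ first would flip which equation gets the four-element set). Any correct argument should make this sequential dependence visible, and the stabilizer-image framing you chose obscures it.
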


\begin{proof}
	Each face $f\in \mathsf{F}_{c''}(\Gamma)$ accounts for $2\ell_f$ qubits, i.e. $4\ell_f$ independent operators. Now, $[X_{\tau(v_1)}]_1$ and $[X_{\tau(v_{2m_f})}]_2$ form a linearly independent set of size $4{\ell_f}$ along with the images of the independent elementary hopping operators on $f$.
	Thus, the elementary hopping operators and the preimages of $[X_{\tau(v_1)}]_1$ and $[X_{\tau(v_{2m_f})}]_2$ account for all the $4\ell_f$ operators on qubits on $f$. Considering all faces in $\mathsf{F}_{c''}(\Gamma)$, we have $\sum_f 4\ell_f=2n$ operators which generate $\mc{P}_{\mathsf{V}(\Gamma)}$. Since their images are independent and $\Gamma_1 \cup \Gamma_2$ has exactly as many qubits as $\Gamma$, $\pi$ must be invertible. 
	
	Next, we prove these choices respect the commutation relations as stated.
	Consider $[X_{\tau(v_1)}]_1$: this error commutes with all the operators in Eq.~\eqref{eq:cedge}--\eqref{eq:xx-diag} except $\pi(Z_{v_1} Z_{v_2}) = [Z_{\tau(v_1)} ]_1$. There are $4\ell_f-3$ such hopping operators on $f$ with which $\pi^{-1}([X_{\tau(v_1)}]_1)$ must commute. As a consequence of the rank-nullity theorem there are $2^{{4\ell_f}-(4\ell_f-3)}$ such operators.
	It can be verified that $\langle X_{v_1}, B_f^X, B_f^Z\rangle $ account for these operators. But $\pi^{-1}([X_{\tau(v_1)}]_1)$ must also anti-commute with $Z_{v_1}Z_{v_2}$. This gives the choices in Eq.~\eqref{eq:z-split} since operators in $\langle B_f^X, B_f^Z\rangle$ commute with $Z_{v_1}Z_{v_2}$. Now let us determine $\pi^{-1}([X_{\tau(v_{2m_f})}]_2)$. Once again with reference to Eq.~\eqref{eq:cedge}--\eqref{eq:xx-diag} we see that it must commute with 
	$4\ell_f-3$ hopping operators on $f$. It also commutes with $\pi^{-1}([X_{\tau(v_1)}]_1)$ since $[X_{\tau(v_{2m_f})}]_2$ commutes with $[X_{\tau(v_1)]_1}$. Again, due to a dimensionality argument there are $2^{4\ell_f-(4\ell_f-2)}$ choices for $\pi^{-1}([X_{\tau(v_{2m_f})}]_2)$. Since 
	$[X_{v_{2m_f}}]_2$ anti-commutes with $[Z_{\tau(v_{2m_f})}]_2$
	its preimage must anti-commute with $\pi^{-1}([Z_{\tau(v_{2m_f})}]_2)=X_{v_{2m_f-1}}X_{v_{2m_f}}$ giving two choices $Z_{v_{2m_f}}$ and $Z_{v_{2m_f}}B_f^X$. We can check that $Z_{v_{2m_f}}$ satisfies all the required commutation relations as does the choice $Z_{v_{2m_f}} B_f^X$. 
\end{proof}

\noindent
In Lemma~\ref{lm:split} we first assigned $\pi^{-1}([X_\tau(v_1)]_1)$ followed by 
$\pi^{-1}([X_{\tau(v_{2m_f})}]_2) $. Changing the order restricts $g$ to $\{I,B_f^X\}$ in Eq.~\eqref{eq:zz-diag} while $g\in \{I,B_f^X, B_f^Y, B_f^Z \}$ in Eq.~\eqref{eq:xx-diag}.

\begin{lemma}[Preserving commutation relations]\label{lm:commutation}
	The map $\pi$ preserves commutation relations of error operators in $\mc{P}_{\mathsf{V}(\Gamma)}$. 
\end{lemma}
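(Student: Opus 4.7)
The plan is to exploit $\mathbb{F}_2$-linearity of the commutator pairing on $\mathcal{P}_{\mathsf{V}(\Gamma)}$ modulo phase: because $\pi$ is linear, it suffices to verify that the symplectic form agrees on a generating set and then extend by bilinearity. Lemmas~\ref{lm:dep-ops}, \ref{lm:hopping}, and \ref{lm:split} hand us such a set. For every $f\in\mathsf{F}_{c''}(\Gamma)$ with $2\ell_f$ boundary vertices, collect the $4\ell_f-2$ independent elementary hopping operators along $\partial f$ together with the two splitting operators $X_{v_1}$ and $Z_{v_{2m_f}}$. Ranging $f$ over $\mathsf{F}_{c''}(\Gamma)$, these $4\ell_f$-tuples paste together into a basis of $\mathcal{P}_{\mathsf{V}(\Gamma)}$, so the whole commutator table reduces to one Gram check per basis pair.

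First I would reduce the problem to a single face. The $c''$-faces partition $\mathsf{V}(\Gamma)$; moreover, every $c$-colored edge—that is, every qubit of $\Gamma_i$ after contraction—has both of its endpoints inside the same $c''$-face, since such an edge sits on the boundary of exactly one $c''$-face in a 2-colex. Therefore generators belonging to two different $c''$-faces act on disjoint qubits in $\Gamma$ and their images act on disjoint qubits in $\Gamma_1\sqcup\Gamma_2$, so both sides commute automatically. Only the Gram submatrix associated with a single face $f$ needs verification.

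Second, within a fixed $f$ I would run a finite case analysis. Two $Z$-type (or two $X$-type) operators on $\Gamma$ always commute, while a mixed pair anti-commutes iff its supports overlap on an odd number of qubits; the same dichotomy governs each copy $\Gamma_i$. Under $\pi$, a $c$-edge hopper is sent to the single-qubit operator $[Z_{\tau(v_{2i})}]_j$ because the identification $\tau(v_{2i-1})=\tau(v_{2i})$ collapses the $c$-edge to one qubit of $\Gamma_j$, while a $c'$-edge hopper becomes the weight-two operator $[X_{\tau(v_{2i})}X_{\tau(v_{2i+1})}]_j$. Pairs whose images land on different copies commute trivially, matching the fact that the pre-images are of the same Pauli type (as dictated by Lemma~\ref{lm:charge-pairing}); pairs landing on the same copy overlap on $\tau(v_{2i})$ exactly when their pre-images overlap at $v_{2i-1}$ or $v_{2i}$ in $\partial f$, so overlap parities are preserved. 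The remaining commutators involving $X_{v_1}$ and $Z_{v_{2m_f}}$ were already verified inside the proof of Lemma~\ref{lm:split}, and the pair $(X_{v_1},Z_{v_{2m_f}})$ itself commutes on both sides—in the domain because $2m_f\ge 2$ puts them on different qubits, and in the codomain because their images lie on different copies.

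The main obstacle is the combinatorial bookkeeping of this case analysis: one must catalogue the four pair-types of hopping operators ($c$-$Z$, $c$-$X$, $c'$-$Z$, $c'$-$X$) along $\partial f$, locate the alternation vertices $v_{2i},v_{2i+1}$ where a $c$-edge and a $c'$-edge of $\partial f$ can meet, and check through $\tau$ that each overlap parity in $\Gamma$ matches the overlap parity in $\Gamma_i$. Once this purely local parity check is done, bilinearity of the symplectic form together with the across-face reduction closes the argument for all of $\mathcal{P}_{\mathsf{V}(\Gamma)}$.
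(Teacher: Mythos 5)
Your proposal is correct and follows essentially the same approach as the paper: verify the symplectic form on the basis of elementary hopping operators from Eqs.~\eqref{eq:cedge}--\eqref{eq:xx-diag} together with the single-qubit operators from Lemma~\ref{lm:split}, then extend by bilinearity. The one organizational addition you make is to state explicitly the reduction to a single $c''$-face, resting on the fact that $c''$-faces partition $\mathsf{V}(\Gamma)$ and each $c$-colored edge lies in the boundary of exactly one $c''$-face, so generators from distinct faces have disjoint support in both $\Gamma$ and $\Gamma_1 \sqcup \Gamma_2$; the paper leaves this reduction implicit when it considers only adjacent hoppers within one face. Your per-face parity-check argument (a $c$-edge hopper becomes a weight-one operator, a $c'$-edge hopper becomes a weight-two operator, and overlaps at $\{v_{2i-1},v_{2i}\}$ in $\Gamma$ collapse to overlap at the single qubit $\tau(v_{2i})$) is precisely the substance of the paper's case checks on $Z_{v_{2i}}Z_{v_{2i+1}}$ and $Z_{v_{2i-1}}Z_{v_{2i}}$, just phrased more systematically.
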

\begin{proof}
	We only sketch the proof. It suffices to show that the commutation relations hold for a basis of $\mc{P}_{\mathsf{V}(\Gamma)}$. We consider the basis consisting of the hopping operators along $c$ and $c'$ edges in Eq.~\eqref{eq:cedge}--\eqref{eq:xx-diag} and the single qubit operators given in Lemma~\ref{lm:split}. The proof of Lemma~\ref{lm:split} shows that the commutation relations are satisfied for the single qubit operators. Consider a hopping operator along a $c'$-colored edge. This anti-commutes with exactly two hopping operators along $c$-colored edges on $\Gamma$. For instance, consider $Z_{v_{2i}}Z_{v_{2i+1}}$. From Eq.~\eqref{eq:cedge} --\eqref{eq:xx-diag} this anti-commutes with $X_{v_{2i-1}}X_{v_{2i}}$ and $X_{v_{2i+1}}X_{v_{2i+2}}$. Their images under $\pi$ are $[X_{\tau(v_{2i})}X_{\tau(v_{2i+1})}]_2$, $[Z_{\tau(v_{2i})}]_2$ and $[Z_{\tau(v_{2i+1})}]_2$ for which it is clear that the commutation relations are satisfied. 
	
	The operators along the $c$-colored edges are given in Eq.~\eqref{eq:cedge}. Suppose we consider $\pi(Z_{v_{2i-1}}Z_{v_{2i}})$; then it anti-commutes with $X_{v_{2i-2}}X_{v_{2i-1}}$ and $X_{v_{2i}}X_{v_{2i+1}}$. We only need to verify commutation for those operators which are independent. Assuming that they are both independent, then their images are $X_{\tau(v_{2i-2})}X_{\tau(v_{2i-1})}$ and $X_{\tau(v_{2i})}X_{\tau(v_{2i+1})}$ respectively. They anti-commute with $\pi(Z_{v_{2i-1}}Z_{v_{2i}}) = [Z_{\tau(v_{2i-1})}]_1 = [Z_{\tau(v_{2i})}]_1$.
	If only one of the operators is independent, then we need only verify for that operator. The preceding argument already establishes this result. We can argue in a similar fashion to show that commutation relations are preserved for the operators of the type $X_{v_{2i-1}}X_{v_{2i}}$ and $X_{v_{2i}}X_{v_{2i+1}}$.
\end{proof}
\begin{lemma}[Preserving code capabilities]\label{lm:stabilizers}
	Under $\pi$, stabilizers of the color code on $\Gamma$ are mapped to stabilizers on the surface codes on $\Gamma_1$ and $ \Gamma_2$. 
\end{lemma}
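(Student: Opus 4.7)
The plan is to establish the lemma by case analysis on the color of $f$ and on $\sigma\in\{X,Z\}$. In every case I would decompose the color-code plaquette operator $B_f^\sigma$ as a product of elementary hopping operators along a single color of boundary edge, evaluate $\pi$ on each factor using Lemma~\ref{lm:hopping} and the dependent-hopping formulas stated just before Lemma~\ref{lm:split}, and then recognise the resulting element as a product of the surface-code generators $A_v,B_f$ on $\Gamma_1,\Gamma_2$ defined in~\eqref{eq:stab-sc}.

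For $f\in\mathsf{F}_{c'}(\Gamma)\cup\mathsf{F}_{c''}(\Gamma)$, $\partial f$ alternates between $c$-edges and edges of one other color, so I would write $B_f^Z=\prod_{(u,v)\in\mathsf{E}_c(\partial f)}Z_uZ_v$, with each factor an elementary $\epsilon_c$-hopping operator on a $c$-edge. Since every $c$-edge lies in a unique $c''$-face as an independent hopping operator there, Eq.~\eqref{eq:cedge} applies and gives $\pi(Z_uZ_v)=[Z_{\tau((u,v))}]_1$. The $c$-edges of $\partial f$ are exactly the edges bounding $\tau(f)$ in $\Gamma_1=\tau_c(\Gamma)$, so the product collapses to $B_{\tau(f)}^{\Gamma_1}$; the identical argument with $X$ replacing $Z$ yields $\pi(B_f^X)=B_{\tau(f)}^{\Gamma_2}$.

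For $f\in\mathsf{F}_c(\Gamma)$ the face is contracted to the vertex $\tau(f)$, so I expect $\pi(B_f^\sigma)$ to involve a vertex stabilizer. Since $\partial f$ contains only $c'$- and $c''$-edges, I would decompose $B_f^Z$ along its $c'$-edges, writing each factor as $H^{\epsilon_{c'}}_{u,v}$. Each such $c'$-edge lies in a unique $c''$-face $f''$ where it is either independent (apply Eq.~\eqref{eq:zz-diag}) or the designated dependent $\epsilon_{c'}$-edge (apply the dependent-hopping formula, which introduces an extra factor $B_{\tau(f'')}^{\Gamma_1}$). Across all $c'$-edges of $\partial f$, the $X$-type pieces combine into $\prod_{v\in f}[X_{\tau(v)}]_2=A_{\tau(f)}^{\Gamma_2}$, because the $c'$-edges form a perfect matching of the vertices of $f$ and the edges $\{\tau(v):v\in f\}$ are exactly $\delta(\tau(f))$ in $\Gamma_2$. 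The extras collect into $\prod_{f''\in\mathcal{D}}B_{\tau(f'')}^{\Gamma_1}$, where $\mathcal{D}\subseteq\mathsf{F}_{c''}(\Gamma)$ indexes those $c''$-faces whose dependent $\epsilon_{c'}$-edge lies in $\partial f$. The symmetric calculation with Eq.~\eqref{eq:xx-diag} gives $\pi(B_f^X)=A_{\tau(f)}^{\Gamma_1}\cdot\prod_{f''\in\mathcal{D}'}B_{\tau(f'')}^{\Gamma_2}$. The main obstacle is precisely this bookkeeping in the $c$-face case: checking that the dependent-hopping extras are themselves surface-code face stabilizers (which they are by direct inspection of the formula for $\pi(H^{\epsilon_{c'}}_{v_1,v_{2\ell_f}})$) and that nothing inconsistent happens when two $c'$-edges of $\partial f$ share a common $c''$-face. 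Once this is verified, every $\pi(B_f^\sigma)$ is visibly a product of the generators in~\eqref{eq:stab-sc} on $\Gamma_1$ and $\Gamma_2$, which proves the lemma.
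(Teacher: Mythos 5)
Your proof is correct and takes essentially the same approach as the paper: decompose $B_f^\sigma$ along boundary edges of a single color, push each elementary hopping operator through $\pi$ via Lemma~\ref{lm:hopping} (distinguishing dependent from independent $c'$-edges), and recognize the result as a face or vertex stabilizer on $\Gamma_1$ or $\Gamma_2$ times, in the $c$-face case, extra face stabilizers contributed by the dependent hops. The worry you raise about two $c'$-edges of $\partial f$ sharing a $c''$-face is harmless, since each $c''$-face has a unique dependent $\epsilon_{c'}$-edge, so at most one extra plaquette stabilizer is ever contributed per $c''$-face.
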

\begin{proof}
	To prove this, it suffices to show that the stabilizers associated with plaquettes of all three colors are mapped to stabilizers on the surface codes. If $\Gamma_i=\tau_c(\Gamma)$, then we show that the stabilizers associated with $f\in \mathsf{F}_{c'}(\Gamma)\cup \mathsf{F}_{c''}(\Gamma)$ are mapped to the plaquette stabilizers on $\Gamma_i$. If $f\in \mathsf{F}_{c'}(\Gamma)$, then $B_f^Z = \prod_{i=1}^{\ell_f} H_{v_{2i-1},v_{2i}}^{\epsilon_{c}}$. By Lemma~\ref{lm:hopping} this is mapped to $\prod_{i}^{\ell_f} [Z_{\tau(v_{2i})}]_1 = \prod_{e\in \partial(\tau(f))}[Z_e]_1$. Using a similar argument we can show that $B_f^Z \in \mathsf{F}_{c''}(\Gamma)$ is also a plaquette stabilizer on $\Gamma_1$. Since faces in $\mathsf{F}_{c'}(\Gamma)\cup \mathsf{F}_{c''}(\Gamma)$ are in one to one correspondence with the faces of $\tau(\Gamma)$, they account for all the face stabilizers on $\Gamma_1$. By considering $B_f^X$, we can similarly show that they map to the face stabilizers on $\Gamma_2$. 
	
	Now consider a face $f\in \mathsf{F}_{c}(\Gamma)$. Consider $B_f^Z$, this can be decomposed into hopping operators $H_{u,v}^{\epsilon_{c'}}$ along $c'$-edges. By Lemma~\ref{lm:hopping}, such an operator maps to $[X_{\tau(u)}X_{\tau(v)}]_2$ and an additional stabilizer on one of the faces of $\Gamma_1$ if $H_{u,v}^{\epsilon_{c'}}$ is a dependent hopping operator. Thus $B_f^Z$ maps to a vertex operator on $\tau(f)$ in $\Gamma_2$ and possibly a combination of plaquette stabilizers. Since every vertex in $\Gamma_2$ is from a face in $\Gamma$, we can account for all the vertex operators on $\Gamma_2$. Similarly, by considering the stabilizer $B_f^X$ we can account for all the vertex operators on $\Gamma_1$.
	\end{proof}
	
	\begin{theorem}\label{th:tcc-map}
		Any 2D color code (on a 2-colex $\Gamma$ without parallel edges) is equivalent to a pair of surface codes $\tau(\Gamma)$ under the map $\pi$ defined as in Algorithm~\ref{alg:tcc-projections}. 
	\end{theorem}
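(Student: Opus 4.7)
The plan is to assemble Lemmas~\ref{lm:charge-pairing}--\ref{lm:stabilizers} into a single bijective, local, commutation- and stabilizer-preserving map $\pi$, rather than to construct anything new. The starting observation is that the $c''$-colored faces of $\Gamma$ partition $\mathsf{V}(\Gamma)$, because trivalence plus $3$-face-colorability forces every vertex to sit on exactly one $c''$-face. Consequently it suffices to define $\pi$ face-by-face on each $f\in\mathsf{F}_{c''}(\Gamma)$ and then extend linearly; localities and consistency between distinct $c''$-faces will be automatic.

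First I would check that $\pi$ is well-defined on a basis of $\mc{P}_{\mathsf{V}(\Gamma)}$. For a face $f\in\mathsf{F}_{c''}(\Gamma)$ with $2\ell_f$ boundary vertices, Lemma~\ref{lm:dep-ops} identifies $4\ell_f-2$ independent elementary hopping operators along the edges of $f$, whose images are fixed by Eqs.~\eqref{eq:cedge}--\eqref{eq:xx-diag}. Lemma~\ref{lm:split} supplies the two remaining operators $\pi^{-1}([X_{\tau(v_1)}]_1)$ and $\pi^{-1}([X_{\tau(v_{2m_f})}]_2)$, giving a total of $4\ell_f$ generators, matching the dimension of $\mc{P}$ on the $2\ell_f$ qubits of $f$ (modulo global phases). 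Summing $\sum_f 4\ell_f = 2n$ over all $f\in \mathsf{F}_{c''}(\Gamma)$ yields a full generating set for $\mc{P}_{\mathsf{V}(\Gamma)}$, and $\pi$ extends uniquely as a linear (i.e.\ group-homomorphic) map.

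Next I would argue bijectivity by a dimension count. The images listed in Eqs.~\eqref{eq:cedge}--\eqref{eq:xx-diag} together with $[X_{\tau(v_1)}]_1$ and $[X_{\tau(v_{2m_f})}]_2$ are linearly independent (as is visible from their supports), and since $\Gamma_1$ and $\Gamma_2$ together carry exactly $n$ qubits, these $2n$ independent images generate $\mc{P}_{\mathsf{E}(\Gamma_1)}\otimes\mc{P}_{\mathsf{E}(\Gamma_2)}$. Hence $\pi$ is a linear isomorphism of Pauli groups. Locality is transparent: each $\pi$-image is supported on the qubits of $\tau(f)$ inside $\Gamma_1$ and $\Gamma_2$, and since the $c''$-faces do not share vertices, the per-face assignments never conflict. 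Efficiency is also immediate, since on a face of boundary length $2\ell_f$ the map is computed in $O(\ell_f)$ elementary operations.

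Finally I would invoke Lemma~\ref{lm:commutation} to conclude that $\pi$ preserves all commutation relations in $\mc{P}_{\mathsf{V}(\Gamma)}$, and Lemma~\ref{lm:stabilizers} to conclude that $\pi$ sends the color-code stabilizer $S$ generated by $\{B_f^X,B_f^Z\}_{f\in\mathsf{F}(\Gamma)}$ onto the stabilizer of the pair of surface codes on $\Gamma_1=\Gamma_2=\tau(\Gamma)$, generated by vertex and plaquette operators. Since $\pi$ is a Pauli isomorphism carrying stabilizers to stabilizers, it restricts to a symplectic isomorphism of the normalizer modulo stabilizer, so the two codes are locally equivalent in the sense required. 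The only subtle point in this assembly is confirming that the two splitting choices in Lemma~\ref{lm:split} (the arbitrary $g\in\{I,B_f^X,B_f^Y,B_f^Z\}$ and $g\in\{I,B_f^X\}$) still yield a valid stabilizer-preserving map for any fixed choice; this is handled because different choices differ only by elements of $S$, which is killed by the stabilizer quotient. The formal statement then follows by declaring Algorithm~\ref{alg:tcc-projections} to be precisely the face-by-face realisation of the maps above, so that its output equals $\pi$.
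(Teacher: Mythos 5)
Your proposal is correct and follows essentially the same approach as the paper's own proof sketch: decompose qubit-by-qubit over the $c''$-faces (which partition $\mathsf{V}(\Gamma)$), use Lemmas~\ref{lm:charge-pairing}--\ref{lm:split} to define $\pi$ on a generating set of $\mc{P}_{\mathsf{V}(\Gamma)}$ and establish bijectivity by dimension counting, and then invoke Lemmas~\ref{lm:commutation} and~\ref{lm:stabilizers} to conclude that $\pi$ is a local, commutation-preserving isomorphism carrying the color-code stabilizer onto the two surface-code stabilizers. Your added remark about the different admissible choices of $g$ in Lemma~\ref{lm:split} differing only by stabilizer elements is a correct and useful clarification, but does not change the overall route.
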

	\begin{proof}[Proof Sketch]
		By charge conservation we require two copies of $\tau(\Gamma)$ to represent the color code using surface codes. Lines 2--3 follow from Lemma~\ref{lm:charge-pairing}. Since $c''$-colored faces in $\mathsf{F}_{c''}(\Gamma)$ cover all the qubits of the color code, we account for all the single qubit operators on the color code by the {$\mathsf {for}$}-loop in lines 4--10. The closed form expressions for single qubit errors in lines 6--9 are a direct consequence of Lemmas~\ref{lm:hopping}, \ref{lm:split} and the choices given in Eqs.~\eqref{eq:cedge}--\eqref{eq:xx-diag} and Eqs.~\eqref{eq:z-split}--\eqref{eq:x-split}. By considering the images of the stabilizers of the color code, we can show that they are mapped to the stabilizers of the surface codes on $\Gamma_i$ (see Lemma~\ref{lm:stabilizers}). From Lemma~\ref{lm:commutation}, the commutation relations among the hopping operators on the color code in Eq.~\eqref{eq:cedge}--\eqref{eq:xx-diag} and the single qubit operators in Eq.~\eqref{eq:z-split}--\eqref{eq:x-split} are preserved. Hence, the errors corrected by the color code are the same as those corrected by the surface codes on $\Gamma_i$. Thus the color code is equivalent to two copies of $\tau(\Gamma)$.
	\end{proof}

	\begin{algorithm}
		\caption{{\ensuremath{\mbox{ Mapping a 2D color code to surface codes}}}}\label{alg:tcc-projections}
		\begin{algorithmic}[1]
			\REQUIRE {A 2-colex $\Gamma$ without parallel edges; $\Gamma$ is assumed to have a 2-cell embedding.}
			\ENSURE { 
				$\pi: \mc{P}_{\mathsf{V}(\Gamma)}\rightarrow \mc{P}_{\mathsf{E}(\Gamma_1)}\otimes \mc{P}_{\mathsf{E}(\Gamma_2)}$, where 
				$\Gamma_i=\tau_c(\Gamma)$.}
			
			\STATE Pick a color $c\in \{r,g,b\}$ and contract all edges of $\Gamma$ that are colored $\{r,g,b\}\setminus c$ to obtain $\tau(\Gamma)$. Denote two instances of $\tau(\Gamma)$ as $\Gamma_1$ and $\Gamma_2$.
			\STATE Choose charges $\epsilon_{c}$, $\mu_{c}$, $\epsilon_{c'}$ and $\mu_{c'}$ on $\Gamma$, where $c' \neq c$.
			\STATE Set up correspondence between charges on $\Gamma$ and $\Gamma_{i}$ as follows:
			$\epsilon_1 \equiv \epsilon_c $, $\mu_1 \equiv \mu_{c'}$, $\epsilon_2 \equiv \mu_c$
			and $\mu_2 \equiv \epsilon_{c'}$ .
			\FOR { each $c''$-colored face $f$ in $\mathsf{F}(\Gamma)$}
			\STATE Let the boundary of $f$ be $v_1$, \ldots $v_{2{\ell_f}}$.
			\STATE Choose a pair of $c'$-colored edges in $\partial(f)$, say $(v_{2{\ell_f}},v_1)$ and $(v_{2m_f},v_{2m_f+1})$. Let $[ T ]_i$ denote that $T$ acts on $\Gamma_i$. 
			\begin{align}
				\pi(Z_{v_1}) &= \left[ X_{\tau(v_1)}\right]_2 \prod_{i=1}^{m_f} \left[ Z_{\tau(v_{2i})} \right]_1 
			\end{align}
			\STATE For $1 \leq j\leq {\ell_f} $ compute the mapping (recursively) as 
			\begin{align}
				\pi(Z_{v_{2j}}) &= \pi (Z_{v_{2j-1}}) \left[ Z_{\tau(v_{2j})}\right]_1 \label{eq:zmap_rec_1}\\
				\pi(Z_{v_{2j-1}}) &= \pi (Z_{v_{2j-2}}) \left[X_{\tau(v_{2j-2})} X_{\tau(v_{2j-1})}\right]_2 \label{eq:zmap_rec_2}
			\end{align}
			
			\STATE For $1 \leq j\leq m_f$ compute the mapping as 
			\begin{align}
				\pi(X_{v_1}) &= \left[ X_{\tau(v_1)}\right]_1 \label{eq:xmap_rec01}\\
				\pi(X_{v_{2j}}) &= \pi (X_{v_{2j-1}}) \left[ Z_{\tau(v_{2j})}\right]_2 \label{eq:xmap_rec02}\\
				\pi(X_{v_{2j-1}}) &= \pi (X_{v_{2j-2}}) \left[ X_{\tau(v_{2j-2})} X_{\tau(v_{2j-1})}\right]_1 \label{eq:xmap_rec03}
			\end{align}
			
			\STATE For $m_f+1 \leq j\leq {\ell_f}$ compute the mapping as 
			\begin{align}
				\pi(X_{v_{2{\ell_f}}}) &=  \left[ X_{\tau(v_{2{\ell_f}})}\right]_1\label{eq:xmap-rec04}\\
				\pi(X_{v_{2j-1}}) &= \pi (X_{v_{2j}}) \left[ Z_{\tau(v_{2j})}\right]_2\label{eq:xmap-rec05}\\
				\pi(X_{v_{2j}}) &= \pi (X_{v_{2j+1}}) \left[ X_{\tau(v_{2j})} X_{\tau(v_{2j+1})}\right]_1\label{eq:xmap-rec06}
			\end{align}
			\ENDFOR
		\end{algorithmic}
	\end{algorithm}
	The map given in Algorithm \ref{alg:tcc-projections} is in recursive form. We can get an explicit form of the map as follows,
	For $1 \leq i \leq m_f$
	\begin{align}
		\pi(Z_{v_{2i-1}}) &= [Z_{\tau(v_{2i})}Z_{\tau(v_{2i+2})} \hdots Z_{\tau(v_{2m_f})}]_1 [X_{\tau(v_{2i})}]_2 \\
		\pi(Z_{v_{2i}}) &= [Z_{\tau(v_{2i+2})}Z_{\tau(v_{2i+4})} \hdots Z_{\tau(v_{2m_f})}]_1 [X_{\tau(v_{2i})}]_2 \\
		\pi(X_{v_{2i-1}}) &= [X_{\tau(v_{2i})}]_1 [Z_{\tau(v_2)} Z_{\tau(v_4)} \hdots Z_{\tau(v_{2i-2})}]_2 \\
		\pi(X_{v_{2i}}) &= [X_{\tau(v_{2i})}]_1 [Z_{\tau(v_2)}Z_{\tau(v_4)} \hdots Z_{\tau(v_{2i})}]_2 
	\end{align}
	and for $m_f <i\leq \ell_f$,
	\begin{align}
		\pi(Z_{v_{2i-1}})&=[Z_{\tau(v_{2m_f+2})} Z_{\tau(v_{2m_f+4})} \hdots Z_{\tau(v_{2i-2})}]_1 [X_{\tau(v_{2i})}]_2\\
		\pi(Z_{v_{2i}})&=[Z_{\tau(v_{2m_f+2})} Z_{\tau(v_{2m_f+4})} \hdots Z_{\tau(v_{2i})}]_1 [X_{\tau(v_{2i})}]_2\\ 
		\pi(X_{v_{2i-1}})&=[X_{\tau(v_{2i})}]_1 [Z_{\tau(v_{2i})} Z_{\tau(v_{2i+2})} \hdots Z_{\tau(v_{2\ell_f})}]_2 \\
		\pi(X_{v_{2i}})&=[X_{\tau(v_{2i})}]_1 [Z_{\tau(v_{2i+2})} Z_{\tau(v_{2i+4})} \hdots Z_{\tau(v_{2\ell_f})}]_2
	\end{align}

	We can easily obtain the inverse map $\pi^{-1}$. Using the same notation as in 
	Algorithm~\ref{alg:tcc-projections}, we have the following. 
	\begin{align}
		\pi^{-1}([Z_{\tau(v_{2i-1})}]_1) =\pi^{-1}([Z_{\tau(v_{2i})}]_1)  &=Z_{v_{2i-1}} Z_{v_{2i}} \label{eq:inv_mapz1}\\
		\pi^{-1}([Z_{\tau(v_{2i-1})}]_2) =\pi^{-1}([Z_{\tau(v_{2i})}]_2)  &= X_{v_{2i-1}} X_{v_{2i}} \label{eq:inv_mapz2}
	\end{align}
	For $1 \leq i\leq m_f$ we have 
	\begin{align}
		\pi^{-1}([X_{\tau(v_{2i-1})}]_1) = \pi^{-1}([X_{\tau(v_{2i})}]_1) &=X_{v_{1}} X_{v_{2}} \hdots X_{v_{2i-1}} \label{eq:inv_mapx1}\\
		\pi^{-1}([X_{\tau(v_{2i-1})}]_2) = \pi^{-1}([X_{\tau(v_{2i})}]_2) &=Z_{v_{2i}} Z_{v_{2i+1}} \hdots Z_{v_{2m_f}} \label{eq:inv_mapx2}
	\end{align}
	For $m_f+1 \leq i\leq {\ell_f}$ compute the inverse mapping as 
	\begin{align}
		\pi^{-1}([X_{\tau(v_{2i-1})}]_1) = \pi^{-1}([X_{\tau(v_{2i})}]_1) &=X_{v_{2i}} X_{v_{2i+1}} \hdots X_{v_{2l_f}} \label{eq:inv_mapx3}\\
		\pi^{-1}([X_{\tau(v_{2i-1})}]_2) = \pi^{-1}([X_{\tau(v_{2i})}]_2) &=Z_{v_{2m_f+1}} Z_{v_{2m_f+2}} \hdots Z_{v_{2i-1}} \label{eq:inv_mapx4}
	\end{align}
	A complete example of the mapping for the color code on the square octagon lattice is  illustrated in Fig.~\ref{fig:x-error-map}.
    
	\section{Decoding color codes via surface codes}\label{sec:decoding}
	
	In the previous section we showed that the color code is equivalent to a pair of surface codes. Now, we aim at developing a color code decoder by using this equivalence.
	
	Before we can use this equivalence to decode the color code, we make the following observations. First, observe that the decoding problem on the color code is presented as follows: given the syndrome of an error on the color code, our task is to estimate the error up to a stabilizer. The equivalence we showed in the previous section is due to a (local) map on Pauli operators. However, in the context of decoding we do not have access to the error but only to the syndrome of the error. This motivates the development of a map that projects the syndromes on the color code to syndromes on the surface codes. 
	
	This section is structured as follows. We first project the syndromes on the color code to syndromes on the surface codes.  We then study how the error model on the color code maps to the surface codes. Then, we report the performance of the color code on the square octagonal lattice. Lastly, we use the equivalence to investigate the performance of the color code on the quantum erasure channel. 
	
	\subsection{Projecting the syndromes on color code to surface codes}
	The syndrome for a stabilizer code is obtained by measuring the stabilizer generators. A color code is a CSS code and every face can be associated to an $X$ and $Z$ type stabilizer. Given an error on the color code the syndrome is defined on each face. In projecting the syndrome on the colour code to the surface code we require the following to be satisfied.
	\begin{compactenum}[i)]
		\item Consistency: Suppose $E$ is an error on the color code and $s_E$ its syndrome. We require the syndrome on the surface code to be the syndrome of $\pi(E)$ i.e. it should be syndrome of the image of the error on the surface codes. 
		
		\item Locally computable: Given $s_E$, the syndrome of an error $E$ on the color code, we want to be able to compute the syndrome on the surface codes  only from $s_E$ and not from $E$. Further, we want to be able to compute the syndrome in a local fashion. In other words, the syndrome computation 
		on the surface code must depend only on the syndromes in the ``neighbourhood of the syndrome''. This condition is imposed to ensure that the complexity of syndrome computation is linear. 
	\end{compactenum}

	Consider a stabilizer generator $B_f^\sigma$ on the color code. Let 
	$\pi(B_f^\sigma)= \prod_{i=1}^{n_1} A_{v_i} \prod_{j=1}^{n_2}B_{f_j}$, where 
	$A_{v_i}$ and $B_{f_j}$ are vertex and plaquette stabilizer generators on surface codes. Suppose an error $E$ produces the syndrome $s$ with respect to $B_f^\sigma$. Then we want $\pi(E)$ to produce the same syndrome with respect to $\pi(B_f^\sigma)$. On the surface codes the syndrome is determined by whether $A_{v_i}$ and $B_{f_j}$ commute with $\pi(E)$ or not. 
	
	Recall that the syndrome associated to a stabilizer generator $S$ with respect to an error is given by $SE = (-1)^s E S $, where $s=0$ if $E$ commutes with $S$ and $s=1$ otherwise. Let $B_f^\sigma E = (-1)^s E B_f^\sigma$,  $A_{v_i}E =  (-1)^{s_i}E A_{v_i} $ and $B_{f_j}E =  (-1)^{s_j'}E B_{f_j} $ Therefore, 
	we must have 
	\begin{eqnarray}
		(-1)^s &=& \prod_i  (-1)^{s_i} \prod_j (-1)^{s_j'}\\
		s & =&  s_1+ \cdots +s_{n_1}+  {s_1'}+\cdots s_{n_2}'\label{eq:syndrome-map}
	\end{eqnarray}
	The sum of these syndromes $s_i$ and $s_j'$  must be same as the syndrome for $E$ with respect to $B_f^\sigma$. Here we know $s$, while $s_i$ and $s_j'$ are unknown. From the preceding discussion we see that the syndromes on the surface codes can be found by solving a system of linear equations. This can lead to a super linear overhead in general. Fortunately, we can solve this in linear time since some of the stabilizers $B_f^\sigma$ are mapped to exactly one stabilizer generator on the surface code. In this case the right hand side of Eq.~\eqref{eq:syndrome-map} contains only one variable, in other words 
	the syndrome on the color code is projected directly onto the surface code. 
	This happens whenever $f\in \mathsf{F}_{c'}(\Gamma)\cup \mathsf{F}_{c''}(\Gamma)$. If  $f\in \mathsf{F}_c(\Gamma)$, then as will be shown, the right hand side of Eq.~\eqref{eq:syndrome-map} contains only one unknown variable. Thus we are able to map the syndromes from the color code to the surface codes. The locality of this map follows from the fact that $\pi$ is local so each $B_f^\sigma$ is mapped onto a collection of local stabilizers in each copy of the surface code. We shall now prove this formally in the rest of the section. We shall use the notation $s_f^\sigma$ to denote the syndrome of $B_f^\sigma$. First, we need to find the images of the stabilizers on the color code for which we define a dependent edge set.
	\begin{definition}[Dependent edge set]
		Let $D_X$ be the set of $c'$ colored edges on $\Gamma$ which are defined as the dependent edges for $X$ hopping operators under $\pi$. Likewise $D_Z$ represents the $c'$ colored edges which are dependent for $Z$ hopping operators.
	\end{definition}
	
	Let $f_{c''}^{\ni e}$ be the unique $c''$-face that contains the $c'$-edge $e$ in its boundary. 
	
	\begin{corollary}[Images of stabilizers]\label{co:stab-images}
		Let $f\in \mathsf{F}_{c'}(\Gamma)\cup \mathsf{F}_{c''}(\Gamma)$. Then
		\begin{eqnarray}
			\pi(B_f^Z)&= [B_{\tau(f)}^Z]_1\\
			\pi(B_f^X) &= [B_{\tau(f)}^Z]_2
		\end{eqnarray}
		If $f\in \mathsf{F}_{c}(\Gamma)$, then
		\begin{eqnarray}
			\pi(B_f^X) &= [A_{\tau(f)}^X]_1 \prod\limits_{e\in \partial(f) \cap D_X} [B_{\tau(f_{c''}^{\ni e})}^Z]_2 \label{eq:stab-xc}\\
			\pi(B_f^Z) &= [A_{\tau(f)}^X]_2 \prod\limits_{e\in \partial(f) \cap D_Z} [B_{\tau(f_{c''}^{\ni e})}^Z]_1 \label{eq:stab-zc}
		\end{eqnarray}
		where $D_X$ and $D_Z$ are the set of dependent $c'$-edges for magnetic and electric hopping operators respectively.
	\end{corollary}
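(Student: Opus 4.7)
The plan is to formalize the counting sketched in Lemma~\ref{lm:stabilizers} by expanding each face stabilizer as a product of elementary hopping operators and substituting their images from Eqs.~\eqref{eq:cedge}--\eqref{eq:xx-diag} together with the closed-form expressions for the dependent $c'$-hopping operators displayed just before Lemma~\ref{lm:split}. The two cases listed in the corollary have to be handled separately because faces in $\mathsf{F}_{c'}(\Gamma)\cup\mathsf{F}_{c''}(\Gamma)$ survive as faces of $\tau(\Gamma)$, whereas faces in $\mathsf{F}_c(\Gamma)$ collapse to single vertices.

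For $f\in\mathsf{F}_{c'}(\Gamma)\cup\mathsf{F}_{c''}(\Gamma)$, the boundary of $\tau(f)$ in $\tau(\Gamma)$ is exactly the set of $c$-coloured edges in $\partial(f)$. I would label the boundary vertices of $f$ so that $(v_{2i-1},v_{2i})$ is a $c$-edge and rewrite $B_f^Z=\prod_{i=1}^{\ell_f}Z_{v_{2i-1}}Z_{v_{2i}}$. Every factor is independent because the chosen dependent hopping operators lie only along $c'$-edges, so Eq.~\eqref{eq:cedge} gives $\pi(B_f^Z)=\prod_i[Z_{\tau(v_{2i})}]_1=[B_{\tau(f)}^Z]_1$. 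The identical argument applied to $B_f^X=\prod_i X_{v_{2i-1}}X_{v_{2i}}$ yields $\pi(B_f^X)=[B_{\tau(f)}^Z]_2$.

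For $f\in\mathsf{F}_c(\Gamma)$, the face contracts to the vertex $\tau(f)$, and the $c$-edges incident on the boundary vertices of $f$ are exactly the edges incident on $\tau(f)$ in $\tau(\Gamma)$, in bijection with $\mathsf{V}(f)$. I would decompose $B_f^X$ along the $c'$-edges of $\partial(f)$, noting that each such edge $e=(u,v)$ lies in a unique $c''$-face $f_{c''}^{\ni e}$. If $e\notin D_X$, Eq.~\eqref{eq:xx-diag} gives $\pi(X_uX_v)=[X_{\tau(u)}X_{\tau(v)}]_1$; if $e\in D_X$, the dependent-operator formula gives $\pi(X_uX_v)=[X_{\tau(u)}X_{\tau(v)}]_1\prod_w[Z_{\tau(w)}]_2$ with $w$ ranging over the boundary vertices of $f_{c''}^{\ni e}$. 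The crucial observation is that this correction factor is precisely $[B_{\tau(f_{c''}^{\ni e})}^Z]_2$, because $\prod_w[Z_{\tau(w)}]_2$ lists the $Z$-operator on each $c$-edge of $f_{c''}^{\ni e}$ exactly once, and those $c$-edges form the boundary of the face $\tau(f_{c''}^{\ni e})$ in $\tau(\Gamma)$. Multiplying over all $c'$-edges of $\partial(f)$, every vertex of $f$ sits in exactly one leading factor, so the leading factors collapse to $\prod_{w\in f}[X_{\tau(w)}]_1=[A_{\tau(f)}^X]_1$, yielding Eq.~\eqref{eq:stab-xc}. Equation~\eqref{eq:stab-zc} follows by the same argument applied to $B_f^Z=\prod Z_uZ_v$, using Eq.~\eqref{eq:zz-diag} and the dependent $\epsilon_{c'}$-hopping formula, which places the vertex operator on $\Gamma_2$ and the plaquette correction on $\Gamma_1$, with $D_X$ replaced by $D_Z$.

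The only non-routine step is recognising the correction factor attached to a dependent $c'$-hopping operator as a single plaquette stabilizer on the companion copy of the surface code. Once this is pinned down, the rest reduces to bookkeeping on the incidence between vertices and non-$c$-faces of $\Gamma$ and edges and faces of $\tau(\Gamma)$.
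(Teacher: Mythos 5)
Your proposal is correct and follows essentially the same route as the paper's proof: decompose each face stabilizer into elementary $c$-edge (resp. $c'$-edge) hopping pairs, substitute the images from Lemma~\ref{lm:hopping} and Eqs.~\eqref{eq:cedge}--\eqref{eq:xx-diag}, and observe that the extra factor contributed by a dependent $c'$-edge is exactly a plaquette stabilizer $[B_{\tau(f_{c''}^{\ni e})}^Z]$ on the companion copy. The only cosmetic point is that the product should run over the $c$-edges (equivalently over $i=1,\ldots,\ell_f$, not over all $2\ell_f$ boundary vertices, since $\tau(v_{2i-1})=\tau(v_{2i})$), which you implicitly acknowledge when you say each $c$-edge is listed exactly once.
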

	\begin{proof}
		Let $f\in \mathsf{F}_{c'}(\Gamma)\cup \mathsf{F}_{c''}(\Gamma)$.
		Let us number the vertices of $f$ from $1,2,...,2\ell_f$  such that $(1,2)$ is a $c$ colored edge. Let 
		$\tau(v_{2i}) = \tau(v_{2i-1}) = e_i$ where $e_i$ is an edge in $\tau_c(\Gamma)$. 
		Then 
		\begin{align*}
			\pi(B_{f}^X) &= \pi\left( \prod\limits_{i=1}^{\ell_f} X_{v_{2i-1}}X_{v_{2i}} \right) 
			= \prod\limits_{i=1}^{\ell_f} \pi(X_{v_{2i-1}}X_{v_{2i}})  \numberthis \label{eq:img-x-stab}\\
			&\overset{(a)}{=} \prod\limits_{i=1}^{\ell_f} [ Z_{\tau(v_{2i})}]_2 \overset{(b)}{=} \prod\limits_{i=1}^{\ell_f} [ Z_{e_i}]_2 \overset{(d)}{=} [B_{\tau(f)}^{Z}]_2 
		\end{align*}
		where 
		$(a)$ comes from the definition of $\pi$, $(b)$ from the definition of $\tau$ and $(d)$ from the structure of $\tau_c(\Gamma)$. Arguing similarly, we have $\pi(B_f^Z) = [B_{\tau(f)}^{Z}]_1$.

		Now let $f\in \mathsf{F}_c(\Gamma)$. Number the vertices from $1,2,..., 2\ell_f$ such that $(1,2)$ is a $c'$ coloured edge. 
		
		Then,
		\begin{align*}
			\pi(B^X_{f}) &= \pi\left( \prod\limits_{i=1}^{\ell_f} X_{v_{2i-1}}X_{v_{2i}} \right)
			= \prod\limits_{i=1}^{\ell_f} \pi(X_{v_{2i-1}}X_{v_{2i}})
		\end{align*}
		If $e=(v_{2i-1},v_{2i})$ is an independent $c'$-edge, then $\pi(X_{v_{2i-1}}X_{v_{2i}}) = [X_{\tau(v_{2i-1})} X_{\tau(v_{2i})}]_1$, otherwise $\pi(X_{v_{2i-1}}X_{v_{2i}}) = [X_{\tau(v_{2i-1})} X_{\tau(v_{2i})}]_1 [B_{\tau(f_{c''}^{\ni e})}^Z]_2 $. In this case the dependent edge $e$ is in $D_X$, we can write 
		\begin{align*}
			\pi(B^X_{f}) &= \prod\limits_{i=1}^{\ell_f} [X_{\tau(v_{2i-1})}X_{\tau(v_{2i})}]_1 \prod\limits_{e\in \partial(f) \cap D_X} [B_{\tau(f_{c''}^{\ni e})}^Z]_2 \\
			&= \prod\limits_{i=1}^{\ell_f} [X_{\tau(v_{2i-1})}X_{\tau(v_{2i})}]_1 \prod\limits_{e\in \partial(f) \cap D_X } \pi(B^X_{f_{c''}^{\ni e}})\\
			\end{align*}
		Now note that $\{\tau(v_{i})| 1\leq i \leq 2\ell_f\}$ is the set of edges incident on the vertex $\tau(f)$ in $\tau(\Gamma)$.
		This implies that 
		\begin{align*}
			\pi \left(B^X_{f} \right) &= [A_{\pi(f)}]_1 \prod\limits_{e\in \partial(f) \cap D_X } [B_{\tau(f_{c''}^{\ni e})}]_2
		\end{align*}
		We can prove Eq.~\eqref{eq:stab-zc} in a similar fashion. We omit the proof for brevity.
	\end{proof}
	
	Now we show how to compute the syndromes on the surface codes given the the syndromes on color codes.
	
	\begin{theorem}[Face syndromes  on surface codes]\label{th:synd-map-face}
		Let $f\in \mathsf{F}_{c'}(\Gamma)\cup \mathsf{F}_{c''}(\Gamma)$ and 
		$s_{f,i}$ be the syndrome associated with $[B_{\tau(f)}^Z]_i$ of the $i$th surface code.
		Then 
		\begin{eqnarray} 
			s_{f,1}= s_f^{Z} \quad \text{ and } \quad s_{f,2} = s_f^X,\label{eq:face-syn}
		\end{eqnarray} 
		where $s_f^\sigma$ is the syndrome for $B_f^\sigma$ on the color code. 
	\end{theorem}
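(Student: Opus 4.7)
The plan is to reduce this theorem directly to Corollary~\ref{co:stab-images} together with Lemma~\ref{lm:commutation}. The key observation is that when $f\in\mathsf{F}_{c'}(\Gamma)\cup\mathsf{F}_{c''}(\Gamma)$, Corollary~\ref{co:stab-images} already gives us the clean identifications $\pi(B_f^Z)=[B_{\tau(f)}^Z]_1$ and $\pi(B_f^X)=[B_{\tau(f)}^Z]_2$, i.e., each such color-code stabilizer is mapped to a \emph{single} plaquette stabilizer living entirely on one of the two surface copies. This is precisely the situation flagged in the paragraph preceding the theorem where Eq.~\eqref{eq:syndrome-map} degenerates to a single unknown.

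First I would fix an arbitrary error $E\in\mc{P}_{\mathsf{V}(\Gamma)}$ and let $s_f^Z, s_f^X$ be its syndromes with respect to $B_f^Z, B_f^X$, defined by $B_f^\sigma E = (-1)^{s_f^\sigma} E B_f^\sigma$. By Lemma~\ref{lm:commutation}, $\pi$ preserves commutation relations, so $\pi(B_f^\sigma)$ and $\pi(E)$ must have the same commutation character as $B_f^\sigma$ and $E$; that is,
\begin{equation*}
\pi(B_f^\sigma)\,\pi(E) \;=\; (-1)^{s_f^\sigma}\,\pi(E)\,\pi(B_f^\sigma).
\end{equation*}
Substituting the images from Corollary~\ref{co:stab-images} then yields
\begin{equation*}
[B_{\tau(f)}^Z]_1\,\pi(E) = (-1)^{s_f^Z}\,\pi(E)\,[B_{\tau(f)}^Z]_1, \qquad [B_{\tau(f)}^Z]_2\,\pi(E) = (-1)^{s_f^X}\,\pi(E)\,[B_{\tau(f)}^Z]_2,
\end{equation*}
which by definition of the surface-code syndrome gives $s_{f,1}=s_f^Z$ and $s_{f,2}=s_f^X$, as claimed.

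Since every step is an immediate consequence of results already proven, there is essentially no obstacle: the whole content of the theorem is the fact that faces in $\mathsf{F}_{c'}(\Gamma)\cup\mathsf{F}_{c''}(\Gamma)$ land on a single surface-code plaquette, which Corollary~\ref{co:stab-images} supplies. The only mild subtlety is to emphasize that even though we do not have access to $\pi(E)$ at decoding time, we do not need it: the syndrome equality is a statement about operators, and computing $s_{f,i}$ from $s_f^\sigma$ requires only a local copy operation on each face, confirming the locality condition demanded in the preceding discussion. The harder (and genuinely interesting) case, where $f\in\mathsf{F}_c(\Gamma)$ and the syndrome on the surface codes is a sum of multiple syndromes pulled from $\mathsf{F}_{c''}(\Gamma)$ via the dependent edges in $D_X,D_Z$, is left for a companion statement and is not addressed here.
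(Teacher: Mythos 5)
Your proof is correct and follows the same route the paper takes: invoke Corollary~\ref{co:stab-images} to identify $\pi(B_f^Z)=[B_{\tau(f)}^Z]_1$ and $\pi(B_f^X)=[B_{\tau(f)}^Z]_2$, then use Lemma~\ref{lm:commutation} to transfer the commutation character (and hence the syndrome bit) from the color code to the corresponding surface code. You have merely spelled out the operator-level commutation identities more explicitly than the paper does, which is a fine and slightly more careful presentation of the same argument.
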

	\begin{proof}
		
		From Corollary~\ref{co:stab-images} we have $\pi(B_f^X) = [B_{\tau(f)}^{Z}]_2$. By Lemma~\ref{lm:commutation}, the commutation relations of the Pauli operators on the color code are preserved
		under $\pi$. So the syndrome of an error $E$ with respect to $B_f^X$ must be exactly the same as the syndrome of $[B_{\tau(f)}^{Z}]_2$ with respect to $\pi(E)$. 
		
		Similarly, $\pi(B_f^Z) = [B_{\tau(f)}^{Z}]_1$. Therefore, the syndrome of $E$ with respect to $B_f^Z$ is identical to  the syndrome of $[B_{\tau(f)}^{Z}]_1$ with respect to $\pi(E)$. 
		
	\end{proof}
	
	\begin{theorem}[Vertex syndromes on surface codes]\label{th:synd-map-vertex}
		Let $f\in \mathsf{F}_{c}(\Gamma)$ and $s_{f,i}$ be the syndrome associated to $[A_{\tau(f)}^Z]_i$ of the $i$th surface code.
		Then 
		\begin{eqnarray} 
			s_{f,1} &= &s_{f}^X \bigoplus\limits_{e\in \partial(f) \cap D_X } s_{f_{c''}^{\ni e}}^X  \label{eq:vertex-syn-1} \\
			s_{f,2} &= &s_{f}^Z \bigoplus\limits_{e\in \partial(f) \cap D_Z } s_{f_{c''}^{\ni e}}^Z  \label{eq:vertex-syn-2} 
		\end{eqnarray} 
		where $s_f^\sigma$ is the syndrome for $B_f^\sigma$. 
	\end{theorem}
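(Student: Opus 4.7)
The plan is to leverage Corollary~\ref{co:stab-images}, which already expresses $\pi(B_f^X)$ and $\pi(B_f^Z)$ as a product of a single vertex stabilizer on one surface code together with a collection of face stabilizers on the other surface code. Combined with Lemma~\ref{lm:commutation} (preservation of commutation by $\pi$), this should reduce the theorem to syndrome-additivity in $\mathbb{F}_2$ plus an invocation of Theorem~\ref{th:synd-map-face} to re-label the face-syndrome contributions.

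The central tool I would first isolate is the following: for any error $E \in \mc{P}_{\mathsf{V}(\Gamma)}$ and any stabilizer $S$ of the color code, Lemma~\ref{lm:commutation} gives that the syndrome of $E$ with respect to $S$ equals the syndrome of $\pi(E)$ with respect to $\pi(S)$; and if $\pi(S) = S_1 S_2 \cdots S_k$ is a product of mutually commuting stabilizer generators on the surface codes, then this syndrome is the $\mathbb{F}_2$-sum of the syndromes of $\pi(E)$ against each $S_j$ individually.

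To obtain Eq.~\eqref{eq:vertex-syn-1}, I would apply this tool to $B_f^X$ for $f \in \mathsf{F}_c(\Gamma)$, substituting Eq.~\eqref{eq:stab-xc} of Corollary~\ref{co:stab-images}. The left-hand side is $s_f^X$, while the right-hand side decomposes as $s_{f,1}$ (the vertex syndrome of $[A_{\tau(f)}^X]_1$) plus $\sum_{e \in \partial(f) \cap D_X} s_{f_{c''}^{\ni e},\,2}$. Since $f_{c''}^{\ni e} \in \mathsf{F}_{c''}(\Gamma)$, Theorem~\ref{th:synd-map-face} identifies each such second-copy face syndrome with $s_{f_{c''}^{\ni e}}^X$ on the color code. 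Rearranging over $\mathbb{F}_2$ yields Eq.~\eqref{eq:vertex-syn-1}. For Eq.~\eqref{eq:vertex-syn-2} the argument is the mirror image: apply the same reasoning to $B_f^Z$ using Eq.~\eqref{eq:stab-zc}, noting that now the vertex stabilizer lives on $\Gamma_2$ and the accompanying face stabilizers on $\Gamma_1$, so Theorem~\ref{th:synd-map-face} converts their syndromes into $s_{f_{c''}^{\ni e}}^Z$.

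The main obstacle is purely notational bookkeeping: one must keep straight that $D_X$ is paired with face stabilizers on $\Gamma_2$ while $D_Z$ is paired with those on $\Gamma_1$, and that $e \mapsto f_{c''}^{\ni e}$ correctly picks out the unique $c''$-face whose syndrome contributes to each sum. No new calculation is required beyond these consistency checks, since all the computational content has already been packaged into Corollary~\ref{co:stab-images} and Theorem~\ref{th:synd-map-face}.
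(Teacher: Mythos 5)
Your proposal is correct and follows essentially the same route as the paper: both start from Corollary~\ref{co:stab-images}, use Lemma~\ref{lm:commutation} to transport syndromes across $\pi$, and identify the face-stabilizer contributions on $\Gamma_2$ (resp.\ $\Gamma_1$) with $c''$-face syndromes on the color code. The only cosmetic difference is bookkeeping order: the paper moves the face stabilizers to the left-hand side (expressing $[A_{\tau(f)}^X]_1$ directly as the image of a product of color-code stabilizers) before reading off the syndrome, whereas you read off syndromes from both sides of Eq.~\eqref{eq:stab-xc} and then rearrange over $\mathbb{F}_2$ using Theorem~\ref{th:synd-map-face}.
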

	\begin{proof}
		
		From Eq.~\eqref{eq:stab-zc} we obtain 
		\begin{align*}
			\pi \left(B^X_{f} \prod\limits_{e\in \partial(f) \cap D_X } B^X_{f_{c''}^{\ni e}} \right) &= [A^X_{\pi(f)}]_1\\
		\end{align*}
		From this we immediately obtain the syndrome $s_{f,1}$ as stated. The proof for $s_{f,2}$ is similar and we skip the details. 
	\end{proof}

	We make two important remarks about the induced map on the syndromes. 
	First, note that the syndrome computations are efficient as they only depend on the syndromes of faces adjacent to a given face on the color code. 
	Second, $\pi$ preserves the CSS structure of the color code. 
	A bit flip error causes non-zero syndromes on the vertices of the first copy and faces of the second copy of the surface code and vice versa for a phase flip error. 
	
	We illustrate the syndrome computation with the following example. 
	Figure \ref{fig:synd_map_ex} is an example of syndrome map for bit flip errors. It shows a portion of the square octagonal lattice defining the color code and the associated syndromes, and how they are mapped onto the surface codes. We show a similar computation for the phase flip errors in Fig.~\ref{fig:synd_map_ex_z}.
	\begin{figure}[h]
		\centering
		\includegraphics[scale=0.8]{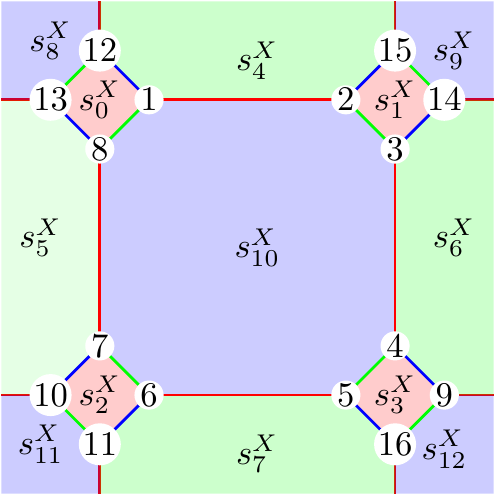} \\
        \vspace{0.2in}
		\includegraphics[scale=1]{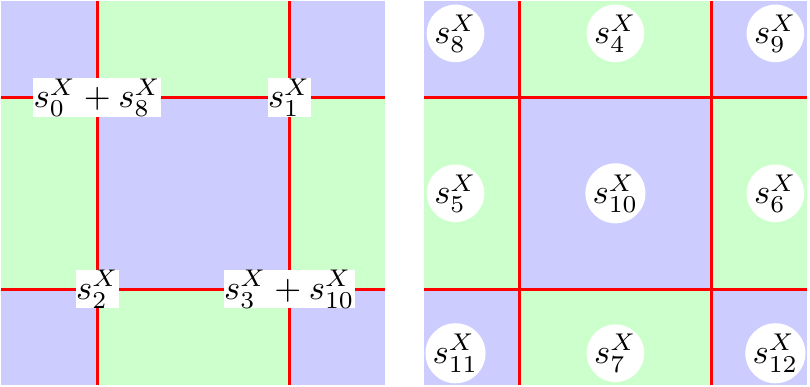}
		\caption{Example of syndrome map for $Z$ errors; we assume that  $X_4X_5$
        and $X_{12}Z_{13}$ are dependent hopping operators.}
		\label{fig:synd_map_ex}
	\end{figure}
	\begin{figure}[h]
		\centering
		\includegraphics[scale=0.8]{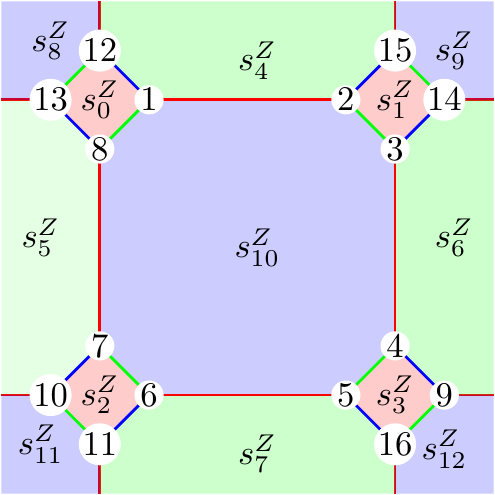} \\
                \vspace{0.2in}
		\includegraphics[scale=1]{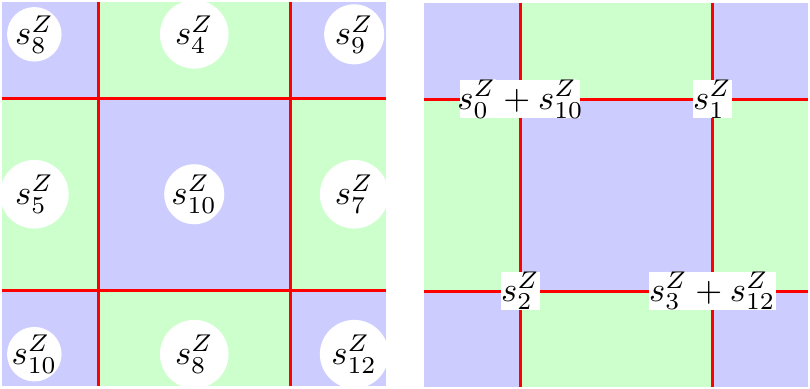}
		\caption{Example of syndrome map for $X$ errors; we assume that  $Z_1Z_8$
        and $Z_{9}Z_{16}$ are dependent hopping operators.}
		\label{fig:synd_map_ex_z}
	\end{figure}
	This example also illustrates how $\pi$ preserves the CSS structure of the color code. Syndromes due to $Z$ errors are mapped to syndromes due to $Z$ type errors only on the first surface code and $X$ errors only on the second surface code. 
	Before we present the simulation results we study the error model on the surface codes. 

	\subsection{Induced error model on surface codes}
	Here, we consider how the error model on the color code translates onto the copies of surface codes under $\pi$. 	
	Let the vertices (equivalently, qubits) on color code be numbered in a sequential order in each $c''$-face, such that $(1,2)$ is a $c'$ edge. 
	A $c'$-edge $e=(2i-1,2i)$ in the color code also corresponds to the qubits on the surface codes and so far we have indicated the edge on the $i$th copy as $[\tau(e)]_i$. 
	To simplify the notation we shall label the edges in the first copy as $2i-1$ and the edges in the second copy as $2i$.
	\begin{figure}[h]
		\centering
		\includegraphics[scale=0.5]{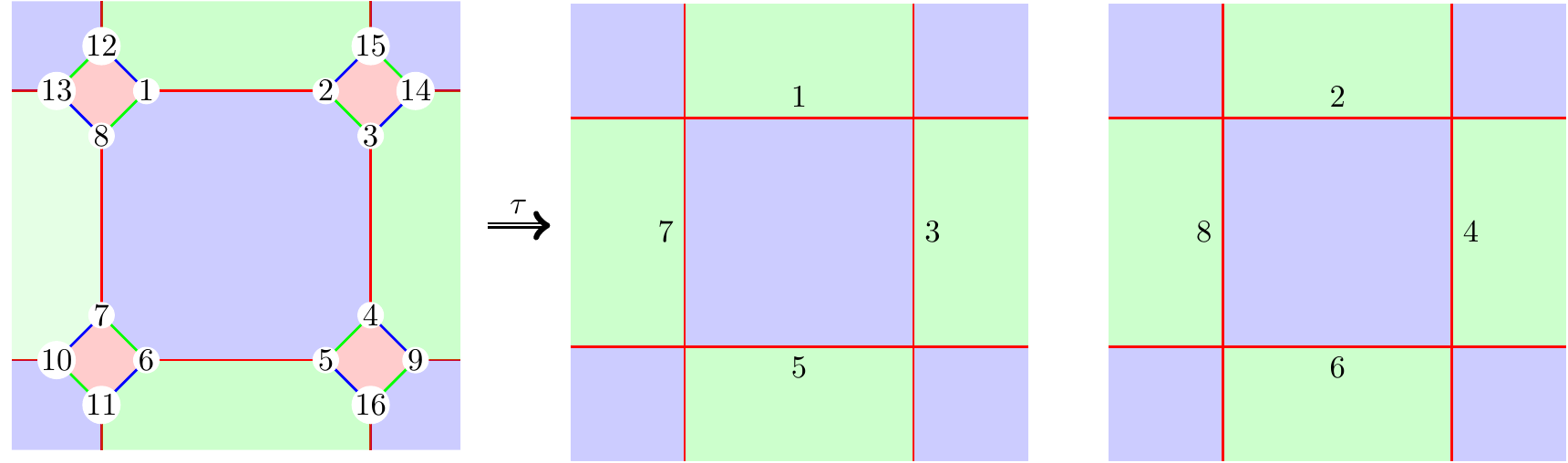}
		\caption{Labeling convention for qubits on the color code and the surface codes}
		\label{fig:num-qubit}
	\end{figure}
	
	Consider the example given in Fig.~\ref{fig:num-qubit}, as per the labeling therein, the single qubit errors are mapped as in Table~\ref{table:so-error-map}.
    \begin{table}[h]
    \centering
\caption{Mapping single qubit errors on square octagon lattice}
\label{table:so-error-map}
\begin{tabular}{c|c||c|c}
\hline
$E$&$\pi(E)$&$E$&$\pi(E)$\\
\hline
$X_{v_1}$&$X_{1}$&$Z_{v_1}$&$X_{2}Z_1Z_3$\\
$X_{v_2}$&$X_{1}Z_{2}$&$Z_{v_2}$&$X_2Z_3$\\
$X_{v_3}$&$X_{3}Z_{2}$&$Z_{v_3}$&$X_{4}Z_3$\\
$X_{v_4}$&$X_{3}Z_{2}Z_4$&$Z_{v_4}$&$X_{4}$\\
$X_{v_5}$&$X_{5}Z_6Z_8$&$Z_{v_5}$&$X_{6}$\\
$X_{v_6}$&$X_{5}Z_8$&$Z_{v_6}$&$X_{6}Z_5$\\
$X_{v_7}$&$X_{7}Z_8$&$Z_{v_7}$&$X_{8}Z_5$\\
$X_{v_8}$&$X_{7}$&$Z_{v_8}$&$X_{8}Z_{5}Z_7$\\
\hline
\end{tabular}
\end{table}
From Table~\ref{table:so-error-map} we infer that even if the error model on the color code is identical and independent on each qubit, the induced error model on the surface codes
	is not. 
	For the purposes of decoding, we should account for the correlations between the two copies of surface code
	as well the correlations within each copy. 
	If we plan to decode the two copies independently, then the error models marginalized on each copy are of relevance. 
    To first order we consider the marginal error probabilities on each copy, although the errors on the two copies of surface code are correlated.

	From Table.~\ref{table:so-error-map}, we can observe that for a $X_{1}$ error  to occur on the surface codes  either of $X_{v_1}$ or $X_{v_2}$ must occur on the color code but not both. If $p_1$ and $p_2$ are the probabilities of error $X_{v_1}$ and $X_{v_2}$ respectively then, the probability of error $X_{1}$ is $p_1(1-p_2)+(1-p_1)p_2$. In a similar fashion, we can find the marginal probability of error for each of the qubits on the surface codes.
	These are given in Table~\ref{table:so-error-prob}. In addition to the marginal probabilities for single qubit errors, we have  also have given expressions for multi-qubit errors on each copy.

    \begin{table}
    \centering
    \caption{Error probabilities on each surface code}
\label{table:so-error-prob}
\begin{tabular}{c|c|}
\hline
Error on surface code & Probability of error\\
\hline
$Z_1$ & $q$\\
$Z_3$ & $3q(1-q)^2+q^3$\\
$Z_5$ & $3q(1-q)^2+q^3$\\
$Z_7$ & $q$\\ 
\hline
$Z_1Z_3$ & $q(1-q)^2+q^3$\\
$Z_5Z_7$ & $q(1-q)^2+q^3$\\
$Z_1Z_5$ & $q(3q(1-q)^2+q^3)$ \\
$Z_1Z_7$ & $q^2$\\ 
$Z_3Z_5$ & $(3q(1-q)^2+q^3)^2$\\
$Z_3Z_7$ & $q(3q(1-q)^2+q^3)$\\ \hline
$Z_1Z_3Z_5$ & $(3q(1-q)^2+q^3)(q(1-q)^2+q^3)$\\
$Z_1Z_3Z_7$ & $q(q(1-q)^2+q^3)$\\
$Z_1Z_5Z_7$ &$q(q(1-q)^2+q^3)$\\
$Z_3Z_5Z_7$ & $(3q(1-q)^2+q^3)(q(1-q)^2+q^3)$\\\hline
$Z_1Z_3Z_5Z_7$ & $(q(1-q)^2+q^3)^2$\\\hline
$X_2,X_4,X_6,X_8$& $2q(1-q)$\\
\hline
\end{tabular}
\begin{tabular}{|c|c}
\hline
Error on surface code & Probability of error\\
\hline
$Z_2$ & $3p(1-p)^2+p^3$\\
$Z_4$ & $p$\\
$Z_6$ & $p$\\
$Z_8$ & $3p(1-p)^2+p^3$\\\hline
$Z_2Z_4$ & $p(1-p)^2+p^3$\\
$Z_6Z_8$ & $p(1-p)^2+p^3$\\
$Z_2Z_6$ & $p(3p(1-p)^2+p^3)$ \\
$Z_2Z_8$ & $(3p(1-p)^2+p^3)^2$\\
$Z_4Z_6$ & $p^2$\\
$Z_4Z_8$ & $p(3p(1-p)^2+p^3)$\\\hline
$Z_2Z_4Z_6$ & $p(p(1-p)^2+p^3)$\\
$Z_2Z_4Z_8$ & $(p(1-p)^2+p^3)(3p(1-p)^2+p^3)$\\
$Z_2Z_6Z_8$ &$(p(1-p)^2+p^3)(3p(1-p)^2+p^3)$\\
$Z_4Z_6Z_8$ & $p(p(1-p)^2+p^3)$\\ \hline
$Z_2Z_4Z_6Z_8$ & $(p(1-p)^2+p^3)^2$\\ \hline
$X_1,X_3,X_5,X_7$& $2p(1-p)$\\
\hline
\end{tabular}
\end{table}

	We assume a bit flip channel with uniform error probability on the color code, with the probability of a bit flip error being $p$. Let the $X$ and $Z$ error probability on edge $e$ of the $i$th toric code be $\tilde{p}_e^{(i)}$ and $\tilde{q}_e^{(i)}$ respectively. Then we obtain the following (marginal) 
	error probabilities on the surface code due to $\pi$. 
	\begin{subequations}
		\begin{eqnarray*}
			\tilde{p}_1^{(1)} = \tilde{p}_3^{(1)} =\tilde{p}_5^{(1)} = \tilde{p}_7^{(1)}&=& 2p (1-p)\\
			\tilde{q}_2^{(2)} = \tilde{q}_8^{(2)}&=& 3p (1-p)^2+ p^3 \\
			\tilde{q}_4^{(2)} = \tilde{q}_6^{(2)}&=& p
		\end{eqnarray*}
	\end{subequations}
	Note that that bit flip errors induce bit flip errors on one copy of the surface codes and phase flip error errors on the other copy. 
	We generalize the above result to find an expression for the induced error probability on the surface code for arbitrary color code lattice. 
	\begin{theorem}[ Single qubit  marginal  error probabilities under bit flip channel]
		\label{lm:probability_bound} 
		Let $p$ be the probability of $X$ error on color code, and $\tilde{p}_j^{(i)},\tilde{q}_j^{(i)}$ be the marginal probability of $X$ and $Z$ errors on edge $j$ of $i^{th}$ surface code. 
		Then for each $c''$-face $f$, 
		\begin{align}
			\tilde{p}_{2j-1}^{(1)} &= 2p(1-p) ; 1\leq j \leq \ell_f \label{eq:err-prob-x}\\
			\tilde{q}_{2j}^{(2)} &=\sum\limits_{i=1}^{m_f-j+1} {2m_f-2j+1\choose{2i-1}} p^{2i-1} (1-p)^{2m_f-2j+2-2i} ;\nonumber \\ & \hspace{100pt} 1\leq j\leq m_f \label{eq:err-prob-z1}\\
			\tilde{q}_{2j}^{(2)} &=\sum\limits_{i=1}^{j-m_f} {2j-2m_f-1 \choose{2i-1}} p^{2i-1} (1-p)^{2j-2m_f-2i} ;\nonumber\\ & \hspace{100pt} m_f<j\leq \ell_f \label{eq:err-prob-z2}
		\end{align}
	\end{theorem}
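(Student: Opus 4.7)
The plan is to combine the linearity of $\pi$ on Pauli operators with the closed-form expressions for $\pi(X_{v_k})$ given just after Algorithm~\ref{alg:tcc-projections}. Under an i.i.d.\ bit-flip channel the random error on the color code has the form $E=\prod_{v}X_v^{E_v}$ with $E_v\sim\mathrm{Bernoulli}(p)$ jointly independent, so $\pi(E)=\prod_{v}\pi(X_v)^{E_v}$. Hence the marginal probability that a given single-qubit Pauli appears on a specific edge of a specific surface copy equals the probability that an odd number of the indicators $E_v$ fire among the set of color-code qubits $v$ whose image $\pi(X_v)$ already carries that Pauli. The standard identity for the probability of an odd number of successes in $N$ independent $\mathrm{Bernoulli}(p)$ trials is exactly $\sum_{i\ge 1}\binom{N}{2i-1}p^{2i-1}(1-p)^{N-(2i-1)}$, so the whole argument reduces to determining $N$ edge by edge.

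Since every color-code qubit lies on a unique $c''$-face, the closed-form map is block-diagonal over $\mathsf{F}_{c''}(\Gamma)$ and the counting can be done face by face. Fix $f\in\mathsf{F}_{c''}(\Gamma)$ with boundary $v_1,\dots,v_{2\ell_f}$ and the convention that $(v_{2\ell_f},v_1)$ and $(v_{2m_f},v_{2m_f+1})$ are the designated dependent $c'$-edges. For Eq.~\eqref{eq:err-prob-x}, I inspect the four closed-form formulas for $\pi(X_{v_{2i-1}})$ and $\pi(X_{v_{2i}})$ in both the $i\le m_f$ and $i>m_f$ branches and note that the symbol $[X_{\tau(v_{2j})}]_1$ occurs only in $\pi(X_{v_{2j-1}})$ and $\pi(X_{v_{2j}})$, never in any other $\pi(X_{v_k})$. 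Thus $N=2$ and the odd-count probability collapses to $2p(1-p)$.

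For Eqs.~\eqref{eq:err-prob-z1}--\eqref{eq:err-prob-z2} I enumerate the color-code qubits $v_k$ whose image contains $[Z_{\tau(v_{2j})}]_2$ on $\Gamma_2$. Reading off the $[\cdots]_2$ factors of $\pi(X_{v_{2i-1}})$ and $\pi(X_{v_{2i}})$ and doing a case split on whether $k\le m_f$ or $k>m_f$, one finds that in the regime $1\le j\le m_f$ the contributing set is $\{v_{2j},v_{2j+1},\dots,v_{2m_f}\}$, of cardinality $N=2m_f-2j+1$, while in the regime $m_f<j\le\ell_f$ it is $\{v_{2m_f+1},\dots,v_{2j-1}\}$, of cardinality $N=2j-2m_f-1$. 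Substituting these two values of $N$ into the binomial odd-sum yields Eqs.~\eqref{eq:err-prob-z1} and \eqref{eq:err-prob-z2} respectively.

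The principal obstacle is simply the index bookkeeping in that case split: the contributing set straddles the two branches of the closed form and one must verify correctly which indices appear in the $Z_{\tau(\cdot)}$ tower on each side of the distinguished dependent edge $(v_{2m_f},v_{2m_f+1})$. Once the counts $N$ are pinned down, both formulas follow immediately from the standard binomial identity for the probability of an odd number of successes, with no additional combinatorics required.
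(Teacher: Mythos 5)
Your proposal is correct and takes essentially the same approach as the paper: identify, for each surface-code edge, the set of color-code qubits whose $X$-image contributes to that edge (size $2$ for $X$ on copy~1, and $\{v_{2j},\dots,v_{2m_f}\}$ or $\{v_{2m_f+1},\dots,v_{2j-1}\}$ for $Z$ on copy~2), then apply the binomial odd-parity sum; the paper reads these sets off from Algorithm~\ref{alg:tcc-projections} while you read them off from the closed-form expressions for $\pi(X_{v_k})$, which amounts to the same thing.
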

	\begin{proof}
		From Algorithm~\ref{alg:tcc-projections}, we see that an $X$ error on qubit $e_{2j-1}$ happens if and only if there is a $X$ error on qubits $v_{2j-1}$ or $v_{2j}$. If both qubits $v_{2j-1}$ and $v_{2j}$ carry an $X$ error , then the effect gets cancelled out in $e_{2j-1}$ and hence no error will be seen. With this condition, probability of having an $X$ error in qubit $e_{2j-1}$ in the first toric code is equal to probability of having an $X$ error in any one of the qubits $v_{2j-1}$ and $v_{2j}$. This probability is $2p(1-p)$ which is given in Eq.~\eqref{eq:err-prob-x}.
		
		Now consider a qubit $e_{2j}$ on the second toric code. A phase flip error occurs in this qubit if any one of the qubits $v_{2j},v_{2j+1} ,\hdots v_{2m_f}$ (a total of $2m_f-2j+1$ qubits) carries a single qubit $X$ error. Any even combination of them will cancel out while the odd combination survives. Therefore, the probability of error $Z_{e_{2j}}$ is as follows
		
		\begin{align*}
			\tilde{q}_{2j}^{(2)} &=\sum\limits_{i=1}^{m_f-j+1} {2m_f-2j+1\choose{2i-1}} p^{2i-1} (1-p)^{2m_f-2j+2-2i}
		\end{align*}
		
		In a similar way, if $j >m_f$, then $Z_{e_{2j}}$ occurs if there are an odd number of $X$ errors in $v_{2m_f+1},v_{2m_f+2} ,\hdots, v_{2j-1}$. From this we obtain Eq.~\eqref{eq:err-prob-z2}.
	\end{proof}
	
	We now make a few observations regarding the induced error model. First, as already pointed out, the error model on the color code is identical but the error model on the surface codes is not identical. Second, the error model is independent on the color code while errors on the surface codes are correlated. Third, while we have only bit flip errors on the color code, we have bit flip errors on one of the copies of surface code while the other copy has only phase flip errors. 	Finally, we can also derive the following bounds on the marginal error probabilities on the surface codes. 
	\begin{align*}
		\tilde{p}^{(1)} &= 2p(1-p) \\
		p &\leq \tilde{q}^{(2)} \leq \sum\limits_{i=1}^{m^*} {2m^*-1\choose{2i-1}} p^{2i-1} (1-p)^{2m^*-2i}
	\end{align*}
	where 
	\begin{eqnarray}
		m^* = \max_{f\in \mathsf{F}_{c''}(\Gamma)} \{m_f, \ell_f-m_f\} \label{eq:mstar}
	\end{eqnarray}
	
	Assuming phase flip errors on the color code we have the following error model for the surface codes. The proof is similar to Theorem~\ref{lm:probability_bound}, we omit the details. 
	
	\begin{theorem}[Single qubit  marginal  error probabilities  under phase flip channel]
		\label{lm:induced-Z-errors}
		Let $q$ be the probability of $Z$ error on color code, and $\tilde{p}_j^{(i)},\tilde{q}_j^{(i)}$ be the marginal probabilities of $X$ and $Z$ errors on edge $j$ of $i^{th}$ toric code. Then for each $c''$-face $f$, 
		\begin{align*}
			\tilde{p}_{2j}^{(2)} &= 2q(1-q) ; 1\leq j \leq \ell_f \\
			\tilde{q}_{2j-1}^{(1)} &=\sum\limits_{i=1}^{j+1} {2j-1\choose{2i-1}} q^{2i-1} (1-q)^{2j-2i} ; 1\leq j\leq m_f\\
			\tilde{q}_{2j-1}^{(1)} &=\sum\limits_{i=1}^{\ell_f-j+1} {2\ell_f-2j+1 \choose{2i-1}} q^{2i-1} (1-q)^{2\ell_f-2j-2i+2} ;\\ & \hspace{100pt} m_f<j\leq \ell_f\\
		\end{align*}
	\end{theorem}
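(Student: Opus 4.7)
The plan is to mirror the argument of Theorem~\ref{lm:probability_bound}, exchanging the roles of $X$ and $Z$ errors and of the two copies of the surface code, and reading off the required counts from the explicit form of $\pi(Z_{v_k})$ given immediately after Algorithm~\ref{alg:tcc-projections}. Since $\pi$ sends a phase flip on the color code to a tensor product of $Z$-type factors on the first copy and $X$-type factors on the second copy, the marginal single-qubit error on any edge of either surface code is determined by the parity of phase flips on those color code qubits whose image contains the corresponding Pauli factor.

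For the first identity, I would observe that $[X_{\tau(v_{2j})}]_2$ appears in $\pi(Z_{v_k})$ only for $k \in \{2j-1, 2j\}$. Hence an $X$ error on edge $2j$ of the second surface code is induced exactly when precisely one of the two color code qubits $v_{2j-1}, v_{2j}$ suffers a phase flip, giving marginal probability $2q(1-q)$ as claimed.

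For the remaining two identities I would enumerate the color code qubits whose $\pi$-image contributes a factor $[Z_{\tau(v_{2j})}]_1$ (equivalently, a $Z$ factor on edge $2j-1$ of the first copy). A direct inspection of the two blocks of explicit formulas for $\pi(Z_{v_{2i-1}})$ and $\pi(Z_{v_{2i}})$ shows that for $1 \leq j \leq m_f$ the contributing qubits are $v_1, v_2, \ldots, v_{2j-1}$, a total of $2j-1$ qubits, while for $m_f < j \leq \ell_f$ they are $v_{2j}, v_{2j+1}, \ldots, v_{2\ell_f}$, a total of $2\ell_f - 2j + 1$ qubits. Since the induced $Z$ error on edge $2j-1$ appears if and only if an odd number of phase flips occurs among the contributing qubits, summing over all odd-weight patterns under the i.i.d.\ Bernoulli($q$) model gives the stated binomial sums.

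The main (and only nontrivial) obstacle is the bookkeeping in the preceding step, because the lists of contributing qubits are asymmetric in $m_f$ and reflect the particular choice of dependent hopping operators $Z_{v_1}Z_{v_{2\ell_f}}$ and $X_{v_{2m_f}}X_{v_{2m_f+1}}$ baked into the explicit map. Once the contributing sets are correctly identified, the remainder is routine and proceeds exactly as in the bit flip case of Theorem~\ref{lm:probability_bound}.
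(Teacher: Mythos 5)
Your proof is correct and follows precisely the approach the paper intends (the paper itself only says "the proof is similar to Theorem~\ref{lm:probability_bound}, we omit the details"). You correctly identify from the explicit map that $[X_{\tau(v_{2j})}]_2$ appears only in $\pi(Z_{v_{2j-1}})$ and $\pi(Z_{v_{2j}})$, and that the color-code qubits contributing a $Z$ factor on edge $2j-1$ of the first copy are $v_1,\ldots,v_{2j-1}$ when $j\leq m_f$ and $v_{2j},\ldots,v_{2\ell_f}$ when $j>m_f$; the odd-parity Bernoulli computation then gives the stated sums (note that the theorem's upper summation limit $j+1$ in the first case is vacuously equivalent to $j$, since $\binom{2j-1}{2j+1}=0$, so this is consistent with your count of $2j-1$ contributing qubits).
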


	\subsection{Performance over the bit flip channel}
	
	We now report the performance of a hard decision decoder for the color codes on the square octagonal lattice. 
	Our decoder is based on the equivalence presented in Section~\ref{sec:map}. 
	The color code is decoded via surface codes. Algorithm~\ref{alg:decoder} contains the details.

	\begin{algorithm}[h]
		\caption{{\ensuremath{\mbox{ Decoding 2D color code via surface codes}}}}\label{alg:decoder}
		\begin{algorithmic}[1]
			\REQUIRE {A 2-colex $\Gamma$ and the syndrome on $\Gamma$.}
			\ENSURE { Error estimate $\hat{E}$}

			\STATE Project the syndromes obtained on the color code using Eq.~\eqref{eq:face-syn}--
			\eqref{eq:vertex-syn-2}. 
			
			\STATE Decode both the surface codes and obtain the error estimate on toric code $\hat{E}_s$.
			
			\STATE Lift $\hat{E}_s$ to color code using Eqs.~\eqref{eq:inv_mapz1}--\eqref{eq:inv_mapx4}. 
			$$\hat{E} = \pi^{-1}(\hat{E}_s)$$
		\end{algorithmic}
	\end{algorithm}

	The syndromes are measured in the color code lattice and mapped to toric code using theorem \ref{th:synd-map-face} and \ref{th:synd-map-vertex}. Next using this syndrome information on toric code, toric code decoder is used to detect the error estimate on toric cdoes. The errors on the surface code are decoded using minimum weight perfect matching algorithm
	\cite{dennis01}. 
	Finally this error estimate is lifted back to color code using 
	Eqs.~\eqref{eq:inv_mapz1}-\eqref{eq:inv_mapx4}. 
	
	Simulation results for the bit flip channel are shown in Fig.~\ref{fig:sim_4}. For this simulation we assumed 
	$\mathsf{F}_{c''}(\Gamma)$ to be $c''$-colored octagons   and $m_f=2$ for all $ f \in \mathsf{F}_{c''}(\Gamma)$. 
	\begin{figure}[htb]
			\centering
		\includegraphics[scale=0.85]{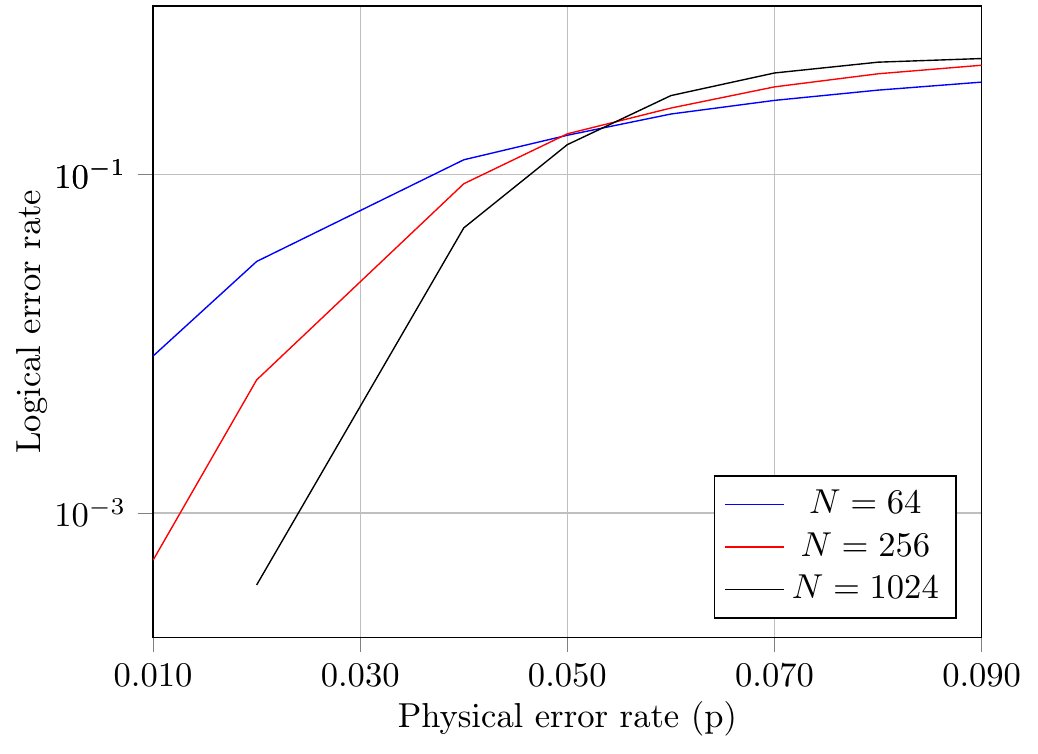}
		\caption{Simulation results for  color code on square octagon lattice for the bit flip channel.}
		\label{fig:sim_4}
	\end{figure}
	We obtained a threshold of approximately $5.3\%$. 
	Note that the perfect matching decoder on the surface code assumes independent and uniform error model.
	However, this will be suboptimal in the present case because the noise on the surface code is correlated and not uniform.
	An improvement can be obtained by modifying the matching decoder to account for this nonuniform error model. 
	Alternatively, we can  incorporate soft decision decoding as in the case of \cite{bombin12}.

	We made an interesting observation during our simulations. The threshold does not seem to vary significantly with respect which $c'$-edges are chosen to be dependent.

\subsection{Performance over the quantum erasure channel}
	In this section we study the performance of color codes over the quantum erasure channel. As in the case of the bit flip channel, we shall apply the equivalence between color codes and copies of surface codes for decoding color codes over the erasure channel. 
	
	We model the erasure channel as follows  (see also \cite{delfosse17}). We replace each of the erased qubits in the color code by a qubit in the completely mixed state $I/2$. Then, we perform the syndrome measurement. Since 
	$I/2= \frac{\rho+X \rho X  + Z \rho Z  + Y\rho Y }{4}$, 
	for any single qubit density matrix $\rho$. We can project this into any of single qubit Pauli errors  where each error occurs  with equal probability. We can then measure the syndrome on the color code and map it onto the surface codes. Finally, we estimate the error using the projected syndrome on the surface codes and lift the error back to the color code. 
	
	A subtle point must be borne in mind when dealing with the erasures. While the locations of erasures on the color code can be explicitly specified, the locations of the erasures on the copies of the induced surface code are not obvious. An erasure on the color code can affect multiple qubits of the surface codes.
	
	A {naive method} to map the erasure locations onto the surface codes is as follows. Suppose the $i^{\text{th}}$ qubit is erased. Then, on the surface codes erase those qubits which are in the support of $\pi(X_i)$, $\pi(Z_i) $ or $\pi(Y_i)$. 
	However, observe that $\supp(\pi(Y_i)) = \supp(\pi(X_i))\cup \supp(\pi(Z_i))$ and hence for any Pauli error $P_{\mathcal{E}}$ acting non trivially on the set of qubits $\mathcal{E}$,  $\supp(\pi(P_{\mathcal{E}})) \subseteq \cup_{i\in \mathcal{E}} \supp(\pi(Y_i)) $. 
	Suppose $\mathcal{E}$ is the erasure pattern and 
	$\pi (\mathcal{E}) $ the set of qubits on which the erasure $\mathcal{E}$ is mapped. Then, we can extend the map $\pi$ from color codes to surface codes to erasures as follows: 
	\begin{eqnarray}
		\pi(\mathcal{E}) = \cup_{i\in \mathcal{E}} \supp(\pi(Y_i)) .\label{eq:erasure-map-simple}
	\end{eqnarray}
	\begin{figure}[t]
		\centering
		\includegraphics[scale=0.5]{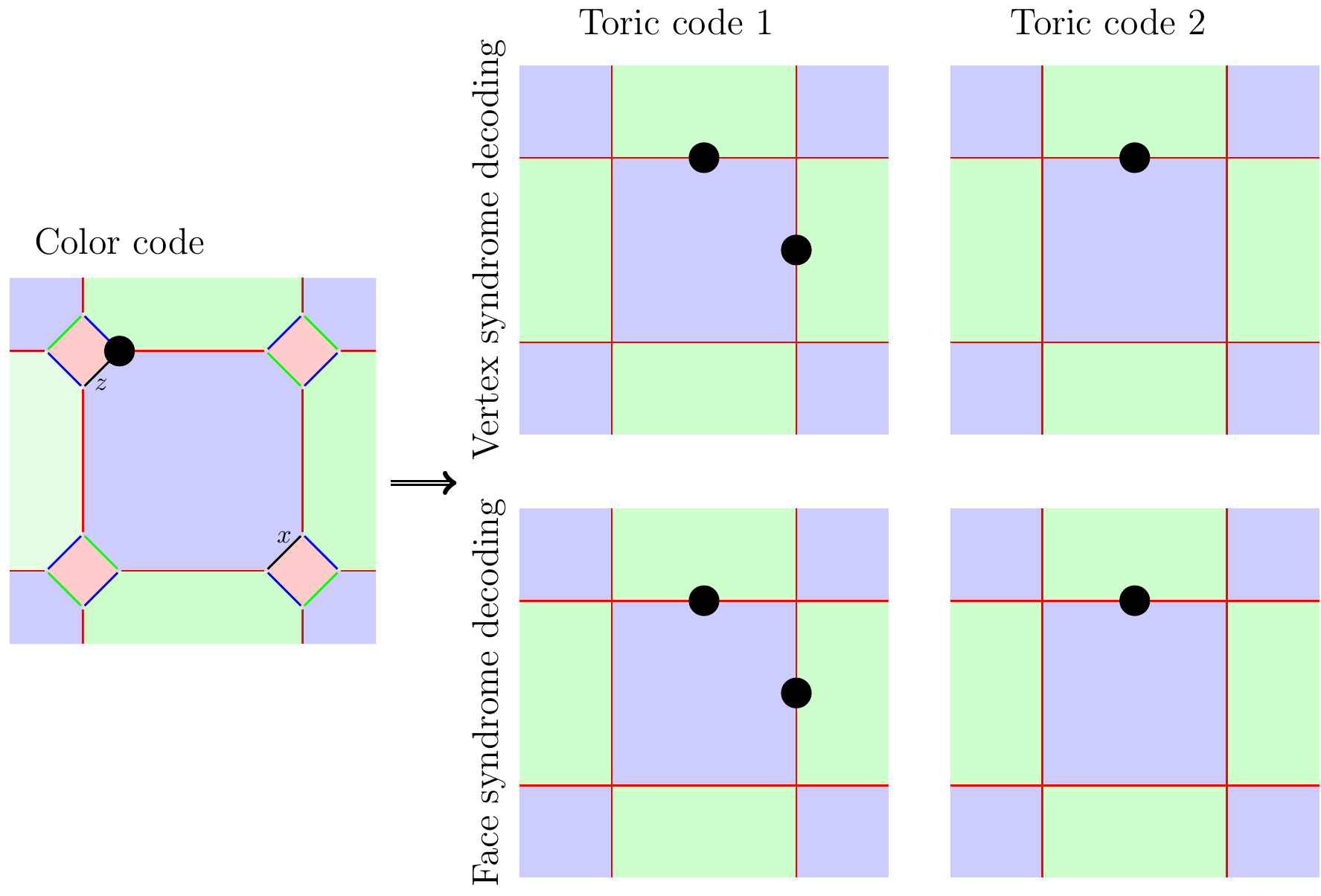}
		\caption{Map of a single erasure on color code based on the  naive method given in Eq.~\eqref{eq:erasure-map-simple}.}
		\label{fig:naiv-erasure-map}
	\end{figure}
	
	While this map is conceptually simple,  it is pessimistic placing more erasures on the surface codes than required. 
	Certain combinations of erasures lead to a smaller set of erasures on the surface codes than this map requires.  
	A decoder based on such an erasure mapping does not perform well. We observed a threshold of about {$21\%$} with such an approach.
	
	{
		Suppose a qubit is erased.  Then we replace this qubit with completely mixed state and measure stabilizer generators. As we discussed earlier, this measurement 
		include one of the Pauli errors on the qubit with  equal probabilities. Our objective is to estimate this error from the measured syndrome and position of erasures.
		Since we are dealing with a CSS code and two copies of surface codes, both bit flip errors  and phase flip errors lead 
		to two instances of decoding each. 
		In the naive map, we used same erasure positions in each of these instances.
		See Fig.~\ref{fig:naiv-erasure-map} for an illustration of this when one qubit is erased. 
		
		Suppose that a $X$  error was induced by the stabilizer measurement, then this error will cause a nonzero syndrome on only two instances of decoding.
		So it would be unnecessary to place erasures on the remaining instances. 
		Similarly if a $Z$ error is induced, then it will be decoded using the remaining two instances and there is no need to place the erasures on those instances which are used for bit flip decoding. 
		}
	
	In other words, we map the erasures to each of the instances consistent with the image of error on that decoding instance. 
	For the decoding instances corresponding to bit flip errors, we have 
	
	\begin{eqnarray}
		\pi(\mathcal{E}) = \cup_{i \in \mathcal{E}} \supp (\pi (X_i) ).\label{eq:erasure-map-x}
	\end{eqnarray}
	For the decoding instances corresponding to the phase flip errors, 
	we have 
	\begin{eqnarray}
		\pi(\mathcal{E}) =  \cup_{i \in \mathcal{E}} \supp (\pi (Z_i) ).\label{eq:erasure-map-z}
	\end{eqnarray}
	
	\begin{figure}[t]
		\centering
		\includegraphics[scale=0.5]{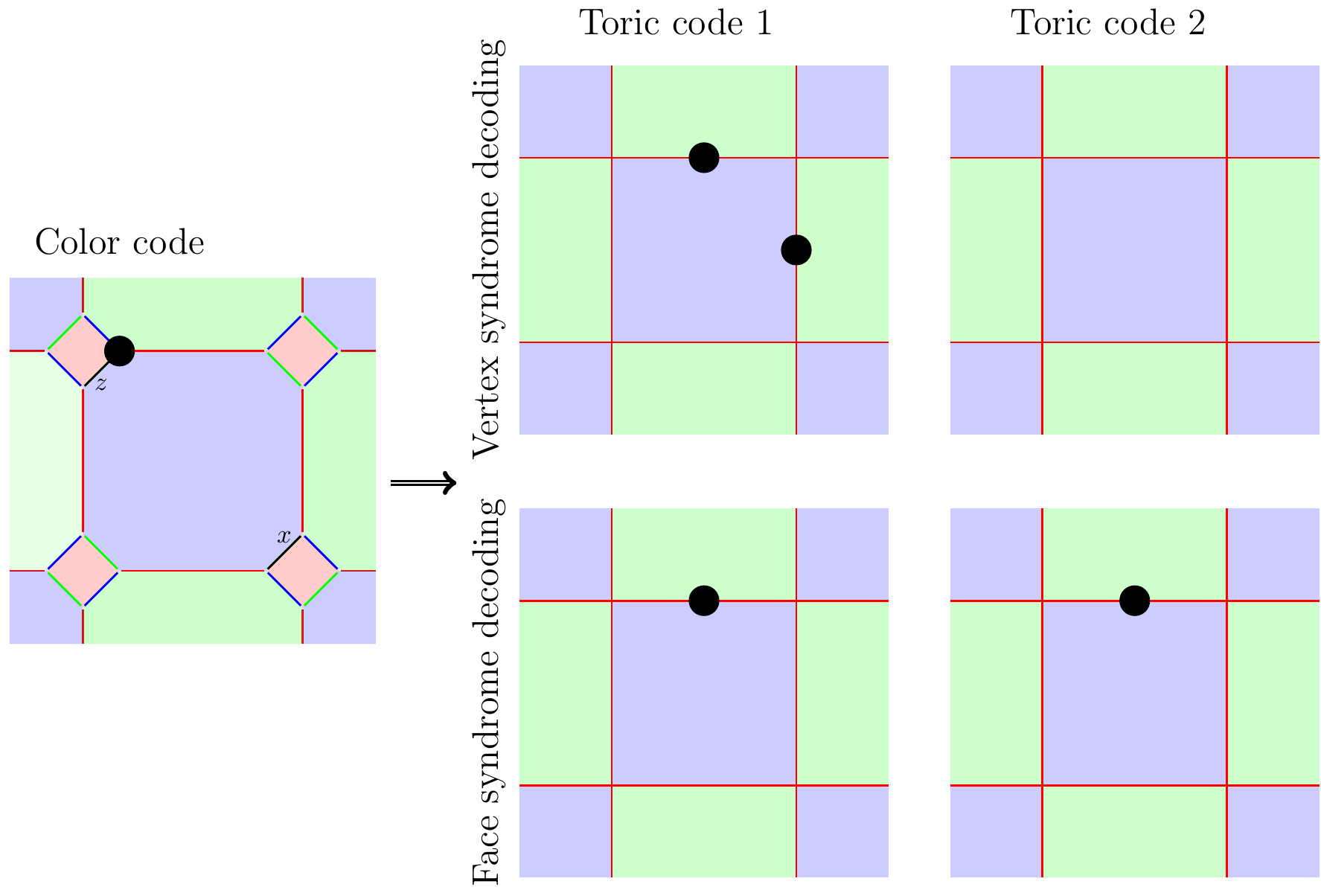}
		\caption{Map of a single erasure on color code based on the improved method given in Eqs.~\eqref{eq:erasure-map-x}-\eqref{eq:erasure-map-z}.}
		\label{fig:simple-erasure-map}
	\end{figure}

	\begin{algorithm}
		\caption{{\ensuremath{\mbox{ Erasure decoding of 2D color codes }}}}\label{alg:erasure-decoder}
		\begin{algorithmic}[1]
			\REQUIRE {A 2-colex $\Gamma$ and an erasure pattern $\mathcal{E}$.}
			\ENSURE { Error estimate $\hat{E}$}
			
			\STATE Replace each erased qubit by a qubit in the completely mixed state. 
			\STATE Perform syndrome measurement on the color code.
			\STATE Map the syndromes obtained on the color code using Eq.~\eqref{eq:face-syn}--
			\eqref{eq:vertex-syn-2}. 
			\STATE // Map erasures onto the surface codes as follows. 
			\STATE For face syndromes: If there an erasure on qubit $v_{2i-1}$ or $v_{2i}$ place an erasure on qubit $\tau(v_{2i-1})$
			on both copies of the surface codes.
			\STATE For vertex syndromes: 
			\STATE //Map erasures onto the second copy of surface code
			\IF{$1\leq i\leq 2m_f$ }
			\IF{there is an erasure among $v_{2i},v_{2i+1}, \ldots, v_{2m_f}$,} place an erasure on $\tau(v_{2i-1})$ \ENDIF
			\ELSE
			\IF{there is an erasure among $v_{2m_f+1},v_{2m_f+2}, \ldots v_{2i-1}$,} place an erasure on $\tau(v_{2i-1})$ \ENDIF 
			\ENDIF 
			\STATE //Map erasures onto the first copy of surface code
			\IF{$1\leq i\leq 2m_f$ }
			\IF{there is an erasure among $v_1,v_2, \hdots v_{2i-1}$,}  place an erasure on $\tau(v_{2i-1})$ \ENDIF
			\ELSE
			\IF{there is an erasure among $v_{2i},v_{2i+1}, \hdots v_{2l_f}$,} place an erasure on $\tau(v_{2i-1})$\ENDIF
			\ENDIF 
			\STATE Decode both the surface codes and obtain the error estimate on surface codes $\hat{E}_s$.
			
			\STATE Lift $\hat{E}_s$ to color code using Eqs.~\eqref{eq:inv_mapz1}--\eqref{eq:inv_mapx4}. 
			$$\hat{E} = \pi^{-1}(\hat{E}_s)$$
		\end{algorithmic}
	\end{algorithm}
	
	This method is described in Algorithm~\ref{alg:erasure-decoder} and gives a threshold of {$30.8\%$}. Simulation results are shown in Fig.~\ref{fig:erase-thresh}.  Just before the submission we came to know of a decoder for the color codes which gives a threshold of 46\% \cite{vidola18}. This suggests
 that the performance of the method discussed here can be improved. One approach to improve the performance of our decoders is to jointly decoding the two copies of the surface codes.

	\begin{figure}
			\centering
		\includegraphics[scale=0.85]{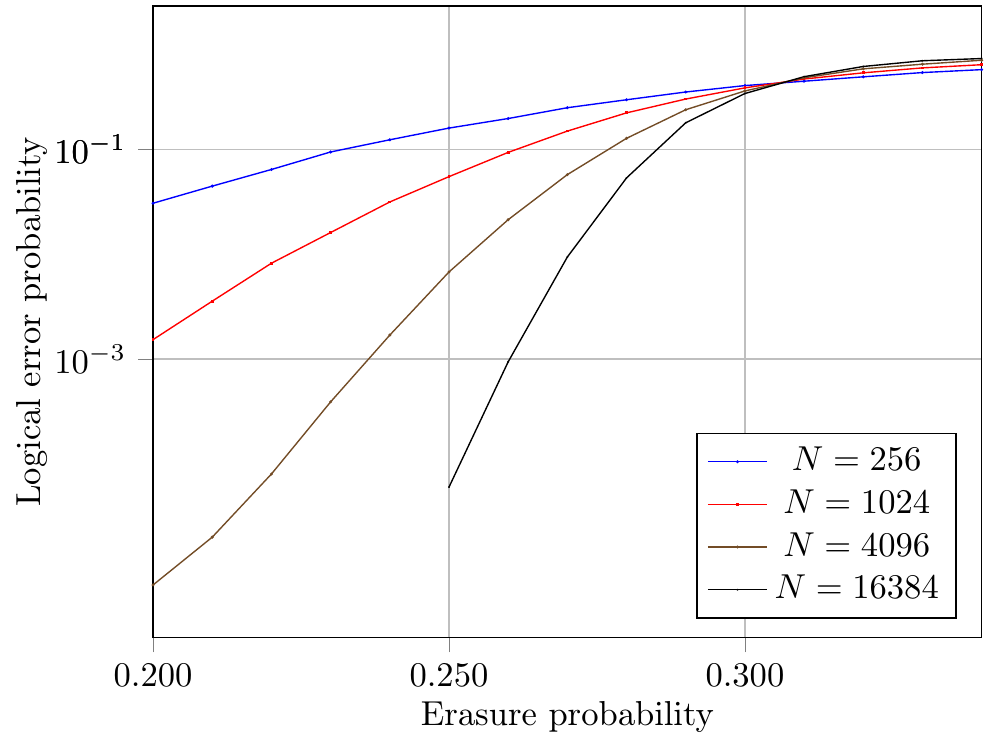}
		\caption{Threshold for quantum erasure channel of color code using the erasure map in Eq.~\eqref{eq:erasure-map-x}--\eqref{eq:erasure-map-z}.}
		\label{fig:erase-thresh}
	\end{figure}
	
	\section{Circuits for Code Transformation} \label{sec:ckts}
	In this section, we give efficient local Clifford circuits for mapping 2D color codes to surface codes. These circuits will be useful in the context of fault tolerant quantum computation. These circuits use only CNOT, H and swap gates. 
	We shall simplify notation slightly in this section for clarity of presentation. 
	\begin{remark}[Notation]\label{rm:notation}
		We continue to assume the color code is defined on a 2-colex $\Gamma$ and the related surface codes are defined on $\tau_c(\Gamma)$. For each face $f\in \mathsf{F}_{c''}(\Gamma)$ we assume that there are $2\ell_f$ vertices (qubits) and number the qubits as $\{1,\ldots 2\ell_f \}$. The dependent $X$ and $Z$ hopping operator are associated to the edges $(1,2\ell_f)$ and $(2m_f, 2m_f+1)$ respectively. 
		Recall that we used the notation $[ T]_i$ to indicate $T$ acts on the $i$th copy. Since we use the odd labels for the qubits on the first surface code and even labels for the second copy we shall drop the brackets and subscript. 
	\end{remark}
	
	With this revised notation, we obtain the following expressions for the map. For $1 \leq i \leq m_f$, we have
	\begin{align}
		\pi(Z_{{2i-1}}) &= X_{2i}\prod_{j=i}^{m_f} Z_{2j-1} \label{eq:zmap-1-a}\\
		\pi(Z_{{2i}}) 
		&=X_{2i} \prod_{j=i+1}^{m_f} Z_{2j-1} \label{eq:zmap-1-b}
	\end{align}
	and for $m_f <i\leq \ell_f$
	\begin{align}
		\pi(Z_{{2i-1}}) &= X_{2i} \prod_{j=m_f+1}^{i-1} Z_{2j-1}  \label{eq:zmap-2-a}\\
		\pi(Z_{{2i}}) &= X_{2i} \prod_{j=m_f+1}^{i} Z_{2j-1} \label{eq:zmap-2-b} 
	\end{align}
	To transform $Z_i$ as defined by Eqs.~\eqref{eq:zmap-1-a}--\eqref{eq:zmap-2-b} we introduce the gates $U_{2i}$ and $V_{2i+1}$:
	\begin{eqnarray}
		\text{U}_{2i}& = &\text{CX}_{1}^{2i-1}\text{CX}_{2}^{2i-1} \cdots \text{CX}_{2i-2}^{2i-1}\text{CX}_{2i-1}^{2i} 
		\label{eq:u2i}\\
		V_{2i+1}& = &\text{CX}_{2l_f}^{2i+2}\text{CX}_{2l_f-1}^{2i+2} \hdots \text{CX}_{2i+3}^{2i+2}\text{CX}_{2i+2}^{2i+1} 
		\label{eq:v2i}
	\end{eqnarray}
	where $\text{CX}_t^c$ indicates a CNOT gate between qubits $c$ and $t$, with the control on qubit $c$. The gates are illustrated in Fig.~\ref{fig:uv}. Observe that $U_{j}$ commutes with $V_{k}$ if $j<k$.
	\begin{figure}
			\centering
		\includegraphics[scale=0.75]{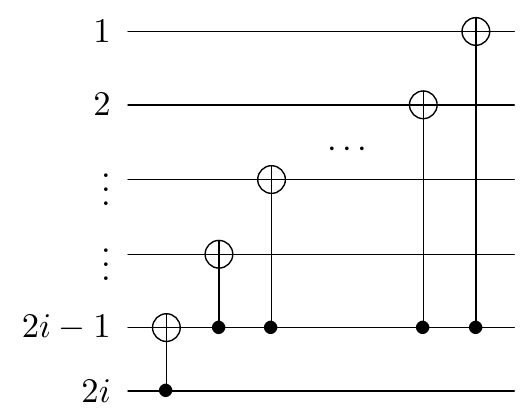}
		~ \includegraphics[scale=0.75]{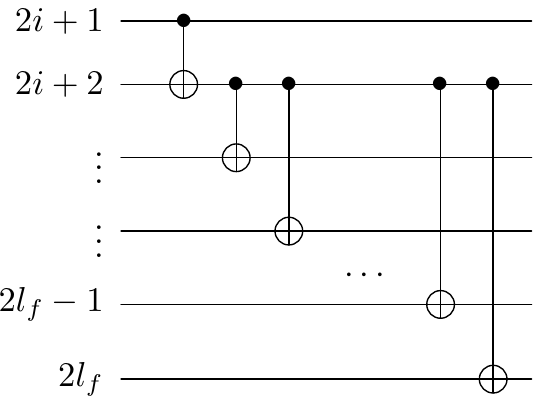}
		\caption{$U_{2i}$ and $V_{2i+1}$}\label{fig:uv}
	\end{figure}
	\begin{lemma}[Properties of $U_{2i}$] 
		\label{lm:u2i}
		The following are some of the properties of $U_{2i}$.
		\begin{subequations}
			\begin{eqnarray}
				U_{2i} Z_{j} U_{2i}^{\dagger} &=& Z_{j} Z_{2i-1} ; j < 2i-1 \label{slm:u2i_1}\\
				U_{2i} Z_{j} U_{2i}^{\dagger} &=& Z_{2i-1} Z_{2i} ; j = 2i-1\label{slm:u2i_2} \\
				U_{2i} Z_{j} U_{2i}^{\dagger} &=& Z_{j} ; j \geq 2i \label{slm:u2i_3}
			\end{eqnarray}
		\end{subequations}
	\end{lemma}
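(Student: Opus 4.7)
The plan is a direct verification by propagating each $Z_j$ through the CNOT gates making up $U_{2i}$, using the standard conjugation identities
\begin{align*}
\text{CX}_t^c\, Z_c\, (\text{CX}_t^c)^\dagger &= Z_c, & \text{CX}_t^c\, Z_t\, (\text{CX}_t^c)^\dagger &= Z_c Z_t.
\end{align*}
In other words, under conjugation a Z on the control of a CNOT is unchanged, while a Z on the target picks up an extra Z on the control. Writing $U_{2i}=A_1 A_2\cdots A_{2i-2}A_{2i-1}$ with $A_k=\text{CX}_k^{2i-1}$ for $k=1,\ldots,2i-2$ and $A_{2i-1}=\text{CX}_{2i-1}^{2i}$, I will evaluate $U_{2i}Z_j U_{2i}^\dagger$ by conjugating $Z_j$ successively by $A_{2i-1},A_{2i-2},\ldots,A_1$ (innermost first).

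For case \eqref{slm:u2i_1}, where $j<2i-1$, the gate $A_{2i-1}$ acts on qubits $2i-1,2i$ and therefore commutes with $Z_j$. Each $A_k$ for $k\neq j$ (with $k\le 2i-2$) acts on qubits $k$ and $2i-1$, so it also commutes with $Z_j$. Only the gate $A_j$ has target $j$; conjugating $Z_j$ by it produces $Z_j Z_{2i-1}$ by the second identity above. The remaining outer gates $A_{j-1},\ldots,A_1$ all have control on qubit $2i-1$, so by the first identity above they fix the factor $Z_{2i-1}$, and they commute with $Z_j$. This yields $Z_jZ_{2i-1}$, as claimed.

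For case \eqref{slm:u2i_2}, where $j=2i-1$, the innermost gate $A_{2i-1}=\text{CX}_{2i-1}^{2i}$ has $2i-1$ as its target, so conjugation of $Z_{2i-1}$ produces $Z_{2i-1}Z_{2i}$. Every remaining gate $A_k$ ($k\le 2i-2$) has $2i-1$ as its control and does not touch qubit $2i$, so the expression is left unchanged. For case \eqref{slm:u2i_3}, where $j\ge 2i$, if $j=2i$ then $A_{2i-1}$ has $Z_{2i}$ on its control and fixes it, while all earlier gates $A_k$ ($k\le 2i-2$) act only on qubits $k$ and $2i-1$ and so commute with $Z_{2i}$; if $j>2i$ then no gate in $U_{2i}$ acts on qubit $j$ at all. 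In either situation the result is $Z_j$.

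The only subtlety — and the only thing one has to be careful about — is bookkeeping the order of conjugation and ensuring that once an auxiliary $Z_{2i-1}$ factor is created by $A_j$, none of the subsequent gates $A_{j-1},\ldots,A_1$ modify it; but this is exactly guaranteed by the fact that all of them have $2i-1$ as control. There is no significant obstacle; the lemma is a routine application of the CNOT propagation rules, and the analogous properties of $V_{2i+1}$ (used later) will follow by the same argument with the roles of the indices flipped.
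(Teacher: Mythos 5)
Your proof is correct and follows essentially the same route as the paper's: decompose $U_{2i}$ into its constituent CNOTs and conjugate $Z_j$ through them using the standard $Z$-on-control / $Z$-on-target propagation rules, handling the cases $j<2i-1$, $j=2i-1$, $j=2i$, and $j>2i$ separately. The only difference is presentational (the paper writes out the intermediate conjugations more explicitly and treats the cases in the opposite order), not substantive.
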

	\begin{proof}
		First we shall prove Eq.~\eqref{slm:u2i_3}. 
		Observe that $U_{2i}$ does not act on qubits $j$ if $j > 2i$, therefore $U_{2i}$ commutes with $Z_j$, i.e. $U_{2i}Z_j U_{2i}^\dagger = Z_j U_{2i}U_{2i}^\dagger= Z_j$. If $j=2i$, then 
		\begin{eqnarray*}
			U_{2i}Z_{2i} U_{2i}^\dagger &\overset{(a)}{=} &\prod_{k=1}^{2i-2}\text{CX}^{2i-1}_{k}\text{CX}^{2i}_{2i-1}Z_{2i} \text{CX}^{2i}_{2i-1} \prod_{k=1}^{2i-2}\text{CX}^{2i-1}_{k}\\
			&\overset{(b)}{=}& \prod_{k=1}^{2i-2}\text{CX}^{2i-1}_{k} Z_{2i} \prod_{k=1}^{2i-2}\text{CX}^{2i-1}_{k} \overset{(c)}{=} Z_{2i}.
		\end{eqnarray*}
		where $(a)$ follows from the Eq.~\eqref{eq:u2i}; $(b)$ from the fact that $Z_{2i}$ commutes with the CX gate when it acts on the control qubit and $(c)$ because $Z_{2i}$ commutes with $\text{CX}^{2i-1}_{k}$ for $1\leq k\leq 2i-2$. 
		
		When $j=2i-1$, 
		\begin{eqnarray*}
			U_{2i}Z_{2i-1} U_{2i}^\dagger &= &\prod_{k=1}^{2i-2}\text{CX}^{2i-1}_{k}\text{CX}^{2i}_{2i-1}Z_{2i-1} \text{CX}^{2i}_{2i-1}  \prod_{k=1}^{2i-2}\text{CX}^{2i-1}_{k}\\
			&=& \prod_{k=1}^{2i-2}\text{CX}^{2i-1}_{k} Z_{2i-1} Z_{2i} \prod_{k=1}^{2i-2}\text{CX}^{2i-1}_{k} = Z_{2i-1}Z_{2i}.
		\end{eqnarray*}
		
		Similarly, when $j< 2i-1$, we have the following relations which can be easily verified.
		\begin{eqnarray*}
			U_{2i}Z_{2i} U_{2i}^\dagger &\overset{(a)}{=}& 
			\prod_{k=1}^{2i-2}\text{CX}^{2i-1}_{k} Z_{j} \prod_{k=1}^{2i-2}\text{CX}^{2i-1}_{k} \\
			&\overset{(c)}{=}& \text{CX}^{2i-1}_{j} Z_{j}  \text{CX}^{2i-1}_{j} = Z_{j} Z_{2i-1}.
		\end{eqnarray*}
	\end{proof}
	\begin{lemma}[Properties of $V_{2i+1}$] 
		\label{lm:v2i}
		The following are some of the properties of $V_{2i+1}$,
		
		\begin{subequations}
			\begin{eqnarray}
				V_{2i+1} Z_{j} V_{2i+1}^{\dagger} &=& Z_{j} ; j \leq 2i+1 \label{slm:v2i_1}\\
				V_{2i+1} Z_{j} V_{2i+1}^{\dagger} &=& Z_{2i+1} Z_{2i+2}; j = 2i+2 \label{slm:v2i_2}\\
				V_{2i+1} Z_{j} V_{2i+1}^{\dagger} &=& Z_{j} Z_{2i+2} ; j > 2i+2\label{slm:v2i_3} 
				\end{eqnarray}
		\end{subequations}
		
	\end{lemma}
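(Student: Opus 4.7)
The plan is to mirror the proof of Lemma~\ref{lm:u2i}, since $V_{2i+1}$ is structurally the analogue of $U_{2i}$ with the roles of ``left'' and ``right'' qubits interchanged. I will proceed by case analysis on $j$, relying only on the two basic facts that a Pauli $Z$ on the control of a $\text{CX}$ gate commutes with that gate, while a Pauli $Z$ on the target is conjugated to $Z_c Z_t$. Conjugation by $V_{2i+1}$ is unpacked by applying the $\text{CX}$ factors from right to left in the order they appear in \eqref{eq:v2i}, so the very first gate to act on $Z_j$ is $\text{CX}_{2i+2}^{2i+1}$ and the remaining gates all have $2i+2$ as their control.

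First, for $j<2i+1$ the gate $V_{2i+1}$ acts only on qubits $\{2i+1,2i+2,\ldots,2\ell_f\}$, so $Z_j$ commutes with every factor and \eqref{slm:v2i_1} is immediate in this range; for $j=2i+1$, qubit $2i+1$ appears only as the control of $\text{CX}_{2i+2}^{2i+1}$, so $Z_{2i+1}$ commutes through that gate, and the remaining factors avoid qubit $2i+1$ altogether, finishing the case $j\leq 2i+1$.

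Next, for $j=2i+2$ the rightmost factor $\text{CX}_{2i+2}^{2i+1}$ is the only one with $2i+2$ as its target, and it converts $Z_{2i+2}$ into $Z_{2i+1}Z_{2i+2}$. Every subsequent factor has $2i+2$ as its control and does not touch qubit $2i+1$, so it commutes with $Z_{2i+1}Z_{2i+2}$, which gives \eqref{slm:v2i_2}. Finally, for $j>2i+2$ the factor $\text{CX}_{2i+2}^{2i+1}$ leaves $Z_j$ alone, and among the remaining gates exactly one, namely $\text{CX}_j^{2i+2}$, has $j$ as its target; it sends $Z_j$ to $Z_{2i+2}Z_j$. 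The other control-$2i+2$ gates commute with both $Z_{2i+2}$ (on their control) and $Z_j$ (on an untouched qubit), and they also pairwise commute with one another because $\text{CX}$ gates sharing a common control always commute, so the order within this block is immaterial. This establishes \eqref{slm:v2i_3}. The only real obstacle is bookkeeping — keeping straight which qubit is control versus target at each step — but once that is organized, the three identities drop out directly, just as in Lemma~\ref{lm:u2i}.
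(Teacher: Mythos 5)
Your proof is correct and takes exactly the approach the paper intends: the paper itself omits the argument, stating only that it parallels Lemma~\ref{lm:u2i}, and you have supplied precisely that parallel, tracking $Z_j$ through the $\text{CX}$ factors of $V_{2i+1}$ using the two conjugation rules ($Z$ on control commutes, $Z$ on target picks up a $Z$ on the control). The case split on $j$ is complete and each step is sound.
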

	\begin{proof}
		The proof is similar to that of Lemma~\ref{lm:u2i} so we omit the details for brevity.
	\end{proof}
	
	With this preparation we are ready to prove the 
	following theorem which gives  the local circuit to transform a color code into two copies of a surface code.

	\begin{figure}
			\centering
		\includegraphics[scale=0.8]{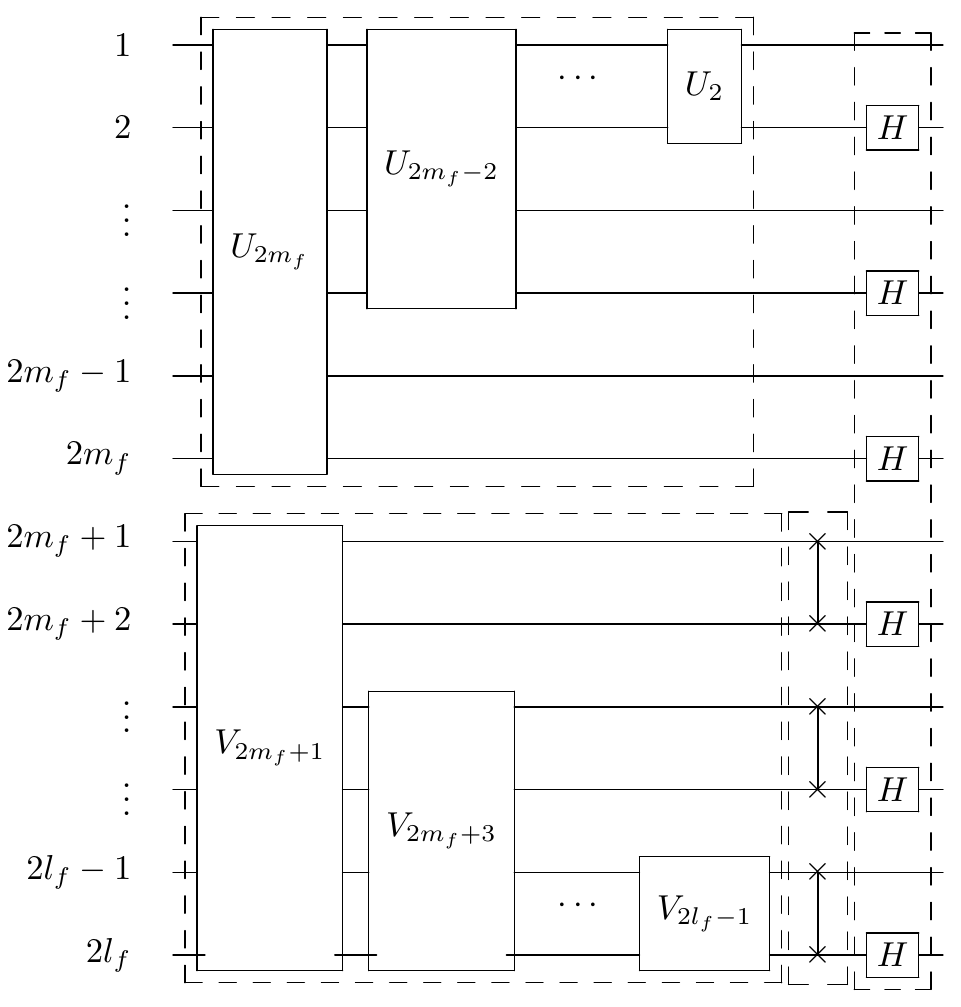}
		\caption{Local (Clifford) circuit to convert a color code into copies of two surface codes. Illustrated for a single 
			$c''$- face. Similar circuits are required for all $c''$-faces.}
		\label{fig:circuit_full}
	\end{figure}

	\begin{theorem}\label{th:ckt}
		The following local Clifford circuit when implemented on each $c''$-face of color code in transforms it into two copies of surface code: 
		\begin{eqnarray*}
			\prod_{j=1}^{\ell_f} H_{2j} \prod_{j=m_f+1}^{\ell_f} \text{SWAP}_{2j-1}^{2j} V_{2\ell_f-1} \cdots V_{2m_f+1} U_{2} \cdots U_{2m_f}.
		\end{eqnarray*}
		where $H_{i}$ is the single qubit Hadamard gate acting on qubit $i$ and $SWAP_a^b$ is a gate that swaps qubits $a$ and $b$, where the notation is as in Remark~\ref{rm:notation}.
	\end{theorem}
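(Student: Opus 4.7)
The plan is to show that the circuit realizes the map $\pi$ on each $c''$-face independently. Since each such face accounts for all $2\ell_f$ qubits and the faces partition the qubit set of $\Gamma$, it suffices to verify for a single face $f$ that the conjugation of every single-qubit Pauli on $\{1,\dots,2\ell_f\}$ by the circuit matches the closed-form expressions for $\pi(Z_i)$ and $\pi(X_i)$ given in Eqs.~\eqref{eq:zmap-1-a}--\eqref{eq:zmap-2-b} (together with the analogous $X$-formulas), after interpreting the odd-indexed output qubits as living on $\Gamma_1$ and even-indexed outputs on $\Gamma_2$. By Lemma~\ref{lm:stabilizers} and Lemma~\ref{lm:commutation}, this single check forces the stabilizers of the color code to be mapped to stabilizers of the two surface codes, completing the transformation.

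Concretely, write the circuit as $W = W_H W_S W_V W_U$, where $W_U = U_{2}U_{4}\cdots U_{2m_f}$, $W_V = V_{2\ell_f-1}\cdots V_{2m_f+1}$, $W_S$ is the product of SWAPs, and $W_H$ is the tensor of Hadamards on even-indexed qubits. First I would track $W_U Z_j W_U^\dagger$ for $j\le 2m_f$: by Lemma~\ref{lm:u2i} and the observation that the $U_{2i}$ with different $i$ commute on the supports relevant here, a direct induction gives exactly the ``left-half'' image $X_{2i}\prod_{k=i}^{m_f}Z_{2k-1}$ (up to the later Hadamard swapping $X_{2i}\leftrightarrow Z_{2i}$). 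Next I would track $W_V Z_j W_V^\dagger$ for $j>2m_f$ using Lemma~\ref{lm:v2i}; the SWAPs in $W_S$ then relabel the right-half qubits so that the indexing agrees with Eqs.~\eqref{eq:zmap-2-a}--\eqref{eq:zmap-2-b}. Finally $W_H$ turns the $Z_{2i}$ factors (which appear at the ``anchor'' position of each $Z$ image) into $X_{2i}$, matching $\pi(Z_i)$ exactly. The $X$-operator images are handled symmetrically: $W_H$ converts the initial $X_{2i}$ to $Z_{2i}$, which then propagates correctly through $W_U$ and $W_V$ because $X$ on a CNOT control spreads to the target, producing the $Z$-string component of $\pi(X_i)$ given in the explicit formulas after Algorithm~\ref{alg:tcc-projections}.

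The bookkeeping fact that makes the proof clean is the commutation relation noted after Eq.~\eqref{eq:v2i}: $U_j$ commutes with $V_k$ whenever $j<k$. This lets me evaluate the two halves independently without worrying about cross-interference, and also means I can apply the $U$- and $V$-identities one gate at a time. I would also check that the choice of dependent hopping edges $(1,2\ell_f)$ and $(2m_f,2m_f{+}1)$ in Remark~\ref{rm:notation} is consistent with where the $X$-string and $Z$-string ``terminate'' in $\pi(X_i)$ and $\pi(Z_i)$; this is what dictates the ordering of the $U$ block before the $V$ block and the placement of the SWAPs only on indices $j>m_f$.

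The main obstacle I expect is not any single computation but the careful alignment of four separate conventions at once: the labeling of odd/even output qubits with the two surface codes $\Gamma_1,\Gamma_2$, the role of the Hadamards in converting the $X_{2i}$ ``head'' of each image to the $Z$-form used while conjugating, the asymmetry introduced by $m_f$ between the $U$ and $V$ halves, and the SWAPs that relabel the second half. To avoid confusion I would verify the action on a representative triple $(Z_{2i-1},Z_{2i},X_{2i-1})$ for each of the four index regimes ($i\le m_f$ vs.\ $i>m_f$, odd vs.\ even), and then invoke the fact that Paulis generate $\mc{P}_{\mathsf{V}(\Gamma)}$ so that correctness on this generating set plus Clifford linearity yields the full equivalence with $\pi$.
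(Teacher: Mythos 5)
Your $Z$-conjugation plan matches the paper: split the circuit as $W_H W_S W_V W_U$, use the $U_j$--$V_k$ commutation to treat the two halves independently, invoke Lemmas~\ref{lm:u2i} and~\ref{lm:v2i}, relabel via the SWAPs, and finish with the Hadamards. That part is sound.

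There are two genuine problems with the rest. First, you propose to also verify the $X_i$ conjugations directly, but the paper's proof explicitly avoids this: because $W$ is Clifford and the $\{Z_i\}$ together with the $\{X_i\}$ form a symplectic basis of $\mc{P}_{\mathsf{V}(\Gamma)}$, once the images $W Z_i W^\dagger$ are pinned down as in Eqs.~\eqref{eq:zmap-1-a}--\eqref{eq:zmap-2-b}, the images $W X_i W^\dagger$ are forced up to phase by the commutation relations. Skipping this observation is not merely inefficient; it leaves you needing $X$-analogs of Lemmas~\ref{lm:u2i} and~\ref{lm:v2i}, which you have neither stated nor derived, so the ``symmetric'' half of your argument is unsupported.

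Second, your sketch of the $X$ computation has the gate order inverted. The circuit is $W = W_H W_S W_V W_U$, so $W X_j W^\dagger$ conjugates by $W_U$ \emph{first} and by $W_H$ \emph{last}; you cannot begin by ``$W_H$ converts the initial $X_{2i}$ to $Z_{2i}$'' and then ``propagate through $W_U$ and $W_V$''. As written, the order of operations in your $X$ half contradicts the circuit you defined, and following your own sketch would produce the wrong Pauli. Repair both issues by dropping the direct $X$ computation in favor of the commutation-relations argument; then your $Z$ bookkeeping (which is correct) completes the proof along the paper's lines.
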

	\begin{proof}
		We need to show that the proposed (local Clifford) circuit transforms $X_i$ and $Z_j$ for $1\leq i,j\leq 2\ell_f$ 
		as per Eqs.~\eqref{eq:zmap_rec_1}--\eqref{eq:xmap-rec06}.
		Recall that if $\{g_1, \ldots g_m, h_1,\ldots, h_m \} $ is a basis for the Pauli group on $m$ qubits such that  $g_ig_j= g_jg_i$, $h_ih_j=h_jh_i$, and $g_ih_j=(-1)^{\delta_{ij}} h_jg_i$, then specifying $g_i$ completely specifies $h_j$ up to phase. Therefore, it suffices to prove that the proposed  circuit realizes the transformation of $Z_i$ as per Eq.~\eqref{eq:zmap_rec_1}--\eqref{eq:zmap_rec_2}. 
		Then, $X_i$ are also transformed as required due to the transformation being Clifford and preserving the commutation relations.
		
		First, observe that $\overline{U} = \prod_{j=1}^{m_f} U_{2j}$ affects $Z_i$ only for $1\leq i \leq 2m_f $ and not for $i>2m_f$ since
		the latter have no support in the first $2m_f$ qubits where $\overline{U}$ acts. If $k=2m_f$, then $\overline{U} Z_{2m_f} \overline{U}^\dagger $ can be written as 
		\begin{eqnarray}
			\overline{U} Z_{2m_f} \overline{U}^\dagger &= &U_2 \cdots U_{2m_f-2} (U_{2m_f} Z_{2m_f} U_{2m_f}^\dagger) U_{2m_f-2}^\dagger \cdots U_2^\dagger \nonumber\\
			&\overset{(a)}{=}& U_2 \cdots U_{2m_f-2} Z_{2m_f} U_{2m_f-2}^\dagger \cdots U_2^\dagger \\
			&\overset{(b)}{=}& Z_{2m_f},
		\end{eqnarray}
		where $(a)$ and $(b)$ follow from repeated application of Eq.~\eqref{slm:u2i_3}. Similarly, for $k=2m_f-1$, we obtain 
		$\overline{U} Z_{2m_f} \overline{U}^\dagger =Z_{2m_f-1}Z_{2m_f}$. 
		In general, if $1\leq k\leq 2m_f $, then by Lemma~\ref{lm:u2i} we obtain 
		\begin{align}
			\overline{U} Z_{2i-1} \overline{U}^\dagger&= Z_{2i}\prod_{j=i}^{m_f} Z_{2j-1}\\
			\overline{U} Z_{2i} \overline{U}^\dagger&=Z_{2i} \prod_{j=i+1}^{m_f} Z_{2j-1} 
		\end{align}
		Similarly, $\overline{V} = \prod_{\ell_f}^{i=m_f+1}V_{2i-1}$ affects only $Z_i$ for $2m_f+1\leq i\leq 2\ell_f$
		for $m_f <i\leq \ell_f$. By Lemma~\ref{lm:v2i} we have 
		\begin{eqnarray}
			\overline{V} Z_{2i-1} \overline{V}^\dagger&= Z_{2i} \prod_{j=m_f+1}^{i-1} Z_{2j-1} \\
			\overline{V} Z_{2i} \overline{V}^\dagger&= Z_{2i} \prod_{j=m_f+1}^{i} Z_{2j-1}
		\end{eqnarray}
		
		Next, we apply the SWAP $ \bar{S} = \prod_{j=m_f+1}^{\ell_f} \text{SWAP}_{2j-1}^{2j}$. The \text{SWAP} gates only affect the qubits beyond $2m_f$. 
		\begin{eqnarray}
			Z_{2i}\prod_{j=m_f+1}^{i-1} Z_{2j-1}& \overset{\bar{S}}{\longrightarrow}& Z_{2i-1}\prod_{j=m_f+1}^{i-1} Z_{2j}\\
			Z_{2i}\prod_{j=m_f+1}^{i} Z_{2j-1}  &\overset{\bar{S}}{\longrightarrow}& Z_{2i-1}\prod_{j=m_f+1}^{i} Z_{2j} 
		\end{eqnarray}

		Finally, on applying $\overline{H}=\prod_{i=1}^{\ell_f}H_{2i}$ we obtain \begin{eqnarray}
			Z_{2i}\prod_{j=i}^{m_f} Z_{2j-1} &\overset{\overline{H}}{\longrightarrow}& X_{2i}\prod_{j=i}^{m_f} Z_{2j-1} =\pi(Z_{2i-1})\\
			Z_{2i}\prod_{j=i+1}^{m_f} Z_{2j-1}&\overset{\overline{H}}{\longrightarrow}& X_{2i} \prod_{j=i+1}^{m_f}
			Z_{2j-1} =\pi(Z_{2i})\\
			Z_{2i}\prod_{j=m_f+1}^{i-1} Z_{2j-1}& \overset{\overline{H}}{\longrightarrow}&  \begin{split} X_{2i} \prod_{j=m_f+1}^{i-1} Z_{2j-1}\\=\pi(Z_{2i-1})
			\end{split}\\
			Z_{2i}\prod_{j=m_f+1}^{i} Z_{2j-1} &\overset{\overline{H}}{\longrightarrow} & \begin{split} X_{2i} \prod_{j=m_f+1}^{i} Z_{2j-1} \\ =\pi(Z_{2i}) \end{split}
		\end{eqnarray}
	\end{proof}
	
	The complexity of $U_{2i}$ is $2i-1$ CNOT gates while that of $V_{2i+1}$ is $2\ell_f-2i-1$ CNOT gates. 
	Therefore, the complexity of the circuit in Theorem~\ref{th:ckt} is $m_f^2+(\ell_f-m_f)^2$ CNOT gates, $(\ell_f-m_f)$ 
	SWAP gates and $\ell_f$ H gates for a face with $2\ell_f$ qubits. If we assume that $\ell_f $ is $O(1)$ (i.e. it is a constant) and the number of $c''$-faces is $O(n)$, then the overall complexity of the map is $O(n)$ gates. In particular, for the color code on the square octagonal lattice, $\ell_f=4$ and if we choose $m_f=2$, 
	then the circuit to transform uses $n$ CNOT gates, $n/4$ SWAP gates and $n/2$ H gates. 
	
	We conclude with an example and provide the circuit for a color code on the square octagonal lattice Fig.~\ref{fig:crkt}. The associated $c''$-face of the color code is numbered as in Fig.~\ref{fig:num-qubit} and we assume $2m_f=4$.

	\begin{figure} 
		\centering
		\includegraphics[scale=1]{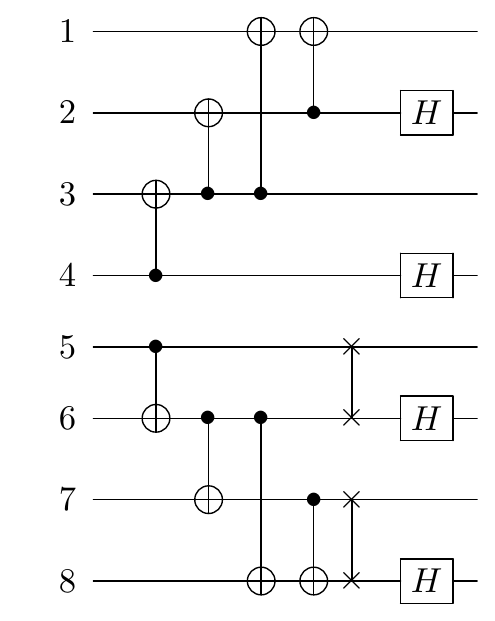}
		\caption{Circuit for single $c''$ face in square octogonal lattice (Fig.~\ref{fig:num-qubit}) with $2m_f=4$ }
		\label{fig:crkt}
	\end{figure}
	
	\section{Conclusion and Outlook }
	In this paper we have shown the local equivalence of the color code to copies of surface codes and applied this equivalence to the decoding of color codes over the bit flip channel and the quantum erasure channel.  Simulation results suggest that to fully exploit the benefits of this map for decoding over the bit flip and quantum erasure channel, we must consider soft decision decoders and/or correlations between the errors. 
Developing such decoders would be a fruitful  direction of further research.
	 Finally, we gave circuits with linear complexity for transforming between color codes and surface codes. 
	 Further optimizations of these circuits would help in reducing the overheads when switching between these codes. 
	 Another possible direction for further research would be to generalize these results to larger alphabet. 
	
    \medskip
    
    \noindent
{\em Acknowledgment.}
AB would like to thank Beni Yoshida and  H\'{e}ctor Bomb\'{i}n  for providing clarifications about their work. 

\appendices
\section{Mapping single qubit errors}
The following figures show the mapping of single qubit errors, see Fig.~\ref{fig:x-error-map},  and erasures, see Fig.~\ref{fig:erasure-map}, for the color code on the square octagon lattice to the surface codes on the square lattice. 

	\begin{figure} 
		\centering
\includegraphics[scale=0.43]{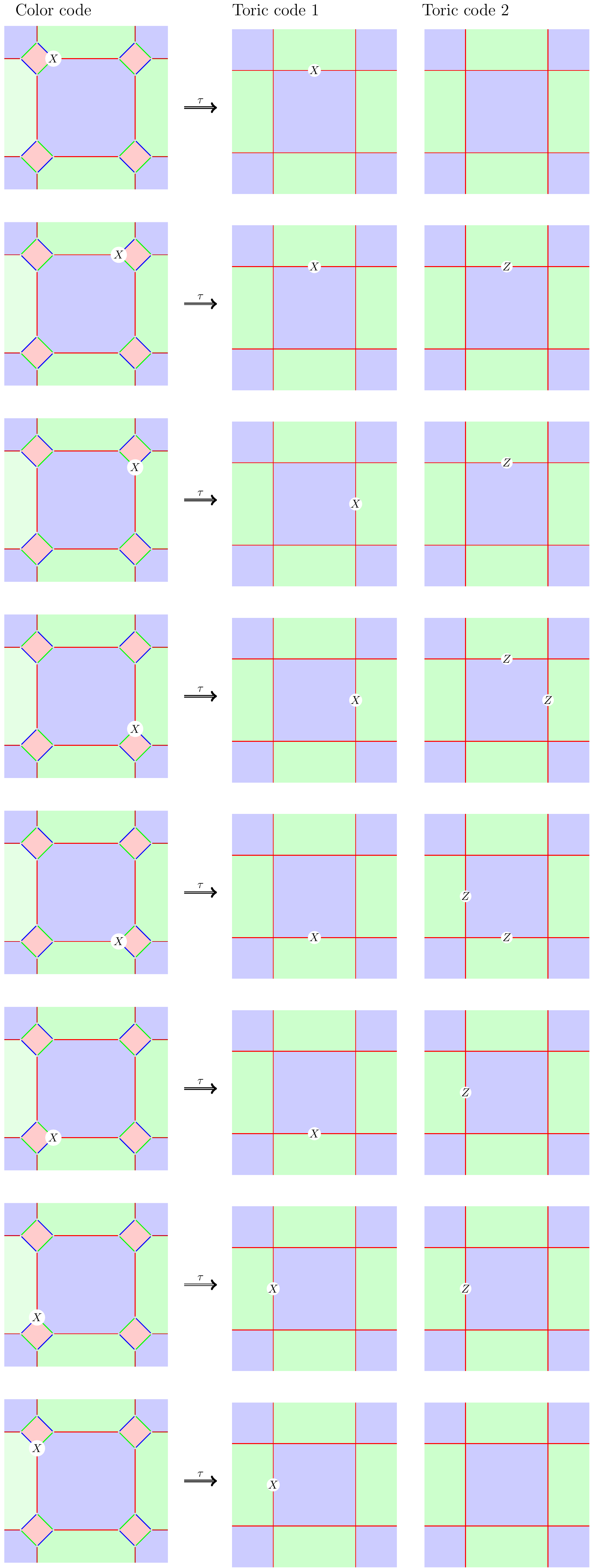}~\hspace{0.5in}~\includegraphics[scale=0.43]{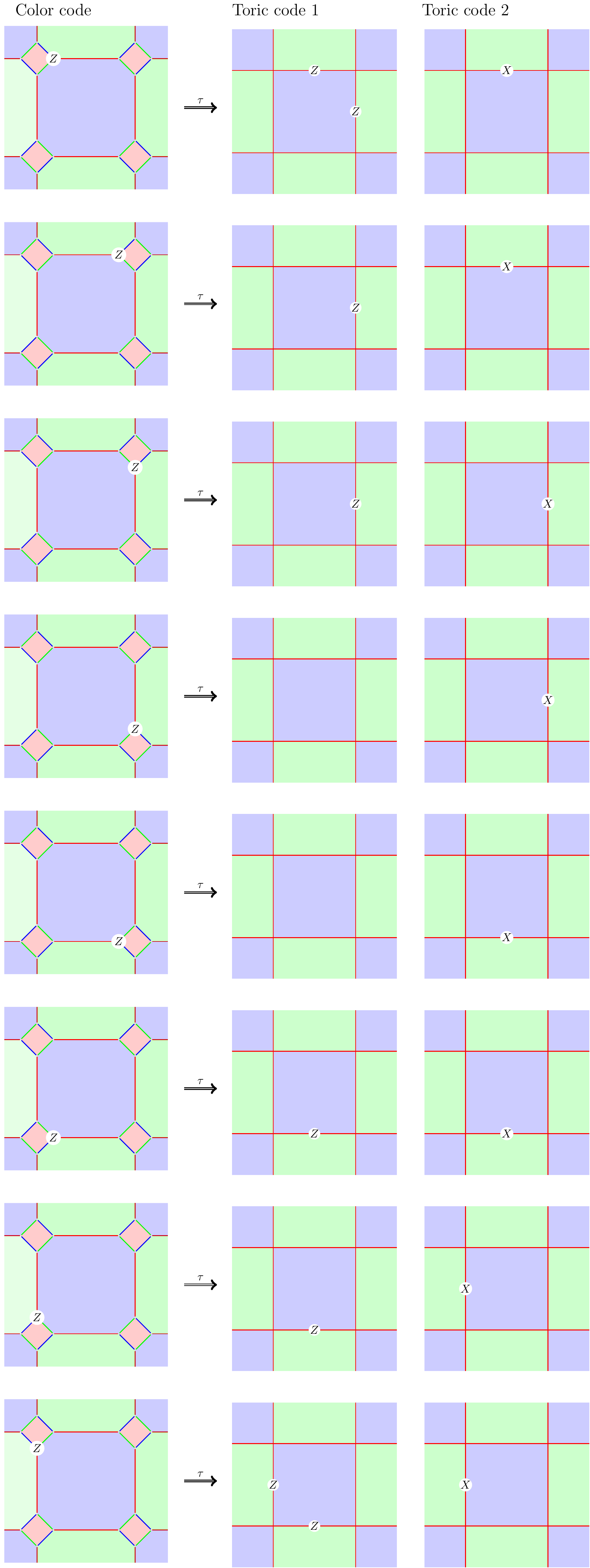}
		\caption{Mapping single qubit $X$ and $Z$ errors for the color code on the square octagon lattice.}\label{fig:x-error-map}
	\end{figure}

	\begin{figure} 
		\centering
\includegraphics[scale=0.45]{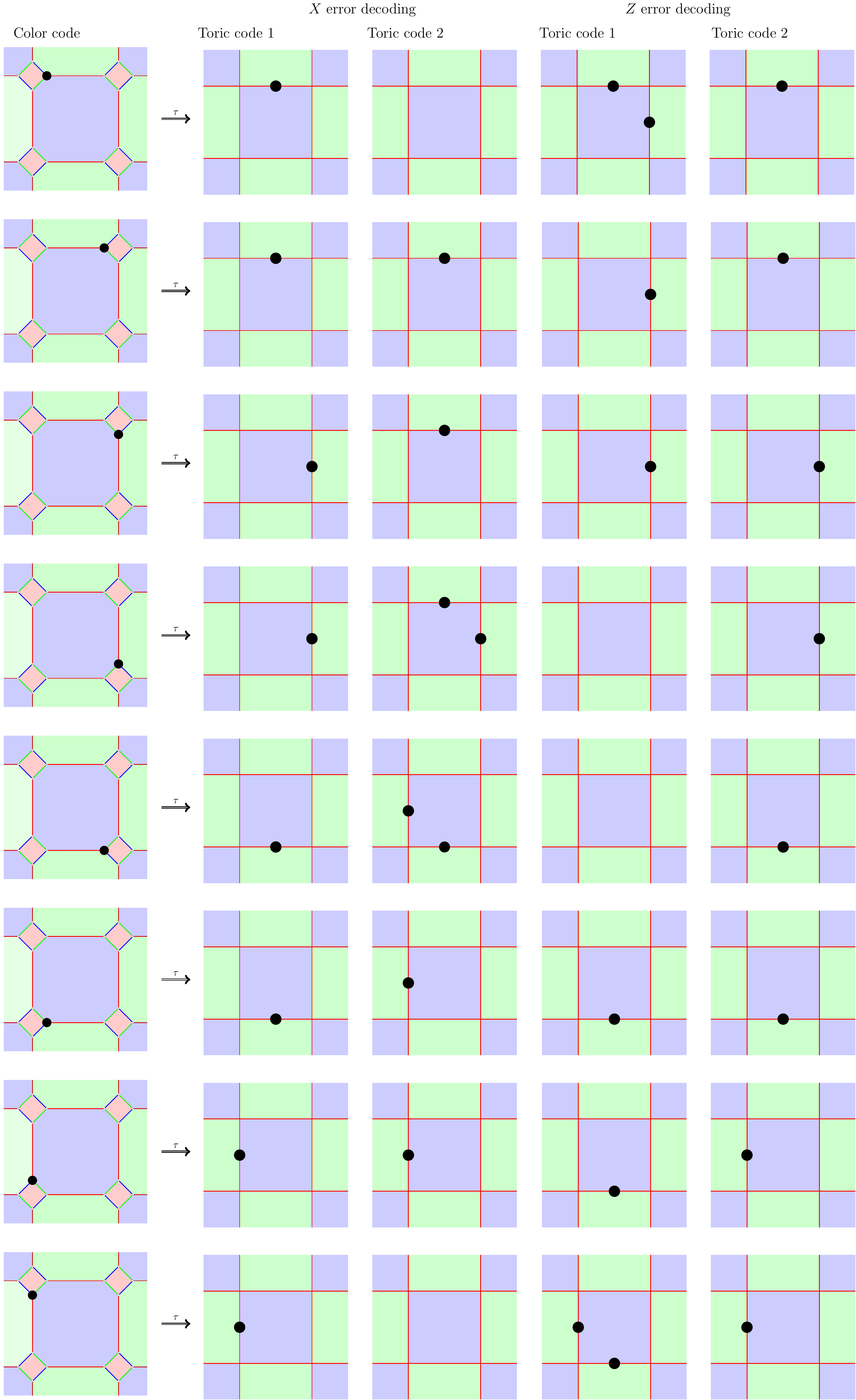}
		\caption{Mapping single qubit erasures for the color code on the square octagon lattice.}\label{fig:erasure-map}
	\end{figure}

\ifCLASSOPTIONcaptionsoff
  \newpage
\fi



%

%
\end{document}